\theoremstyle{plain}
\newtheorem{theorem}{Theorem}[section]
\newtheorem{corollary}[theorem]{Corollary}
\newtheorem{prop}[theorem]{Proposition}
\newtheorem{lemma}[theorem]{Lemma}
\newtheorem{definition}[theorem]{Definition}
\DeclareMathOperator{\Tr}{Tr}
\newcommand{\mc}{\mathcal}
\newcommand{\Riem}{\mathrm{Riem}}
\newcommand{\supp}{\mathrm{supp}}
\newcommand{\std}{\mathrm{std}}
\DeclareFontFamily{U}{mathx}{}
\DeclareFontShape{U}{mathx}{m}{n}{<-> mathx10}{}
\DeclareSymbolFont{mathx}{U}{mathx}{m}{n}
\DeclareMathAccent{\widehat}{0}{mathx}{"70}
\DeclareMathAccent{\widecheck}{0}{mathx}{"71}
\newcommand{\paula}[1]{{\color{blue}[PB: #1]}}
\title{Quantum $f$-divergences and Their Local Behaviour: \\An Analysis via Relative Expansion Coefficients}
\begin{document}


\author{Shreyas Iyer\thanks{Corresponding author: s9natarajan@uwaterloo.ca}}
\author{Peixue Wu}
\author{Paula Belzig}
\author{Graeme Smith}

\affil{Department of Applied Mathematics, University of Waterloo, Waterloo, Ontario, Canada \\ Institute for Quantum Computing, University of Waterloo, Waterloo, Ontario, Canada}





\maketitle

\DeclareRobustCommand\Compactcdots{\mathinner{\cdotp\mkern-1mu\cdotp\mkern-1mu\cdotp}}

\begin{abstract}

Any reasonable measure of distinguishability of quantum states must satisfy a data-processing inequality, that is, it must not increase under the action of a quantum channel. We can ask about the proportion of information lost or preserved and this leads us to study contraction and expansion coefficients respectively, which can be combined into a single \emph{relative expansion coefficient}. 
We focus on two prominent families: (i) standard quantum $f$-divergences and (ii) their local (second-order) behaviour, which induces a monotone Riemannian semi-norm (that is linked to the $\chi^2$–divergence). Building on prior work, we identify new families of $f$ for which the global ($f$-divergence) and local (Riemannian) relative expansion coefficients coincide for every pair of channels, and we clarify how exceptional such exact coincidences are. Beyond equality, we introduce an \emph{equivalence} framework that transfers qualitative properties such as strict positivity uniformly across different relative expansion coefficients.

Leveraging the link between equality in the data processing inequality (DPI) and channel reversibility, we apply our framework of relative expansion coefficients to approximate recoverability of quantum information. Using our relative expansion results for primitive channels, we prove a reverse quantum Markov convergence theorem, converting positive expansion coefficients into quantitative lower bounds on the convergence rate.

\end{abstract}

\tableofcontents
\section{Introduction}

\noindent 

This work offers a detailed comparison of distinguishability measures of quantum states that are closely related to the quantum relative entropy. Namely, we focus on standard $f$-divergences and quantum $\chi^2$-divergences (the latter we often refer to via Riemannian semi-norms). It is of central importance in quantum information theory to analyse various ways of quantifying the difference between two quantum states, because they can be used to examine the reliability of protocols, such as the transmission of information across a quantum channel. Other examples include the fidelity \cite{jozsa1994fidelity,fuchs2002cryptographic}, trace distance \cite{holevo1973statistical,helstrom1969quantum,audenaert2007discriminating}, and quantum Rényi divergences \cite{nussbaum2009chernoff,nussbaum2011asymptotic,muller2013quantum}, which each have their own specific use cases. The quantum relative entropy not only expresses the Type-II error exponent of state discrimination in the quantum Stein's lemma \cite{hiai1991proper,ogawa2005strong}, but it is also used to capture the typical behaviour for compression in channel coding theorems \cite{preskill1998lecture,Wildebook}. Specifically, it defines the coherent information and mutual information seen throughout quantum Shannon theory \cite{schumacher1996quantum,lloyd1997capacity}. As a consequence of its operational meaning, the quantum relative entropy serves as a guide for the properties one should expect from a good distinguishability measure. 

Classical and quantum $f$-divergences are families of nicely-behaving distinguishability measures, inspired from the relative entropy \cite{csiszar1963informationstheoretische,Petz84,Petz86}. We will consider them in the finite-dimensional setting. Classical $f$-divergences are a family of functionals \[D_f^\mathrm{cl}:\mathbb{R}^n\times\mathbb{R}^n\to [0,\infty)\] that measure the distinguishability of two probability vectors $P,Q\in\mathbb{R}^n$, where $f:(0,\infty)\to\mathbb{R}$ is a convex function, $f(1)=0,f''(1)>0$. They were used to generalise the Shannon-type quantities defined using the classical relative entropy (KL-divergence), corresponding to $f(x)=x\log x$, and have had great success in classical statistics \cite{csiszar1963informationstheoretische,csiszar1967topological,liese1987convex,liese2006divergences,SV16}. Quantum $f$-divergences, on the other hand, are classes of functionals 
\[D_f:\mathcal{B}(\mathcal{H})_+\times\mathcal{B}(\mathcal{H})_+\to [0,\infty),\] such that $D_f(\rho\|\gamma)$ reduces to a classical $f$-divergence $D_f^\mathrm{cl}$ over commuting states $\rho,\gamma$ \cite{HM17}. 
Quantum $f$-divergences often share many desirable properties with the quantum relative entropy, for example: positivity, continuity over all pairs of states when finite, and the data-processing inequality under completely-positive trace-preserving (CPTP) maps $\mathcal{N}$ \cite{schumacher2000relative,lindblad1975,Hiai_2011,HM17,HT24}. There are many variants of quantum $f$-divergences, and this includes the standard $f$-divergences ($D_f^\mathrm{std}$) \cite{Petz86,OP93,Hiai_2011} and quantum $\chi^2$-divergences ($\chi^2_\kappa$) \cite{GT25} that this work is focused on. Other examples have also attracted a lot of attention over the years, such as the measured ($D_f^\mathrm{meas}$), maximal ($\widehat D_f$) \cite{PR98,Matsumoto15,HM17}, and Hirche-Tomamichel quantum $f$-divergences ($\widetilde{D}_f$) \cite{HT24,H2024,beigi2025,liu2025}. Further, we can reinterpret Petz-Rényi divergences $D_\alpha$ and sandwiched Rényi divergences $\widetilde{D}_\alpha$ through these quantum $f$-divergences (see, for example \cite{HT24}). 

The data-processing inequality is at core of our discussion. It states that
\[
D_f(\mathcal{N}(\rho)\|\mathcal{N}(\gamma))\leq D_f(\rho\|\gamma)\text{ for all states }\rho,\gamma, \text{and CPTP maps }\mathcal{N}:\mathcal{B}(\mathcal{H}_A)\to\mathcal{B}(\mathcal{H}_B).
\]

This is a vital property of any valid distinguishability measure as it ensures consistency with the inherent loss of information induced by a noisy quantum channel; in this case, the states become harder to distinguish. This allows us to study the proportion of information preserved by a quantum channel with respect to any of these distinguishability measures, via the following contraction and expansion coefficients generalised/modified from prior works \cite{LR99,HR15,HRS22,GZB24,RISB21,BGSW24}:
\[
\eta_f(\mathcal{N};\mathcal{C}):= \sup_{\substack{\rho\neq \gamma \in \mc C}} \frac{D_f(\mc N(\rho) \|\mc N(\gamma))}{D_f(\mc \rho \|\mc \gamma)},\ \quad
\widecheck\eta_f(\mathcal{N};\mathcal{C}):= \inf_{\substack{\rho\neq \gamma \in \mc C}} \frac{D_f(\mc N(\rho) \|\mc N(\gamma))}{D_f(\mc \rho \|\mc \gamma)},
\]
where $\mc C\subseteq\mathcal{D}(\mathcal{H}_A)$ is a set of density operators, and $\mathcal{N}$ is a quantum channel/CPTP map with domain $\mathcal{B}(\mathcal{H}_A)$. The titular relative expansion coefficients $\widecheck\eta_f(\mathcal{N},\mathcal{M})$ compare the actions of two quantum channels $\mathcal{N},\mathcal{M}$, and effectively merge the above two coefficients into a single definition, at least when $\mathcal{C}$ is the image of another quantum channel, allowing us to study them simultaneously.

Contraction and expansion coefficients have been considered by many prior works, again under less generality. Contraction coefficients in the setting $\mathcal{C}=\mathcal{D}(\mathcal{H}_A)$ capture the proportion of distinguishability that is lost under $\mathcal{N}$ over all pairs of full-rank quantum states. They were extensively studied in both classical and quantum contexts to compare different distinguishability measures \cite{Choi1994,Cohen1993,LR99,Temme_2010,HR15,HM17,HT24,HRS22}. After all, the variation in these unrestricted contraction coefficients does not appear in the classical setting, which means that any differences we find are a consequence of how these distinguishability measures behave on non-commuting pairs of quantum states. 

One of the first findings in this paper is that if $\mc C=\mathcal{D}(\mc H)$, i.e. we optimise over all states, then for a large class of channels, no proportion of information is preserved with respect to any monotone quantum $f$-divergence (i.e. a quantum $f$-divergence satisfying the data-processing inequality):
\[
\widecheck\eta_f(\mathcal{N};\mathcal{D}(\mc H))=0\quad\text{for all non-unitary channels }\mc N:\mathcal{B}(\mathcal{H}_A)\to\mathcal{B}(\mathcal{H}_B),\;d_A\geq d_B.
\]
This resolves an open problem by \cite{BGSW24} (which proved this result for the quantum relative entropy) about whether this is common to all monotone quantum $f$-divergences, under the reasonable --- though sometimes unnecessary --- condition that $f$ is operator convex. We infer from this that for such non-unitary channels, it only makes sense to consider expansion coefficients over a strict subset $\mc C\subset\mathcal{D}(\mathcal{H}_A)$. 

We are largely interested in the specific class of standard (quantum) $f$-divergences \cite{OP93}, parametrised by functions $f \equiv f(x)$ from the subset $\mathcal{F}$ of operator convex functions, defined for density operators $\rho,\gamma$ with full rank (or equal support) by:
\begin{equation*}
    D_f(\rho\|\gamma):=\langle\gamma^{1/2},f(\Delta_{\rho,\gamma})(\gamma^{1/2})\rangle_\mathrm{HS}
\end{equation*}
where $\Delta_{\rho,\gamma}:=L_\rho R_\gamma^{-1}$ is the superoperator that left-multiplies by $\rho$ and right-multiplies by the generalised inverse $\gamma$. Note that the choice $f(x)=x\log x$ corresponds to the quantum relative entropy. This class is interesting because the standard $f$-divergences demonstrate different local behaviours. This is in contrast to the maximal and Hirche-Tomamichel $f$-divergences, each of which have uniform local behaviour over all choices of $f$. In fact, the local second order behaviour of a standard $f$-divergence $D_f^\mathrm{std}(\rho\|\gamma)$ is the $\chi^2$-divergence, $\chi_{\kappa_f}^2(\rho\|\gamma)$, for some $\kappa_f$ in another subset of operator convex, decreasing functions $\mathcal{K}$. This work is partially motivated from understanding the connection between the standard $f$-divergences and their local behaviour. We denote their respective divergence and Riemannian contraction and expansion coefficients by $\eta_f^\mathrm{std}(\mathcal{N};\mathcal{C}),\widecheck \eta_f^\mathrm{std}(\mathcal{N};\mathcal{C})$ and $\eta_\kappa^\mathrm{Riem}(\mathcal{N};\mathcal{C}),\widecheck \eta_\kappa^\mathrm{Riem}(\mathcal{N};\mathcal{C})$. Conditions of the form $\eta_f^\mathrm{std}(\mathcal{N};\mathcal{C})\equiv\eta_{\kappa_f}^\mathrm{Riem}(\mathcal{N};\mathcal{C})$, $\widecheck \eta_f^\mathrm{std}(\mathcal{N};\mathcal{C})\equiv\widecheck \eta_{\kappa_f}^\mathrm{Riem}(\mathcal{N};\mathcal{C})$ for all channels $\mathcal{N}$ and open subsets $\mathcal{C}$ imply that the best- and worst- case preserved proportions of information with respect to $D_f^\mathrm{std}$ are totally determined by its local behaviour. It is surprising that such cases exist; some valid choices for $f$ were provided in \cite{HR15}, but it was noted that this relies on a rather special relationship studied in \cite{LR99, AE11,HR15,GR22,WBCDT24} between $D_f^\mathrm{std}(\rho\|\gamma)$ and $\chi_{\kappa_f}^2(\rho\|\gamma)$, and it was left open whether $f(x)=x\log x$ is also a valid choice. This open problem has been resolved in \cite{GR22,BGSW24}, but still only a finite number of choices for $f$ for which equality holds were identified. In this work, we extended these results to two infinite families of valid $f$ for which this equality of coefficients generically holds. Generic equality, even under restricted classes of channels or other classes of quantum $f$-divergences appears to rely on the existence of an integral relationship between the quantum $f$-divergence and their induced Riemannian semi-norm. 

Furthermore, by introducing a notion of \emph{equivalence} between distinguishability measures via their relative expansion coefficients, we establish equivalence classes in which different relative expansion coefficients (hence also the contraction/expansion coefficients) can always be used to upper and lower bound each other's values, despite not usually having a generic equality between coefficients. 

Conversely, we have identified an example where the unrestricted-domain contraction coefficients can be separated by arbitrarily many orders of magnitude, which is of interest when trying to bound the convergence rate of a quantum Markov chain. Previous work has  largely concentrated on the \emph{strong data processing inequality} (SDPI) for a quantum channel $\mathcal{N}$. Concretely, one asks whether there exists a contraction coefficient $\eta<1$ such that 
\[
    D_f^\mathrm{std}(\mathcal{N}(\rho)\,\|\,\mathcal{N}(\gamma)) \leq \eta \, D_f^\mathrm{std}(\rho\|\gamma),
    \qquad \forall \rho\neq \gamma,
\]
for a standard $f$-divergence $D_f^\mathrm{std}$. An analogous question can be asked for the quantum $\chi^2$-divergence induced by $D_f^\mathrm{std}$. Contraction coefficents have been particularly useful for bounding the mixing times of quantum Markov chains, especially in the case of primitive channels (i.e., those with a unique full-rank fixed point) \cite{Choi1994, Cohen1993, LR99, GZB24, GT25, HR15, HT24, Temme_2010}. When the channel admits a unique fixed point, the SDPI (with the reference state given by this fixed point) is closely connected to the modified logarithmic Sobolev inequality (MLSI) \cite{GJLL22,LS23,araiza2025transportationcostcontractioncoefficient,GR22}, which often yields sharper mixing time estimates than spectral gap methods. Similarly, one can investigate a form of reverse data processing inequality (RDPI) for a quantum channel $\mathcal{N}$: does there exist an expansion coefficient $\widecheck\eta>0$ and a quantum channel $\mathcal{M}$ such that
\begin{equation*}
    D_f^\mathrm{std}(\mathcal{N}(\rho)\,\|\,\mathcal{N}(\gamma)) \;\geq\; \widecheck\eta \, D_f^\mathrm{std}(\rho\|\gamma),
    \ \forall \rho\neq \gamma \in \mathrm{Im}\,\mathcal{M}\,? 
\end{equation*}
We will see later that quantum Markov chains based on primitive channels again provide a natural application of expansion coefficients as a lower bound on the convergence rate (see Section~\ref{subsec:primitive}). 
\color{black}

In addition, we propose another fresh perspective on expansion coefficients in terms of recoverability. We can reformulate a fundamental theorem that links channel reversibility to standard $f$-divergences, \cite[Theorem 7.1]{Hiai_2011}, in the language of expansion coefficients; it states that the following condition holds
\[
    \widecheck\eta_f(\mathcal{N};\mathcal{C})=1\text{ for a closed, convex subset } \mathcal{C}\subset\mathcal{D}(\mathcal{H}_A)\text{ for all }f\in\mathcal{F}
\]
iff all of the states in $\mathcal{C}$ can be perfectly recovered by a Petz recovery map. There has also been significant progress towards a similar result for $\chi^2$-divergences \cite{gao2023sufficient}. Hence, we sought to develop this connection between expansion coefficients and recoverability. There is another remarkable result, by \cite{junge2018universal}, that has important implications to approximate quantum error correction. It states that the output of a quantum channel can be restored to the initial state at an accuracy depending on the decrease in its quantum relative entropy with a reference state. We make an observation from this that showing positive expansion coefficients $\widecheck\eta_{x\log x}(\mathcal{N};\mathcal{D})$ for a quantum channel $\mathcal{N}$ amounts to upper bounding the accuracy of the universal recovery map considered in the latter result. A method developed by \cite{gao2023sufficient} analogously connects the extent of recoverability to the decrease in the $\chi^2$-divergence with a reference state, and so positive Riemannian expansion coefficients find a similar application to recovery bounds; this is exciting because it is often easier to demonstrate that Riemannian expansion coefficients are strictly positive (see Section~\ref{sec:explicitcoeffs}).

The remainder of the manuscript is organized as follows. 
\begin{itemize}
    \item Section~\ref{sec:prelims} reviews quantum $f$-divergences and operator convex functions, and introduces \emph{relative expansion coefficients}, with particular attention to standard $f$-divergences and their induced Riemannian semi-norms. 
    \item Section~\ref{sec:nordpi} establishes our first global obstruction: for any quantum $f$-divergence, the expansion coefficient over all states vanishes whenever the channel’s output dimension is no larger than its input dimension. 
    \item In Section~\ref{sec:equalityofcoeffs} we show that, in general, the (standard $f$-divergence) relative expansion is not determined by its induced Riemannian counterpart; nevertheless, we identify integral relationships between $f$-divergences and their Riemannian geometry that do guarantee generic equality of the corresponding relative expansion coefficients. 
    \item Section~\ref{sec:equivalence} then relaxes equality to an \emph{equivalence} notion between coefficients, which both propagates known cases to new ones and cleanly separates the bounded versus unbounded regimes for Riemannian relative expansion.
    \item In Section~\ref{sec:applications}, we discuss how demonstrations of positive expansion coefficients can be incorporated into approximate recovery bounds, which strengthens their appeal as a measure of preserved information. Further, primitive quantum channels also have a positive relative expansion coefficient, that can be used to lower bound the convergence rate. 
\end{itemize}

\section{Preliminaries}\label{sec:prelims}
\subsection{Notation}
We will consider finite dimensional Hilbert spaces, typically $\mathcal{H}_A,\mathcal{H}_B$, with respective dimensions $d_A:=\dim\mathcal{H}_A,d_B:=\dim\mathcal{H}_B$. $\mathcal{B}(\mathcal{H}_A,\mathcal{H}_B)$ denotes the Hilbert space of bounded linear operators from $\mathcal{H}_A$ to $\mathcal{H}_B$, equipped with the Hilbert-Schmidt inner product: 
\begin{equation*}
    \langle X, Y \rangle_\mathrm{HS} := \Tr(X^* Y)
\end{equation*}
for operators $X,Y\in\mathcal{B}(\mathcal{H}_A,\mathcal{H}_B)$, where $(\cdot)^*$ is the adjoint. We denote the image space of an operator $A\in\mathcal{B}(\mathcal{H}_A,\mathcal{H}_B)$ by $\mathrm{Im}\:A$, which has dimension $\mathrm{rk}(A):=\dim(\mathrm{Im}\:A)$, and the kernel of $A$ by $\mathrm{ker}\:A$, whose orthogonal space (i.e. $A$'s \textit{support}) is $\mathrm{supp\:A}:=(\mathrm{ker}\:A)^\perp$. For a Hilbert space $\mathcal{H}$, $\mathcal{B}(\mathcal{H},\mathcal{H})\equiv\mathcal{B}(\mathcal{H})$. The real subspace of self-adjoint linear operators in $\mathcal{B}(\mathcal{H})$ is denoted $\mathcal{B}(\mathcal{H})_\mathrm{sa}$ and the subspace of positive (semi-definite) linear operators is denoted $\mathcal{B}(\mathcal{H})_+$. The set of density operators on a Hilbert space $\mathcal{H}$ is denoted as $\mathcal{D}(\mathcal{H})$, and the subset of positive definite density operators is denoted $\mathcal{D}^+(\mathcal{H})$. When $\mathrm{dim}\;\mathcal{H}=d$, we use $\mc D_d$ and $\mc D_d^+$, respectively. Given a state $\rho\in\mathcal{D}(\mathcal{H})$, we define $T_\rho\mathcal{D}(\mathcal{H}):=\{X\in\mathcal{B}(\mathcal{H}):\ \Tr X=0,\ X=X^*,\ \mathrm{supp}\:X\leq\mathrm{supp}\:\rho\}$ as an appropriate set of \textit{tangent vectors} at $\rho$. We will often use $\mathcal{N},\mathcal{M},\Phi$ to denote quantum channels, i.e. completely positive trace-preserving (CPTP) maps, $\mathcal{B}(\mathcal{H}_A)\to\mathcal{B}(\mathcal{H}_B)$ for Hilbert spaces $\mathcal{H}_A,\mathcal{H}_B$. $\widehat\Phi$ will denote the adjoint map $\mathcal{B}(\mathcal{H}_B)\to\mathcal{B}(\mathcal{H}_A)$, which is positive and unital. We will sometimes also consider classical channels, which are linear maps (or rather, maps that have a well-defined linear extension), $\Phi:P_{d_A}\to P_{d_B}$, where $P_d:=\{x\in \mathbb{R}^d:x_i\geq0\:\forall i,\:\sum_i x_i=1\}$ is the set of $d$-dimensional probability vectors. 

For any $A\in\mathcal{B}(\mathcal{H})_\mathrm{sa}$, $A^{-1}$ will mean the \textit{generalised inverse} of $A$, where only the non-zero eigenvalues of $A$ are inverted. We will often consider the spectral decomposition of a density operator $\rho\in\mathcal{D}(\mathcal{H})$, which we write as $\rho=\sum_{a\in\mathrm{spec}\:\rho}aP_a$, where $\mathrm{spec}\:\rho$ is the spectrum of $\rho$ and the $P_a$ are projections onto the $a$-eigenspace of $\rho$. The (super-)operators $L_A,R_A\in\mathcal{B}(\mathcal{B}(\mathcal{H}))$ are defined as $L_A(X):=AX,R_A(X):=XA$, for arbitrary $A,B\in\mathcal{B}(\mathcal{H})_\mathrm{sa}$. Note that, e.g. $L_A^{-1}(X):=A^{-1}X$. 

\subsection{An Introduction to $f$-divergences}\label{sec:introfdivs}
The classical relative entropy (i.e. the \textit{Kullback-Leibler divergence}) is a measure of distinguishability of two probability distributions that can be used to define the entropic quantities in classical Shannon theory.
The classical $f$-divergences \cite{AliSilvey1966,csiszar1963informationstheoretische,csiszar1967information,csiszar1967topological,morimoto1963markov}, parametrised by convex functions $f:(0,\infty)\to\mathbb{R}$ s.t. $f(1)=0,f''(1)>0$, generalise this relative entropy and are used to define information-theoretic quantities in a similar way. The classical $f$-divergences share a lot of convenient properties with the classical relative entropy.  discrete probability distributions over a finite set $\mathcal{Y}$, $P(y),Q(y),y\in\mathcal{Y}$ with full support ($\mathcal{Y}$ a finite set), the classical $f$-divergence is:
\begin{equation*}
    D_f^\mathrm{cl}(P\|Q):=\sum_{y\in\mathcal{Y}} Q(y)f\left(\frac{P(y)}{Q(y)}\right).
\end{equation*}
The classical relative entropy corresponds to the special case where $f(x)=x\log x$. 

In the quantum setting, in order to distinguish two quantum states $\rho,\gamma\in\mathcal{D}_d$ rather than probability distributions, the classical $f$-divergences are extended to \emph{quantum $f$-divergences}. Quantum $f$-divergences, $D_f(\rho\|\gamma)$, are distinguishability measures on quantum states that reduce to $D_f^\mathrm{cl}(P\|Q)$ if $\rho,\gamma$ commute and have respective eigenvalues $P(y),Q(y),\ y\in\{1,...,d\}$ on their joint eigenbasis. Many classes of quantum $f$-divergences have been studied for their interesting properties and operational interpretations~\cite{HM17}, e.g., measured, maximal \cite{PR98,Matsumoto15,HM17}, Hirche-Tomamichel \cite{HT24,H2024,beigi2025,liu2025},  standard \cite{Petz86,OP93,Hiai_2011} and $\chi^2$ \cite{GT25} quantum $f$-divergences. We will mostly be interested in the standard (Petz) $f$-divergence. For these quantum $f$-divergences to be considered as valid distinguishability measures, we want them to satisfy monotonicity/the data-processing inequality, i.e. for any quantum channel $\mathcal{N}:\mathcal{B}(\mathcal{H}_A)\to\mathcal{B}(\mathcal{H}_B)$, the $f$-divergence of any two states cannot increase:
\begin{equation*}
    D_f(\mathcal{N}(\rho)\|\mathcal{N}(\gamma))\leq D_f(\rho\|\gamma)\quad\forall\rho,\gamma\in\mathcal{D}(\mathcal{H}_A)
\end{equation*}
To ensure this, we will have to consider a restricted class of functions fulfilling \cite{Hiai_2011,HM17} 
\[
\mathcal{F}:=\{f:(0,\infty)\to\mathbb{R}, f\text{ operator convex},f(1)=0,f''(1)>0\}.
\]

The maximal and standard quantum $f$-divergences arise from alternative notions of Radon-Nikodym derivative in the non-commutative setting; to replace the random variable taking values $\frac{P(x)}{Q(x)}$ with outcome probabilities $P(x)$, which is the unique Radon-Nikodym derivative in the commutative setting, one can introduce different notions of $\rho/\gamma$ \cite{LR99}.


The maximal $f$-divergence has the following expression for states $\rho,\gamma\in\mathcal{D}(\mathcal{H})$ with equal support:
\begin{equation}\label{eqn:maxfdiv}
    \widehat D_f(\rho\|\gamma):=\langle\gamma^{1/2},f(\widehat\Delta_{\rho,\gamma})(\gamma^{1/2})\rangle_\mathrm{HS}\equiv \Tr\gamma f(\gamma^{-1/2}\rho\gamma^{-1/2})
\end{equation}
where $\widehat\Delta_{\rho,\gamma}:=R_{\gamma^{-1/2}\rho\gamma^{-1/2}}$ is the commutant Radon-Nikodym derivative.

Whereas the standard $f$-divergence is defined as:
\begin{equation}\label{eqn:stdfdiv}
     D_f^\mathrm{std}(\rho\|\gamma):=\langle\gamma^{1/2},f(\Delta_{\rho,\gamma})(\gamma^{1/2})\rangle_\mathrm{HS}
\end{equation}
where $\Delta_{\rho,\gamma}:=L_\rho R_\gamma^{-1}$ is the \emph{relative modular operator} (another Radon-Nikodym derivative).
In this case, $f(x)=x\log x$ corresponds to the relative entropy, $D(\rho\|\gamma)\equiv D_{x\log x}^\mathrm{std}(\rho\|\gamma)= \Tr\rho (\log\rho-\log\gamma)$. 


The maximal $f$-divergence has some significance as the largest-valued of all monotone quantum $f$-divergences $D_f$. i.e. for any $f\in\mathcal{F}$ and $\rho,\gamma\in\mathcal{D}(\mathcal{H})$ with equal support \cite{HM17,Matsumoto15}:
\begin{equation}\label{eqn:maxmax}
    D_f(\rho\|\gamma)\leq\widehat D_f(\rho\|\gamma)
\end{equation} 

Finally, quantum $\chi^2$-divergences are monotone quantum $f$-divergences that reduce to the classical $D_{(x-1)^2}^\mathrm{cl}$, and are parametrised by a function $\kappa:(0,\infty)\to(0,\infty)$ belonging to a subset $\mathcal{K}$ of operator monotone functions (see later); they have the following expression:

\begin{equation*}
    \chi_\kappa^2(\rho\|\gamma):=\|\rho-\gamma\|_{\kappa,\rho}^2:=\langle\rho-\gamma,R_\rho^{-1}\kappa(\Delta_{\rho,\rho})(\rho-\gamma)\rangle_\mathrm{HS}
\end{equation*}
where, for $X\in T_{\rho}\mathcal{D}(\mathcal{H})$, we define the Riemannian semi-norm: 
\begin{equation}\label{eqn:riemdiv}
    \|X\|_{\kappa,\rho}^2:=\langle X,R_\rho^{-1}\kappa(\Delta_{\rho,\rho})(X)\rangle_\mathrm{HS}.
\end{equation}
We study standard $f$-divergences and Riemannian semi-norms together because they satisfy the following relationship for $\rho\in\mathcal{D}(\mathcal{H}),X\in T_{\rho}\mathcal{D}(\mathcal{H})$ with equal support and $\gamma_\varepsilon:=\rho+\epsilon X$:
\begin{equation}\label{eqn:stddivlocallyRiem}
    \left.\frac{d^2}{d\varepsilon^2}\right|_{\varepsilon=0} {D_f^\mathrm{std}( \rho \| \gamma_\varepsilon)}\equiv \lim_{\epsilon\to 0^+}\frac{2}{\epsilon^2}D_f^\mathrm{std}( \rho \| \gamma_\varepsilon)=f''(1)\|X\|_{\kappa_f,\rho}^2
\end{equation}
where $f(x)\in\mathcal{F},\kappa_f(x):=\frac{f(x)+\widetilde f(x)}{f''(1)(x-1)^2}$, $\widetilde f(x):=xf(x^{-1})$. That is, \\ \\
\emph{The second-order local behaviour of a standard $f$-divergence is determined by an associated Riemannian semi-norm.}

This variation in the second order behaviour makes the standard $f$-divergences convenient to study the significance of the relationship between the relative entropy and its local behaviour, $\|X\|_{\kappa_\mathrm{BKM},\rho}^2$, $\kappa_\mathrm{BKM}:=\kappa_{x\log x}$, which is quite special. This is in contrast to the maximal $f$-divergences and quantum Hockey-stick divergences, which are families with only local  $\|X\|_{\kappa_{(y-1)^2}, \rho}^2,\|X\|_{\kappa_\mathrm{BKM},\rho}^2$ behaviour respectively \cite{PR98,HT24}. 

\subsection{Operator Convex Functions}
The standard $f$-divergences and Riemannian semi-norms, which are the focus of this paper, are parametrised by functions belonging to the following classes:
\begin{align*}
    &\mathcal{F} := \{f:(0,\infty)\to \mathbb{R}, \text{ operator convex, } f(1)=0,f''(1)>0\},\\
    &\mathcal{F}_\mathrm{sym} := \{f:(0,\infty)\to \mathbb{R}, \text{ operator convex, } xf(x^{-1})=f(x)\text{ for }x>0, f(1)=0,f''(1)=2\},\\
    &\mathcal{K}:=\{\kappa:(0,\infty)\to \mathbb{R}, \text{ operator convex, } x\kappa(x)=\kappa(x^{-1})\text{ 
    for }x>0,\kappa(1)=1\}.
\end{align*}
Note that the \textit{transpose} $\widetilde f(x):=xf(x^{-1})\in\mathcal{F}$ of $f(x)\in\mathcal{F}$ satisfies for all $\rho,\gamma\in\mathcal{D}(\mathcal{H})$:
\begin{equation*}
    D_{\widetilde{f}}^\mathrm{cl}(\rho\|\gamma)\equiv D_f^\mathrm{cl}(\gamma\|\rho),\quad D_{\widetilde{f}}^\mathrm{std}(\rho\|\gamma)\equiv D_f^\mathrm{std}(\gamma\|\rho),\quad \widehat D_{\widetilde{f}}(\rho\|\gamma)\equiv \widehat D_f(\gamma\|\rho)
\end{equation*}
which is the reason for considering $\mathcal{F}_\mathrm{sym}$ - this subclass gives symmetric quantum $f$-divergences.

An \textit{operator convex function} is a convex, real analytic function $f:(0,\infty)\to\mathbb{R}$ defined by the condition:
\begin{equation*}
    f(\lambda A+(1-\lambda)B)\leq\lambda f(A)+(1-\lambda)f(B)\;\forall\lambda\in(0,1),\forall A,B\in\mathbb{P}_d,\forall d\in\mathbb{N}
\end{equation*}
The functions in $\mathcal{K}$ are also \textit{operator monotone decreasing} \cite{HKPR13}:

An \textit{operator monotone function} is a positive increasing function $f:(0,\infty)\to(0,\infty)$ defined by the condition:
\begin{equation*}
    f(A)\leq f(B)\;\forall A,B\in\mathbb{P}_d,A\preccurlyeq B,\forall d\in\mathbb{N}
\end{equation*}
An \textit{operator monotone decreasing} function $\kappa$ is a positive decreasing function $\kappa(x)=1/f(x)$ where $f$ is an operator monotone function.

In particular, the operators $A,B$ may be superoperators and all of the above definitions still apply. 

Any $f \in \mc F$ (note: $\mc F_\mathrm{sym}\subset \mc F$), $\kappa\in\mathcal{K}$ can be written in the following integral representations \cite{HKPR13,HR15}:
\begin{enumerate}
    \item If $f\in\mathcal{F}$, there exists a unique constant $c\geq0$ and a unique positive measure $\mu$ on $[0,\infty)$ with $\int_{[0,\infty)} \frac{1}{1+s} d\mu (s) < \infty$ such that
    \begin{equation}\label{opcon}
    f(x) = f'(1)(x-1) + c(x-1)^2 + \mu(0)\frac{(x-1)^2}{x} + \int_0^\infty \frac{(x-1)^2}{x+s} d\mu(s), 
    \end{equation}
    \item If $\kappa\in\mathcal{K}$, there exists a unique probability measure $m$ on [0,1] such that \begin{equation}\label{opmon}
    \kappa(x) = \int_{[0,1]} \kappa_s(x) \, dm(s), \quad x \in (0,\infty). 
    \end{equation}
    where $\kappa_s(x):=\frac{1 + s}{2}\left( \frac{1}{x + s} + \frac{1}{sx + 1} \right)$ is decreasing in $s\in[0,1]$ for all $x\in(0,\infty)$.
\end{enumerate}

We will be especially interested in $\kappa_\mathrm{max}:=\kappa_0,\kappa_\mathrm{min}:=\kappa_1\in\mathcal{K}$, which satisfy $\kappa_\mathrm{min}(x)\leq\kappa(x)\leq\kappa_\mathrm{max}(x)$ for all $\kappa\in\mathcal{K},x\in(0,\infty)$. Since the definitions of the standard $f$-divergence, maximal $f$-divergence and Riemannian semi-norms are linear in $f\in\mathcal{F},\kappa\in\mathcal{K}$ respectively, these integral representations can be used to decompose the corresponding quantum $f$-divergences. This is a major tool in our disposal, that greatly simplifies the problem of dealing with these quantum $f$-divergences in generality, and this principle is at the core of many results in this paper.


\subsection{Relative Expansion Coefficients}
In this work, we study the following quantities, for a pair of quantum channels $\mathcal{N}:\mathcal{B}(\mathcal{H}_A)\to\mathcal{B}(\mathcal{H}_B),\mathcal{M}:\mathcal{B}(\mathcal{H}_A)\to\mathcal{B}(\mathcal{H}_B')$, and $f\in\mathcal{F},\kappa\in\mathcal{K}$:
\begin{align}
    & \widecheck{\eta}^{\std}_f(\mc N,\mc M):= \inf_{\substack{\rho\neq \gamma \in \mc D(\mathcal{H}_A),\\\supp(\rho) = \supp (\gamma)}} \frac{D^{\std}_f(\mc N(\rho) \|\mc N(\gamma))}{D^{\std}_f(\mc M(\rho) \|\mc M(\gamma))}, \\
    & \widecheck{\eta}^{\Riem}_{\kappa}(\mc N,\mc M):= \inf_{\rho \in \mc D(\mathcal{H}_A)} \inf_{X \in T_\rho \mc D(\mathcal{H}_A)} \frac{\|\mathcal{N}(X)\|_{\kappa,\:\mathcal{N}(\rho)}^2}{\|\mathcal{M}(X)\|_{\kappa,\:\mathcal{M}(\rho)}^2}. 
\end{align}
The above quantities can be similarly defined for other quantum $f$-divergences. We call $\widecheck{\eta}^{\std}_f(\mathcal{N},\mathcal{M})$ and $\widecheck{\eta}^{\Riem}_{\kappa}(\mathcal{N},\mathcal{M})$ the divergence and Riemannian \textit{relative expansion coefficients}, respectively. We will often be interested in cases where $\mathcal{N}=\mathcal{D}\circ\mathcal{M}$, which we call (plain) contraction and expansion coefficients over restricted subspaces:
\begin{align*}
    &\eta^\mathrm{std}(\mathcal{D};\mathrm{Im}\:\mathcal{M}):=\widecheck\eta^\mathrm{std}(\mathcal{M},\mathcal{D}\circ\mathcal{M})^{-1},\eta^\mathrm{Riem}(\mathcal{D};\mathrm{Im}\:\mathcal{M}):=\widecheck\eta^\mathrm{Riem}(\mathcal{M},\mathcal{D}\circ\mathcal{M})^{-1}\\
    &\widecheck\eta^\mathrm{std}(\mathcal{D};\mathrm{Im}\:\mathcal{M}):=\widecheck\eta^\mathrm{std}(\mathcal{D}\circ\mathcal{M},\mathcal{M}),\widecheck\eta^\mathrm{Riem}(\mathcal{D};\mathrm{Im}\:\mathcal{M}):=\widecheck\eta^\mathrm{Riem}(\mathcal{D}\circ\mathcal{M},\mathcal{M})
\end{align*}
These contraction and expansion coefficients are all contained within $[0,1]$ by the data processing inequality; they are respectively understood via the best-case or worst-case proportion of distinguishability 
preserved by the action of a quantum channel $\mathcal{D}$ on all states in $\mathrm{Im}\:\mathcal{M}$. Thus, the relative expansion coefficients are used to extend this conversation from the cases of perfect recoverability to a looser discussion about whether or not a similar proportion of distinguishability information is preserved across different distinguishability measures, and to provide explicit examples of reverse data processing inequalities. 

In the case when the second channel $\mc M$ is the identity channel, we obtain the contraction and expansion coefficients \emph{over all states},
\begin{equation*}
    \eta^{\std}_f(\mc N):= \widecheck\eta^{\std}_f(id,\mc N)^{-1},{\eta}^{\Riem}_{\kappa}(\mc N):=\widecheck{\eta}^{\Riem}_{\kappa}(id,\mc N)^{-1}, \quad \widecheck \eta^{\std}_f(\mc N):= \widecheck\eta^\mathrm{std}_f(\mc N,id),\widecheck{\eta}^{\Riem}_{\kappa}(\mc N):=  \widecheck{\eta}^{\Riem}_{\kappa}(\mc N,id).
\end{equation*}
These above contraction coefficients have been the focus in many prior works, e.g. because they can be used to give upper bounds on the convergence rate of a discrete quantum Markov chain \cite{Choi1994, Cohen1993, LR99, GZB24, GT25, HR15, HT24, Temme_2010}.\\

This work establishes connections between different relative expansion coefficients, for example, for all pairs of channels $\mathcal{N},\mathcal{M}$, for all functions $f\in\mathcal{F}$:
\begin{equation}\label{eqn:stddivgeneral}
    \widecheck\eta_f^\mathrm{std}(\mathcal{N},\mathcal{M})=\widecheck\eta_{\widetilde{f}}^\mathrm{std}(\mathcal{N},\mathcal{M})\leq\widecheck\eta_{f_\mathrm{sym}}^\mathrm{std}(\mathcal{N},\mathcal{M})
\end{equation}
where $f_\mathrm{sym}(x):=\frac{f(x)+\widetilde{f}(x)}{f''(1)}\in\mathcal{F}_\mathrm{sym}$
The inequality is comes from the fact that for all states $\rho,\gamma$ with equal support and channels $\mathcal{N},\mathcal{M}$ \cite{HR15}:
\begin{equation*}
    \frac{D_f^\mathrm{std}(\mathcal{N}(\rho)\|\mathcal{N}(\gamma))+D_f^\mathrm{std}(\mathcal{N}(\gamma)\|\mathcal{N}(\rho))}{D_f^\mathrm{std}(\mathcal{M}(\rho)\|\mathcal{M}(\gamma))+D_f^\mathrm{std}(\mathcal{M}(\gamma)\|\mathcal{M}(\rho))}\geq \min\left\{{\frac{D_f^\mathrm{std}(\mathcal{N}(\rho)\|\mathcal{N}(\gamma))}{D_f^\mathrm{std}(\mathcal{M}(\rho)\|\mathcal{M}(\gamma))},\frac{D_f^\mathrm{std}(\mathcal{N}(\gamma)\|\mathcal{N}(\rho))}{D_f^\mathrm{std}(\mathcal{M}(\gamma)\|\mathcal{M}(\rho))}}\right\}\geq \widecheck\eta_f^\mathrm{std}(\mathcal{N},\mathcal{M})
\end{equation*}






\section{No Reverse Data Processing Inequality over All States}\label{sec:nordpi}

We begin our investigation into relative expansion coefficients with a rather surprising result. It turns out that for \emph{all quantum $f$-divergences satisfying monotonicity}, no proportion of distinguishability is necessarily preserved by a quantum channel over all pairs of states, even if the quantum channel is injective. In other words, the expansion coefficient for an unrestricted domain is null is many cases, $\widecheck\eta_f(\mathcal{N})=0$. 

This behaviour of expansion coefficients is in stark contrast to an expansion coefficient based on e.g. the trace distance, or any choice of norms on the output spaces of $\mathcal{N},\mathcal{M}$. Such norm-based relative expansion coefficients $\widecheck\eta(\mathcal{N},\mathcal{M})$ are strictly positive for all injective quantum channels $\mathcal{N}$, regardless of $\mathcal{M}$. We obtain this non-trivial result that $\widecheck\eta_f(\mathcal{N})=0$, by exploiting the cases where $D_f$ is first and second order in e.g. the trace distance. The trick is to find a family of pairs of states $\rho_\varepsilon,\gamma_\varepsilon \in \mathcal{D}(\mathcal{H}_A)$, continuously parametrised by $\varepsilon \in \mathbb{R^+}$, $\varepsilon <<1$, such that $D_f(\rho_\varepsilon\|\gamma_\varepsilon) = \Theta(\varepsilon) $ while $D(\mathcal{N}(\rho_\varepsilon) || \mathcal{N}(\gamma_\varepsilon)) = o(\varepsilon)$, as $\varepsilon\to0$, in the non-trivial cases where $\mathcal{N}$ is neither a unitary channel nor a replacer channel. To do this, we use the following two lemmas:

\begin{lemma}[{\cite[Theorem 3.1]{davies1976quantum}}]\label{lem1}
If a quantum channel $\mathcal{N}: \mathcal{B}(\mathcal{H}_A) \to \mathcal{B}(\mathcal{H}_B)$ is purity-preserving, i.e. it maps any pure state into a pure state, then $\mathcal{N}$ must be either an isometric embedding $\mathcal{N}(\rho) = V \rho V^\dagger$, $V^\dagger V = \mathbbm{1}_A$, or a replacer channel $\mathcal{N}(\rho) = \mathrm{Tr}\:\rho \, \ket{\varphi}\bra{\varphi}$ for some state $\ket{\varphi}$.
\end{lemma}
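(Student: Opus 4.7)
The plan is to pass to a Stinespring dilation and reduce the statement to a classification of linear isometries whose image consists entirely of product vectors. Concretely, write $\mathcal{N}(\rho) = \mathrm{Tr}_E(V \rho V^\dagger)$ for some isometry $V:\mathcal{H}_A \to \mathcal{H}_B \otimes \mathcal{H}_E$. Purity of $\mathcal{N}(|\psi\rangle\langle\psi|)$ is equivalent, via the Schmidt decomposition, to the requirement that $V|\psi\rangle$ have Schmidt rank one, i.e.\ $V|\psi\rangle = |\phi(\psi)\rangle \otimes |\chi(\psi)\rangle$ for every $|\psi\rangle \in \mathcal{H}_A$. This turns the question into: when can a linear isometry $V$ map every input vector to a product vector?

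Next, I would exploit linearity to derive a pairwise dichotomy on the maps $\psi \mapsto |\phi(\psi)\rangle$ and $\psi \mapsto |\chi(\psi)\rangle$. For any linearly independent $|\psi_1\rangle, |\psi_2\rangle$, the vector $V(a|\psi_1\rangle + b|\psi_2\rangle) = a|\phi_1\rangle|\chi_1\rangle + b|\phi_2\rangle|\chi_2\rangle$ must be a product for all $a,b$; a direct Schmidt-rank computation shows this forces $|\phi_1\rangle \parallel |\phi_2\rangle$ or $|\chi_1\rangle \parallel |\chi_2\rangle$.

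Third, I promote this pairwise dichotomy to a global one: either all $|\phi(\psi)\rangle$ lie in a common one-dimensional subspace, or all $|\chi(\psi)\rangle$ do. Supposing both failures occur at independent pairs $(\psi_1,\psi_2)$ and $(\psi_3,\psi_4)$, I would trace the pairwise constraints around all cross-pairs to obtain a proportionality chain violating one of the initial assumptions, a contradiction. In the first case $V = |\phi_0\rangle \otimes W$ for some isometry $W:\mathcal{H}_A\to\mathcal{H}_E$, giving $\mathcal{N}(\rho) = (\mathrm{Tr}\,\rho)\,|\phi_0\rangle\langle\phi_0|$, a replacer channel; in the second, $V = U \otimes |\chi_0\rangle$ for some isometry $U:\mathcal{H}_A\to\mathcal{H}_B$, and tracing out $E$ yields $\mathcal{N}(\rho) = U\rho U^\dagger$, an isometric embedding.

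I expect the main obstacle to be the combinatorial step that upgrades the pairwise dichotomy to a global one, while being careful about degenerate configurations (e.g.\ some $V|\psi\rangle$ sharing accidental proportionalities) and the rescaling ambiguity $|\phi(\psi)\rangle \mapsto \lambda|\phi(\psi)\rangle,\,|\chi(\psi)\rangle \mapsto \lambda^{-1}|\chi(\psi)\rangle$ in the factorization. Once the dichotomy is established, producing the explicit factorization of $V$ and verifying that $W$ or $U$ is an isometry is routine bookkeeping from $V^\dagger V = \mathbbm{1}_A$.
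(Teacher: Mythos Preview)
The paper does not supply its own proof of this lemma; it is quoted from Davies with a citation and then invoked as a black box in the proof of Theorem~\ref{nordpi}. So there is no in-paper argument to compare against.

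Your Stinespring approach is sound and is a natural modern route to this result. The pairwise dichotomy is correct: $a|\phi_1\rangle|\chi_1\rangle + b|\phi_2\rangle|\chi_2\rangle$ is a product vector for all scalars $a,b$ if and only if $\{|\phi_1\rangle,|\phi_2\rangle\}$ or $\{|\chi_1\rangle,|\chi_2\rangle\}$ is linearly dependent. The global upgrade also goes through as you sketch: assuming $\phi(\psi_1)\not\parallel\phi(\psi_2)$ (hence $\chi(\psi_1)\parallel\chi(\psi_2)$) and $\chi(\psi_3)\not\parallel\chi(\psi_4)$ (hence $\phi(\psi_3)\parallel\phi(\psi_4)$), applying the pairwise test to each cross-pair $(\psi_i,\psi_j)$ with $i\in\{1,2\}$, $j\in\{3,4\}$ and chasing transitivity of parallelism forces $\chi(\psi_3)\parallel\chi(\psi_4)$, a contradiction. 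The scalar ambiguity you flag is harmless once the common line is fixed: choose a unit $|\phi_0\rangle$ there and set $W|\psi\rangle:=(\langle\phi_0|\otimes I)V|\psi\rangle$, which is manifestly linear in $\psi$ and inherits the isometry property from $V^\dagger V=\mathbbm{1}_A$.
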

    
\begin{lemma}[{\cite[Theorem 3.1]{BGSW24}}]\label{lem2}
    If a quantum channel $\mathcal{N}: \mathcal{B}(\mathcal{H}_A) \to \mathcal{B}(\mathcal{H}_B)$, $\dim \mathcal{H}_B =: d_B \le \dim \mathcal{H}_A =: d_A$, is not purity-preserving, then there exists an orthogonal projection $P_A \in \mathcal{B}(\mathcal{H}_A)$, $\mathrm{rk}\,P_A \leq d_A - 1$, and a pure state $|\psi \rangle$ such that
    \[
    \mathrm{supp} \, \mathcal{N}(\ket{\psi}\bra{\psi}) \leq \mathrm{supp} \, \mathcal{N}(P_A).
    \]
\end{lemma}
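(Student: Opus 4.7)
The plan is to reduce the statement to producing a single orthogonal projection $P_A \in \mathcal{B}(\mathcal{H}_A)$ of rank at most $d_A-1$ whose image under $\mathcal{N}$ already exhausts the full output support $\supp\mathcal{N}(\mathbbm{1}_A)$. Once such a $P_A$ is in hand, every pure state $\ket{\psi}$ satisfies the conclusion automatically: from $\ket{\psi}\bra{\psi} \leq \mathbbm{1}_A$ and positivity of $\mathcal{N}$ we get $\supp\mathcal{N}(\ket{\psi}\bra{\psi}) \leq \supp\mathcal{N}(\mathbbm{1}_A) = \supp\mathcal{N}(P_A)$. The role of the dimensional hypothesis $d_B \leq d_A$ is that $\dim\supp\mathcal{N}(\mathbbm{1}_A) \leq d_B \leq d_A$, so there is enough room on the input side to hope for a projection of rank strictly less than $d_A$.

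To build $P_A$, I would proceed greedily. Since $\mathcal{N}$ is not purity-preserving, I can first pick a pure state $\ket{\phi_1}$ with $\mathrm{rk}\,\mathcal{N}(\ket{\phi_1}\bra{\phi_1}) = r_{\max} \geq 2$. Set $P_{A,1}:=\ket{\phi_1}\bra{\phi_1}$ and write $T_i := \supp\mathcal{N}(P_{A,i})$. Inductively, given orthonormal pure states $\ket{\phi_1},\ldots,\ket{\phi_i}$ with $T_i \subsetneq \supp\mathcal{N}(\mathbbm{1}_A)$, I look for a unit vector $\ket{\phi_{i+1}} \in (\mathrm{Im}\,P_{A,i})^\perp$ with $\supp\mathcal{N}(\ket{\phi_{i+1}}\bra{\phi_{i+1}}) \not\subseteq T_i$. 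Existence of such a vector is the crux and would follow by contradiction: if every unit vector $\ket{\phi}$ in $(\mathrm{Im}\,P_{A,i})^\perp$ satisfied $\supp\mathcal{N}(\ket{\phi}\bra{\phi}) \subseteq T_i$, then expanding $\mathbbm{1}_A - P_{A,i}$ in an orthonormal basis of its range gives $\supp\mathcal{N}(\mathbbm{1}_A - P_{A,i}) \subseteq T_i$, and combined with $\mathcal{N}(\mathbbm{1}_A) = \mathcal{N}(P_{A,i}) + \mathcal{N}(\mathbbm{1}_A - P_{A,i})$ this would force $\supp\mathcal{N}(\mathbbm{1}_A) \subseteq T_i$, contradicting $T_i \subsetneq \supp\mathcal{N}(\mathbbm{1}_A)$. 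Appending such a $\ket{\phi_{i+1}}$ preserves orthonormality of the list and strictly enlarges $T_i$ by at least one dimension, because $T_{i+1} = T_i + \supp\mathcal{N}(\ket{\phi_{i+1}}\bra{\phi_{i+1}})$.

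A dimension count then closes the loop: starting from $\dim T_1 = r_{\max} \geq 2$ and gaining at least one output dimension at each step, the process terminates at some step $k$ with $T_k = \supp\mathcal{N}(\mathbbm{1}_A)$, and one has $k \leq \dim\supp\mathcal{N}(\mathbbm{1}_A) - r_{\max} + 1 \leq d_A - 2 + 1 = d_A - 1$. The resulting projection $P_A := P_{A,k}$ has rank $k \leq d_A - 1$ and satisfies $\supp\mathcal{N}(P_A) = \supp\mathcal{N}(\mathbbm{1}_A)$, as desired. The main obstacle is precisely the non-stuck step: one must rule out the possibility that the greedy procedure saturates at some $T_i \subsetneq \supp\mathcal{N}(\mathbbm{1}_A)$, and the contradiction argument above is what makes this go through. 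Both the hypothesis $d_B \leq d_A$ and the two-dimensional head-start $r_{\max} \geq 2$ coming from the failure of purity preservation are needed to guarantee that in the worst case the greedy construction saves exactly one input dimension.
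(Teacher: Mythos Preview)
Your proof is correct. The paper itself does not supply a proof of this lemma; it simply cites \cite[Theorem~3.1]{BGSW24} and describes the argument there as ``pigeonhole principle-based''. Your greedy construction is exactly that pigeonhole argument made explicit: the head-start $r_{\max}\ge 2$ from non-purity-preservation buys one saved input dimension, and each inductive step trades at most one input dimension for at least one output dimension until $\supp\mathcal{N}(\mathbbm{1}_A)$ is exhausted.

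One remark worth making: your argument in fact yields the slightly stronger conclusion $\supp\mathcal{N}(P_A)=\supp\mathcal{N}(\mathbbm{1}_A)$, so \emph{any} pure state works, in particular one with $\ket{\psi}\bra{\psi}\perp P_A$. This orthogonality is not stated in the lemma but is tacitly used in the proof of Theorem~\ref{nordpi} (where $\rho_\varepsilon,\gamma_\varepsilon$ are built from $\bar\rho=P_A/\mathrm{rk}\,P_A$ and an orthogonal $\ket{\psi}\bra{\psi}$), so it is good that your construction delivers it automatically.
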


\noindent The following theorem tells us that, provided any quantum $f$-divergence $D_f$, $\widecheck\eta_f(\mathcal{N})=0$ for any non-unitary quantum channel $\mathcal{N}:\mathcal{B}(\mathcal{H_A})\to\mathcal{B}(\mathcal{H_B})$ whose output dimension $d_B^2$ is no more than the input dimension $d_A^2$. Lemma~\ref{lem2} requires that $d_B\leq d_A$ for its pigeonhole principle-based proof. We are extending this result from the case of the relative entropy, which has been established in \cite{BGSW24}, because this work explores connections with the other quantum $f$-divergences (particularly, standard $f$-divergences and Riemannian semi-norms). In fact, this problem reduces to considering the maximal $f$-divergences $\widehat D_f$, which is a key observation that allows us to completely generalise the result from \cite{BGSW24}. 

\begin{theorem}[No Divergence-based Reverse DPI on All States]\label{nordpi}
    $ $\newline
    Suppose we are given a quantum channel $\mathcal{N}: \mathcal{B}(\mathcal{H}_A) \to \mathcal{B}(\mathcal{H}_B)$, with $d_B\leq d_A$. Any monotone quantum $f$-divergence $D_f$, for operator convex $f\in\mathcal{F}$, satisfies:
    \begin{align*}
        \widecheck{\eta}_f(\mathcal{N})
        &= \mathbbm{1}\{\mathcal{N} \text{ is a unitary channel}\} \\
        &=
        \begin{cases}
            1 & \text{if } d := d_A = d_B \text{ and }
            \mathcal{N}(\rho) = U\rho U^\dagger,\, U \in U(d) \\
            0 & \text{otherwise}
        \end{cases}
    \end{align*}
\end{theorem}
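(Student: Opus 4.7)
The plan is to split by the structure of $\mathcal{N}$. The easy direction: if $\mathcal{N}(\rho) = U\rho U^\dagger$ is unitary, then its inverse $U^\dagger \cdot U$ is also CPTP, so applying monotonicity in both directions yields $D_f(\mathcal{N}(\rho) \| \mathcal{N}(\gamma)) = D_f(\rho \| \gamma)$; every admissible ratio equals $1$ and $\widecheck{\eta}_f(\mathcal{N}) = 1$. For the converse, I will show that a non-unitary $\mathcal{N}$ with $d_B \leq d_A$ admits a family of equal-support pairs along which the ratio tends to $0$.

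First reduce to the non-purity-preserving case via Lemma \ref{lem1}: a purity-preserving $\mathcal{N}$ is either an isometric embedding $V\cdot V^\dagger$ (which, given $d_B \leq d_A$, forces $V$ unitary, contradicting the assumption) or a replacer channel (in which case the numerator is identically zero, immediately giving $\widecheck{\eta}_f(\mathcal{N}) = 0$). So assume $\mathcal{N}$ is not purity-preserving and invoke Lemma \ref{lem2} to obtain a projection $P_A$ of rank $r \leq d_A - 1$ and a pure state $|\psi\rangle$, which we may take orthogonal to $\mathrm{supp}\,P_A$ (the nontrivial regime that gives content to the rank bound), with $\mathrm{supp}\,\mathcal{N}(|\psi\rangle\langle\psi|) \leq \mathrm{supp}\,\mathcal{N}(P_A)$. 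Setting $\pi_A := P_A/r$, consider the commuting family
\[
\rho_\varepsilon := \bigl(1-\tfrac{\varepsilon}{2}\bigr) \pi_A + \tfrac{\varepsilon}{2}\,|\psi\rangle\langle\psi|, \qquad \gamma_\varepsilon := (1-\varepsilon)\pi_A + \varepsilon\,|\psi\rangle\langle\psi|,
\]
which share the support $\mathrm{supp}\,P_A \oplus \mathrm{span}\{|\psi\rangle\}$ for $\varepsilon \in (0,1)$, and so are admissible in the infimum defining $\widecheck{\eta}_f(\mathcal{N})$.

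For the denominator, $\rho_\varepsilon$ and $\gamma_\varepsilon$ commute, so any monotone quantum $f$-divergence coincides on this pair with the classical $D_f^{\mathrm{cl}}$ of the joint eigenvalue distributions. Taylor-expanding at $x = 1$ on the $r$-dimensional block (where the eigenvalue ratio is $(1-\varepsilon/2)/(1-\varepsilon) = 1 + \varepsilon/2 + O(\varepsilon^2)$) and using the exact ratio $1/2$ on the $|\psi\rangle$-component yields
\[
D_f(\rho_\varepsilon \| \gamma_\varepsilon) = \bigl[\tfrac{1}{2} f'(1) + f(\tfrac{1}{2})\bigr]\,\varepsilon + O(\varepsilon^2),
\]
whose leading coefficient is strictly positive: convexity gives $f(1/2) \geq -f'(1)/2$, and strictness holds because $f''(1) > 0$ forbids $f$ from being affine on $[1/2,1]$. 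Hence the denominator is $\Theta(\varepsilon)$. For the numerator, I use the universal majorization $D_f \leq \widehat{D}_f$ from \eqref{eqn:maxmax}. Both images $\mathcal{N}(\rho_\varepsilon)$ and $\mathcal{N}(\gamma_\varepsilon)$ lie in $W := \mathrm{supp}\,\mathcal{N}(P_A)$ by Lemma \ref{lem2} and linearity, converge to $\mathcal{N}(\pi_A)$ which is positive on $W$, and differ by $(\varepsilon/2)\bigl[\mathcal{N}(\pi_A) - \mathcal{N}(|\psi\rangle\langle\psi|)\bigr]$, a tangent direction at $\mathcal{N}(\pi_A)$ sitting inside $W$. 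The local quadratic behaviour of $\widehat{D}_f$ around a common reference state is governed by a Riemannian semi-norm that remains finite on this tangent vector, so $\widehat{D}_f(\mathcal{N}(\rho_\varepsilon) \| \mathcal{N}(\gamma_\varepsilon)) = O(\varepsilon^2)$. The ratio is therefore $O(\varepsilon) \to 0$, proving $\widecheck{\eta}_f(\mathcal{N}) = 0$.

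The main obstacle is the justification that $|\psi\rangle$ in Lemma \ref{lem2} may be chosen orthogonal to $\mathrm{supp}\,P_A$ (otherwise the construction degenerates onto a single support and the two scales $\Theta(\varepsilon)$ versus $O(\varepsilon^2)$ fail to separate), together with making the second-order expansion of $\widehat{D}_f$ rigorous when the limit state $\mathcal{N}(\pi_A)$ is only full-rank on $W$ rather than on $\mathcal{H}_B$. The commuting reduction for the denominator and the bound $D_f \leq \widehat{D}_f$ for the numerator are off-the-shelf facts already available in the excerpt, and the asymmetric mixing parameters $\varepsilon/2$ and $\varepsilon$ are precisely what separates the linear input scale from the quadratic output scale.
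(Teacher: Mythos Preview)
Your proposal is correct and follows essentially the same approach as the paper: the same case split via Lemmas~\ref{lem1} and~\ref{lem2}, the same idea of building a commuting pair whose classical $f$-divergence is $\Theta(\varepsilon)$ while the image pair, bounded via the maximal $f$-divergence $\widehat D_f$, is $O(\varepsilon^2)$. The only substantive difference is your choice of concrete mixing weights $(\varepsilon/2,\varepsilon)$ in place of the paper's $(o(\varepsilon),\varepsilon)$; this yields the leading coefficient $\tfrac12 f'(1)+f(\tfrac12)$ rather than $f'(1)+f(0^+)$, which is arguably cleaner since it avoids any discussion of the (possibly infinite) boundary value $f(0^+)$, while the numerator estimate is unchanged. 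Your flagged concern about the orthogonality $|\psi\rangle\perp P_A$ is shared by the paper, which simply imports it from the proof of \cite[Theorem~3.1]{BGSW24}; the second-order expansion of $\widehat D_f$ on the restricted support $W=\mathrm{supp}\,\mathcal N(P_A)$ is handled in the paper by an explicit Taylor computation, and your invocation of the quadratic local behaviour is the same argument in compressed form.
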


\begin{proof}
    $ $\newline
    \textit{Case 1} ($\mathcal{N}$ is purity-preserving):\\
    If $\mathcal{N}(\rho)=U\rho U^\dagger$ for some unitary $U\in U(d)$, $\widecheck{\eta}_f(\mathcal{N}) = 1$ because $\mathcal{N}$ is invertible. Otherwise, if $\mathcal{N}(\rho) = \mathrm{Tr}\:\rho \, \ket{\varphi}\bra{\varphi}$ for some pure state $\ket{\varphi}\bra{\varphi} \in \mathcal{D}(\mathcal{H}_B)$, $\widecheck{\eta}_f(\mathcal{N}) = 0$, because $\mathcal{N}$ is not injective. See \cite[Theorem 3.1]{BGSW24}.
    
    
    
    
    
    
    \noindent \textit{Case 2} ($\mathcal{N}$ is not purity-preserving):\\ 
    \noindent Following the strategy mentioned earlier, we slightly modify the construction in the proof of \cite[Theorem 3.1]{BGSW24} and use Lemma~\ref{lem2} to choose two states $\rho_\varepsilon,\gamma_\varepsilon$ of equal support (to address quantum $f$-divergences generally): \[
    \bar{\rho}:= \frac{P_A}{\mathrm{rk} P_A},\quad\rho_\varepsilon := (1-o(\varepsilon))\bar{\rho} + o(\varepsilon) \ket{\psi}\bra{\psi},\quad\gamma_\varepsilon := (1-\varepsilon)\bar{\rho} + \varepsilon \ket{\psi}\bra{\psi} \quad\in \mathcal{D}(\mathcal{H}_A),
    \]
    for some orthogonal projection $P_A \in \mathcal{B}(\mathcal{H}_A)$ and pure state $\ket{\psi}\bra{\psi} \in \mathcal{D}(\mathcal{H}_A)$ satisfying $\ket{\psi}\bra{\psi}\perp \bar{\rho}$ but $\mathrm{supp}\; \mathcal{N}(\ket{\psi}\bra{\psi}) \leq \mathrm{supp} \;\mathcal{N}(\bar{\rho})$. Here, $\varepsilon \in (0,1)$ is small. \\

    
    \noindent Via the common eigendecomposition, let $p_\varepsilon = (\frac{1 - o(\varepsilon)}{\mathrm{rk}\, P_A},..., \frac{1 - o(\varepsilon)}{\mathrm{rk}\, P_A}, o(\varepsilon))$, $q_\varepsilon = (\frac{1 - \varepsilon}{\mathrm{rk}\, P_A},..., \frac{1 - \varepsilon}{\mathrm{rk}\, P_A}, \varepsilon)$, where $\mathrm{rk}P_A$ is the rank of the projection. Then:
    \begin{align*}
        D_{f}(\rho_\varepsilon \| \gamma_\varepsilon)&= D_{f}^\mathrm{cl} (p_\varepsilon \| q_\varepsilon)= \frac{1 - \varepsilon}{\mathrm{rk}\, P_A} \, 
        f \left( \frac{\frac{1-o(\varepsilon)}{\mathrm{rk}\, P_A}}{\frac{1-\varepsilon}{\mathrm{rk}\, P_A}} \right) \cdot \mathrm{rk}P_A 
        + \varepsilon \, f \left( \frac{o(\varepsilon)}{\varepsilon} \right) \\
        &= (1-\varepsilon) f \left( \frac{1-o(\varepsilon)}{1-\varepsilon} \right) + \varepsilon f(o(1))
    \end{align*}
    
    \noindent Observe that by the strict convexity of $f$ at 1:\\
    \begin{align*}
    \lim_{\varepsilon \to 0^+} \frac{1}{\varepsilon} D_f(\rho_{\varepsilon} \| \gamma_\varepsilon) 
    &= \lim_{\varepsilon\to0^+} \frac{1 - \varepsilon}{\varepsilon} f\left( \frac{1-o(\varepsilon)}{1 - \varepsilon} \right) + \lim_{\varepsilon\to0^+} f(o(1))  \\
    &= f'(1)+f(0^+) = f'(1) -\frac{f(1) - f(0^+)}{1 - 0} > 0, \quad 
    \end{align*}
    
    \noindent Thus, we have demonstrated the required first order behaviour. Now, to deduce the second order behaviour of $D_f(\mathcal{N}(\mathcal{\rho_\varepsilon})\|\mathcal{N}(\gamma_\varepsilon))$, we consider the maximal $f$-divergence. 
    
    \noindent By assumption, $D_f$ is a monotone quantum $f$-divergence, which means by the maximality of the maximal $f$-divergence \eqref{eqn:maxmax} that \begin{equation*}
        D_f(\mathcal{N}(\mathcal{\rho_\varepsilon})\|\mathcal{N}(\gamma_\varepsilon))\leq \widehat D_f(\mathcal{N}(\mathcal{\rho_\varepsilon})\|\mathcal{N}(\gamma_\varepsilon))
    \end{equation*}
    
    
    \noindent Since $\mathcal{N}(\ket{\psi}\bra{\psi})\in T_{\mathcal{N}(\rho)}\mathcal{D}(\mathcal{H}_B),\:\mathrm{supp}\:\mathcal{N}(\mathcal{\rho_\varepsilon})=\mathrm{supp}\:\mathcal{N}(\gamma_\varepsilon)$, and we can write the following expression \eqref{eqn:maxfdiv}:
    \begin{align*}
        \widehat D_f(\mathcal{N}(\mathcal{\rho_\varepsilon})\|\mathcal{N}(\gamma_\varepsilon)) &= \Tr\mathcal{N}(\gamma_\varepsilon)^{1/2}f(\Tr\mathcal{N}(\gamma_\varepsilon)^{-1/2}\Tr\mathcal{N}(\rho_\varepsilon)\Tr\mathcal{N}(\gamma_\varepsilon)^{-1/2})\Tr\mathcal{N}(\gamma_\varepsilon)^{1/2}
    \end{align*}
    \noindent Observing that $\mathcal{N}(\rho_\varepsilon) = \mathcal{N}(\gamma_\varepsilon)+\varepsilon\mathcal{N}(\ket{\psi}\bra{\psi}-\bar{\rho})+o(\varepsilon)$ and using the Taylor expansion of $f(x)$ about $x=1$:
    \begin{align*}
        \widehat D_f(\mathcal{N}(\mathcal{\rho_\varepsilon})\|\mathcal{N}(\gamma_\varepsilon)) &= \Tr\mathcal{N}(\gamma_\varepsilon)^{1/2}f(\mathcal{N}(\gamma_\varepsilon)^{-1/2}\mathcal{N}(\rho_\varepsilon)\mathcal{N}(\gamma_\varepsilon)^{-1/2})\mathcal{N}(\gamma_\varepsilon)^{1/2}\\
        &=f'(1)\Tr\mathcal{N}(\gamma_\varepsilon)^{1/2}(\mathcal{N}(\gamma_\varepsilon)^{-1/2}\mathcal{N}(\rho_\varepsilon)\mathcal{N}(\gamma_\varepsilon)^{-1/2}-I)\mathcal{N}(\gamma_\varepsilon)^{1/2}\\
        &\quad+\frac{f''(1)}{2}\Tr\mathcal{N}(\gamma_\varepsilon)^{1/2}(\mathcal{N}(\gamma_\varepsilon)^{-1/2}\mathcal{N}(\rho_\varepsilon)\mathcal{N}(\gamma_\varepsilon)^{-1/2}-I)^2\mathcal{N}(\gamma_\varepsilon)^{1/2}+o(\varepsilon^2)\\
        &=\frac{f''(1)}{2}\Tr\mathcal{N}(\gamma_\varepsilon)^{1/2}(\mathcal{N}(\gamma_\varepsilon)^{-1/2}\mathcal{N}(\rho_\varepsilon)\mathcal{N}(\gamma_\varepsilon)^{-1/2}-I)^2\mathcal{N}(\gamma_\varepsilon)^{1/2}+o(\varepsilon^2)\\
        &=\frac{\varepsilon^2}{2}f''(1)\Tr\mathcal{N}(\gamma_\varepsilon)^{1/2}(\mathcal{N}(\gamma_\varepsilon)^{-1/2}\mathcal{N}(\ket{\psi}\bra{\psi}-\bar{\rho})\mathcal{N}(\gamma_\varepsilon)^{-1/2})^2\mathcal{N}(\gamma_\varepsilon)^{1/2}+o(\varepsilon^2)\\
        &=o(\varepsilon)
    \end{align*}
    The final equality certainly holds, because $\mathcal{N}(\ket{\psi}\bra{\psi}-\bar{\rho})\in T_{\mathcal{N}(\gamma_\varepsilon)}\mathcal{D}(\mathcal{H}_B)$. \\
    \noindent Therefore, $D_f(\mathcal{N}(\mathcal{\rho_\varepsilon})\|\mathcal{N}(\gamma_\varepsilon))\leq \widehat D_f(\mathcal{N}(\mathcal{\rho_\varepsilon})\|\mathcal{N}(\gamma_\varepsilon))=o(\varepsilon).$
    
    \noindent As a result,
    \begin{align*}
        0 \leq \widecheck{\eta}_f(\mathcal{N})
        &:= \inf_{\substack{\rho \neq \gamma \in \mathcal{D}(\mathcal{H}_A), \\ \mathrm{supp}\: \rho = \mathrm{supp}\: \gamma}} 
        \frac{D_f \big(\mathcal{N}(\rho) \| \mathcal{N}(\gamma)\big)}{D_f(\rho \| \gamma)} \leq \lim_{\varepsilon \to 0^+} 
        \frac{D_f \big(\mathcal{N}(\rho_\varepsilon) \| \mathcal{N}(\gamma_\varepsilon)\big)}{D_f(\rho_\varepsilon \| \gamma_\varepsilon)} \\
        &= \lim_{\varepsilon \to 0^+} 
        \frac{ \frac{D_f \big(\mathcal{N}(\rho_\varepsilon) \| \mathcal{N}(\gamma_\varepsilon)\big)}{\varepsilon} }{ \frac{D_f(\rho_\varepsilon \| \gamma_\varepsilon)}{\varepsilon} } = 0
    \end{align*}
    
    \noindent Thus, $\widecheck{\eta}_f(\mathcal{N}) = 0$.
\end{proof}

This is a statement that, for any monotone quantum $f$-divergence, there is no reverse data-processing inequality for these quantum channels over all pairs of quantum states. If $d_B>d_A$, then this is not necessarily true, because an erasure channel $\mathcal{N}_\nu:\rho\mapsto (1-\nu)\rho+\nu\ket{e}\bra{e}$ has $\widecheck\eta_f(\mathcal{N})=1-\nu$ for any quantum $f$-divergence $D_f$ satisfying $D_f(A_1+A_2\|B_1+B_2)=D_f(A_1\|B_1)+D_f(A_2\|B_2)$ for positive semi-definite operators $A_1,B_1\perp A_2,B_2$; this includes all of the quantum $f$-divergences that were listed in Section~\ref{sec:introfdivs}. Further, if we similarly define the relative expansion for the Petz-Rényi and sandwiched Rényi divergences, we can apply precisely the same arguments to deduce that there is no reverse data processing inequality and that $\widecheck\eta(\mathcal{N})=1-\nu$ for both of these cases. Note that for the no reverse data-processing inequality results in these cases, the only modification to the proof is that it is not necessary to consider the maximal $f$-divergence since they can be seen directly to have that second order behaviour in $\epsilon$. 




While this lack of a reverse data processing inequality is remarkably general, because we are motivated by the connection between the relative entropy and its local behaviour, we will henceforth specialise to considering only standard $f$-divergences and Riemannian semi-norms. Riemannian relative expansion coefficients are particularly convenient to study, owing to their computability in the qubit setting (see Section~\ref{sec:explicitcoeffs}), so comparing the divergence and Riemannian relative expansion coefficients is fruitful. 

Theorem~\ref{nordpi} only addresses the preserved proportion of distinguishability when accounting for \textit{all} pairs of input quantum states. Naturally, one may wonder how restricting the domain of the quantum channel $\mathcal{N}$ can alter the expansion coefficient. Indeed, generalising and extending the computations of \cite{BGSW24}, we can find families of qubit channels $\mathcal{N}$ that have a positive expansion coefficient when their domain is restricted to the image of another channel $\mathcal{D}$ in the same family, i.e. $\widecheck\eta_f(\mathcal{N};\mathrm{Im}\:\mathcal{D})>0$. We go even further, and identify that the large class of primitive quantum channels (defined by a unique, full-rank fixed point) have a positive expansion coefficient over a restricted domain after some $n^\mathrm{th}$ iteration of the same channel applied consecutively. Let $\mc N^{n}=\mc N \circ \mc N^{n-1}$ denote the $n$-fold concatenation of a channel $\mc N$, then, for $n$ sufficiently large, independently of $f\in\mathcal{F}$ (see Theorem~\ref{thm:primcoeffpos}),
\[
\widecheck{\eta}_{f}^{\mathrm{std}}(\mathcal{N}; \mathrm{Im}\:\mathcal{N}^n)\equiv\widecheck{\eta}_{f}^{\mathrm{std}}(\mathcal{N}^{n+1}, \mathcal{N}^n) \equiv \inf_{\substack{\rho\neq\gamma \in \mathcal{N}^n(\mathcal{D}_d),\\\mathrm{supp}\:\rho=\mathrm{supp}\:\gamma}} \frac{D^{\std}_f(\mathcal{N}(\rho) \|\mathcal{N}(\gamma))}{D^{\std}_f(\rho \|\gamma)}> 0
\]

i.e. a reverse DPI holds. We conclude for many examples of quantum channels $\mathcal{N}:\mathcal{B}(\mathcal{H})\to\mathcal{B}(\mathcal{H})$ that a positive expansion coefficient will eventually be obtained after a sufficient number of iterations. For example, this is true for all qubit Pauli channels after $n=1$ iteration (see Section~\ref{sec:explicitcoeffs}).

    


\section{Equality between Divergence and Riemannian Coefficients}\label{sec:equalityofcoeffs}
It is currently a challenge to obtain exact expressions for both divergence and Riemannian relative expansion coefficients. However, this does not stop us from identifying connections between them. Such connections between the coefficients are valuable because they can reveal similarities (or differences, as in Theorem~\ref{thm:RiemInequiv}) between the respective distinguishability measures. If we can find a channel-independent relationship between the relative expansion coefficient of a standard $f$-divergence and that of their induced Riemannian semi-norm, then this is indicative of an underlying dependence of the standard $f$-divergence on its local behaviour and one may utilise this information to develop heuristics for the analytical form of relative expansion coefficients \cite{BGSW24}. 

Divergence relative expansion coefficients turn out to be even more difficult to evaluate in general than Riemannian relative expansion coefficients, so cases where they coincide, $\widecheck\eta_f^\mathrm{std}\equiv\widecheck\eta_{\kappa_f}^\mathrm{Riem}$, are informative. This formed a core strategy in \cite{BGSW24}. In this section, we discuss the limitations of relating these two types of relative expansion coefficients by a strict equality (Section~\ref{subsec:inequality}) before extrapolating the known cases of this equality to previously unknown choices of $f\in\mathcal{F}$ (Section~\ref{subsec:equalitycases}). In the next section, Section~\ref{sec:equivalence}, we will see that equality is not our only option to reduce the problem.

\subsection{Inequality between Divergence and Riemannian Coefficients}\label{subsec:inequality}
By \eqref{eqn:stddivlocallyRiem}, the local behaviour of a standard $f$-divergence $D_f^\mathrm{std}(\rho\|\gamma)$ is characterised by the Riemannian semi-norm $\|\rho-\gamma\|_{\kappa_f,\:\rho}^2$. This means that we can re-express the Riemannian relative expansion coefficient as follows:
\begin{align}\label{eqn:RiemCoeffAlt}
    \widecheck{\eta}^{\mathrm{Riem}}_{\kappa_f}(\mathcal{N}, \mathcal{M})&=
	\inf_{\substack{\rho \in \mathcal{D}(\mathcal{H}_A),\\ X  \in T_\rho \mathcal{D}(\mathcal{H}_A)}}
	\frac{
		\|\mathcal{N}(X)\|_{\kappa_f,\, \mathcal{N}(\rho)}^2
	}{
		\|\mathcal{M}(X)\|_{\kappa_f,\, \mathcal{M}(\rho)}^2
	}=
	\inf_{\substack{\rho \in \mathcal{D}(\mathcal{H}_A), \\X \in T_\rho \mathcal{D}(\mathcal{H}_A)}}
	\frac{
		\lim_{\varepsilon \to 0^+} \frac{1}{\varepsilon^2}
		D_f^{\mathrm{std}}\left(\mathcal{N}(\rho) \| \mathcal{N}(\rho) + \varepsilon \mathcal{N}(X)\right)
	}{
		\lim_{\varepsilon \to 0^+} \frac{1}{\varepsilon^2}
		D_f^{\mathrm{std}}\left(\mathcal{M}(\rho) \| \mathcal{M}(\rho) + \varepsilon \mathcal{M}(X)\right)
	}\nonumber\\
    &= \inf_{\substack{\rho \in \mathcal{D}(\mathcal{H}_A), \\X \in  T_{\rho} \mathcal{D}(\mathcal{H}_A)}} \lim_{\varepsilon \to 0^+} \frac{D^{\mathrm{std}}_f (\mathcal{N}(\rho) \| \mathcal{N}(\rho+\varepsilon X))}{D^{\mathrm{std}}_f (\mathcal{M}(\rho) \| \mathcal{M}(\rho+\varepsilon X))}
\end{align}

Since the Riemannian coefficients are then effectively the corresponding divergence coefficients optimised over a smaller set of input states, it is not difficult to understand the following inequality between these relative expansion coefficients. The result was first recognised for contraction coefficients by \cite{LR99}; we give a short alternative proof in the general setting.

\begin{prop}\label{prop:RiemDivIneq}
Suppose we are provided some $f(x) \in \mathcal{F}$ and its corresponding $\kappa_f(x) = \frac{f(x) + \widetilde{f}(x)}{f''(1)( x-1)^2} \in \mathcal{K}$. Given any quantum channels $\mathcal{N} : \mathcal{B}(\mathcal{H}_A) \rightarrow \mathcal{B}(\mathcal{H}_B)$, $\mathcal{M} : \mathcal{B}(\mathcal{H}_A) \rightarrow \mathcal{B}(\mathcal{H}'_B)$,
\begin{equation*}
    \widecheck{\eta}^{\mathrm{std}}_{f}(\mathcal{N}, \mathcal{M})  \leq \widecheck{\eta}^{\mathrm{Riem}}_{\kappa_f}(\mathcal{N}, \mathcal{M})
\end{equation*}
\end{prop}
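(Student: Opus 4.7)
The plan is to leverage the equivalent representation of the Riemannian relative expansion coefficient established in equation~\eqref{eqn:RiemCoeffAlt} just above the proposition, which rewrites $\widecheck{\eta}^{\mathrm{Riem}}_{\kappa_f}(\mathcal{N},\mathcal{M})$ as an infimum over $\rho \in \mathcal{D}(\mathcal{H}_A)$ and $X \in T_\rho \mathcal{D}(\mathcal{H}_A)$ of the quantity
\[
\lim_{\varepsilon \to 0^+} \frac{D^{\mathrm{std}}_f(\mathcal{N}(\rho) \| \mathcal{N}(\rho + \varepsilon X))}{D^{\mathrm{std}}_f(\mathcal{M}(\rho) \| \mathcal{M}(\rho + \varepsilon X))}.
\]
Since this form expresses the Riemannian coefficient as an infimum of divergence ratios along a restricted family of state pairs, the desired inequality should follow from the standard principle that restricting the feasible set in an infimum cannot decrease its value.

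Concretely, I would proceed as follows. First, observe that for any fixed $\rho$ and any nonzero $X \in T_\rho \mathcal{D}(\mathcal{H}_A)$, the perturbed operator $\gamma_\varepsilon := \rho + \varepsilon X$ satisfies, for all sufficiently small $\varepsilon > 0$, (i) positivity, (ii) $\mathrm{supp}(\gamma_\varepsilon) = \mathrm{supp}(\rho)$, and (iii) $\gamma_\varepsilon \neq \rho$; this uses the defining conditions on $T_\rho \mathcal{D}(\mathcal{H}_A)$ (trace zero, self-adjoint, support dominated by $\rho$). Consequently the pair $(\rho, \gamma_\varepsilon)$ is feasible in the infimum defining $\widecheck{\eta}^{\mathrm{std}}_f(\mathcal{N},\mathcal{M})$, so
\[
\frac{D^{\mathrm{std}}_f(\mathcal{N}(\rho) \| \mathcal{N}(\gamma_\varepsilon))}{D^{\mathrm{std}}_f(\mathcal{M}(\rho) \| \mathcal{M}(\gamma_\varepsilon))} \geq \widecheck{\eta}^{\mathrm{std}}_f(\mathcal{N},\mathcal{M}).
\]
This lower bound is preserved in the limit $\varepsilon \to 0^+$, and then taking the infimum over $(\rho, X)$ on the left yields $\widecheck{\eta}^{\mathrm{Riem}}_{\kappa_f}(\mathcal{N},\mathcal{M}) \geq \widecheck{\eta}^{\mathrm{std}}_f(\mathcal{N},\mathcal{M})$.

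The only real obstacle is handling boundary cases. The trivial direction $X = 0$ produces a $0/0$ expression that I would simply exclude, since it does not affect the value of the Riemannian infimum (one can always perturb to a nonzero $X$ of arbitrarily small norm). One should also ensure the limits in the denominator are not pathologically zero; if $\mathcal{M}(X) = 0$ for some nonzero $X$ then the ratio is $+\infty$ and contributes nothing to the infimum, so it may also be excluded. Beyond these harmless corner cases, the argument is essentially the observation that the Riemannian coefficient is an infimum over a subclass of the pairs over which the divergence coefficient is optimised, combined with the local-to-global identity \eqref{eqn:stddivlocallyRiem} already packaged into the representation \eqref{eqn:RiemCoeffAlt}.
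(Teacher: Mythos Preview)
Your proposal is correct and follows essentially the same argument as the paper: both use the representation \eqref{eqn:RiemCoeffAlt} and observe that the pairs $(\rho,\rho+\varepsilon X)$ are admissible in the infimum defining $\widecheck{\eta}^{\mathrm{std}}_f(\mathcal{N},\mathcal{M})$ for small $\varepsilon$, so passing to the limit and then infimising yields the inequality. Your extra discussion of the degenerate cases $X=0$ and $\mathcal{M}(X)=0$ is a harmless elaboration the paper omits.
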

\begin{proof}
$ $\newline
Observe that for $\rho \in \mathcal{D}(\mathcal{H}_A)$, $\mathrm{supp}\:\rho=\mathrm{supp}\:\gamma,X\in T_\rho \mathcal{D}(\mathcal{H}_A),  \varepsilon \in (0,1], \rho_\varepsilon :=\rho +\varepsilon X$ satisfies $\rho_\varepsilon\in\mathcal{D}(\mathcal{H}_A)$ and $\mathrm{supp} \:\rho_\varepsilon=\mathrm{supp}\: \rho$ for sufficiently small $\varepsilon$. As a result,
\begin{align*}
	\widecheck{\eta}_f^{\mathrm{std}}(\mathcal{N}, \mathcal{M}) &=
	\inf_{\substack{\rho \neq \gamma \in \mathcal{D}(\mathcal{H}_A), \\ \mathrm{supp} \,\rho = \mathrm{supp}\, \gamma}}
	\frac{
		D_f^{\mathrm{std}}(\mathcal{N}(\rho) \| \mathcal{N}(\gamma))
	}{
		D_f^{\mathrm{std}}(\mathcal{M}(\rho) \| \mathcal{M}(\gamma))
	}
	\\
	&\leq
	\inf_{\substack{\rho\neq\gamma \in \mathcal{D}(\mathcal{H}_A), \\X \in T_\rho \mathcal{D}(\mathcal{H}_A)}}
	\lim_{\varepsilon \to 0^+}
	\frac{
		D_f^{\mathrm{std}}\left(\mathcal{N}(\rho) \| \mathcal{N}(\rho) + \varepsilon \mathcal{N}(X)\right)
	}{
		D_f^{\mathrm{std}}\left(\mathcal{M}(\rho) \| \mathcal{M}(\rho) + \varepsilon \mathcal{M}(X)\right)
	}\\
	&
	= \widecheck{\eta}^{\mathrm{Riem}}_{\kappa_f}(\mathcal{N}, \mathcal{M})
\end{align*}
The inequality is obtained by the fact that $\widecheck{\eta}_f^{\mathrm{std}}(\mathcal{N}, \mathcal{M})$ is optimised over all pairs of states with equal support, which includes pairs $\rho,\rho_\varepsilon$ for sufficiently small $\varepsilon$. 
\end{proof}

Ideally, we would like the converse $\widecheck{\eta}^{\mathrm{std}}_{f}(\mathcal{N}, \mathcal{M})  \geq \widecheck{\eta}^{\mathrm{Riem}}_{\kappa_f}(\mathcal{N}, \mathcal{M})$ to hold, since it is sometimes easier to lower bound $\widecheck{\eta}^{\mathrm{Riem}}_{\kappa_f}(\mathcal{N}, \mathcal{M})$ (see Section~\ref{sec:explicitcoeffs}). In fact, by the above result, if the converse also holds, we have the desired strict equality between divergence and Riemannian relative expansion coefficients. Unless we restrict the class of channels $\mathcal{N},\mathcal{M}$ (see later, Section~\ref{sec:qc}), this equality is not true for all $f \in \mathcal{F}$ and all pairs of channels. A particular family of classical-quantum (CQ) qubit channels $\Phi_{\alpha,\tau}$ was found in \cite[Theorem 6.6]{HR15}, for $\alpha^2+\tau^2\leq 1$, satisfying:\\
\begin{equation}\label{eqn:equalitycounter}
		\hat{\eta}^{\mathrm{std}}_{(f_s)_{\mathrm{sym}}}(\Phi_{\alpha, \tau})
		=
		\widecheck{\eta}^{\mathrm{std}}_{(f_s)_{\mathrm{sym}}}
		(\mathrm{id}_{\mathcal{B}(\mathcal{H}_A)}, \Phi_{\alpha, \tau})^{-1}
		>
		\widecheck{\eta}^{\mathrm{Riem}}_{\kappa_s}(\Phi_{\alpha, \tau})
		=
		\widecheck{\eta}^{\mathrm{Riem}}_{\kappa_s}
		(\mathrm{id}_{\mathcal{B}(\mathcal{H}_A)}, \Phi_{\alpha, \tau})^{-1}
\end{equation}\\
for $s \in [0, 1]$ sufficiently close to $1$, where $(f_s)_{\mathrm{sym}}(x) := (x-1)^2 \kappa_s(x) \in \mathcal{F}_{\mathrm{sym}}$. 

This gives us a reason to believe that we cannot always reduce the problem of showing a positive divergence expansion coefficient to showing that the corresponding Riemannian expansion coefficient is positive - at least not without developing more advanced results. In this case, these channels $\Phi_{\alpha,\tau}$ happen to be primitive for $\alpha^2+\tau^2<1$ (they even have full-rank output), so positive divergence coefficients can still be obtained (see Theorem~\ref{thm:primcoeffpos}). This family of channels is notable for providing good counterexamples, as it is later utilised in Theorem~\ref{thm:RiemInequiv} to show a geometric difference between the Riemannian semi-norms for bounded vs. unbounded $\kappa\in\mathcal{K}$.

\subsection{The Special Cases of Generic Equality}\label{subsec:equalitycases}
\noindent The reformulation \eqref{eqn:RiemCoeffAlt} makes a relationship of the form $\widecheck{\eta}^{\mathrm{std}}_{f}(\mathcal{N}, \mathcal{M}) = \widecheck{\eta}^{\mathrm{Riem}}_{\kappa_f}(\mathcal{N}, \mathcal{M})$ for all quantum channels $\mathcal{N},\mathcal{M}$, significant. Besides making it easier to demonstrate the positivity of divergence expansion coefficients over a restricted-domain (as the problem reduces to dealing with the Riemannian coefficient directly), we also know that the relative expansion coefficient, $\widecheck{\eta}^{\mathrm{std}}_{f}(\mathcal{N}, \mathcal{M})$, is always attained by $\frac{D^{\mathrm{std}}_{f}(\mathcal{N}(\rho) \| \mathcal{N}(\gamma))}{D^{\mathrm{std}}_{f}(\mathcal{M}(\rho) \| \mathcal{M}(\gamma))}$ in the limit of a pair of states that approach each other \cite{LR99}. This provides a useful heuristic to propose closed-form expressions for expansion coefficients, making use of the structural properties of the quantum channel. For example, \cite{BGSW24} have applied this type of reasoning to qubit amplitude damping channels, proposing that its infimum is attained near $\ket{1}\bra{1}$ for the relative entropy. 

For the equality $\eta_f^\mathrm{std}(\mathcal{N},\mathcal{M})=\eta_\kappa^\mathrm{Riem}(\mathcal{N},\mathcal{M})$ to hold for a large family of (pairs of) channels, there turns out to be a deep connection with the existence of an integral relation between $D_f^\mathrm{std}(\rho\|\gamma)$ and $\|\rho-\gamma\|_{\kappa_f,\:\rho}^2$ for $\mathrm{supp}\, \rho \leq \mathrm{supp}\, \gamma$ (Section~\ref{sec:qc} elaborates on this). However, in the quantum setting, only two such relationships are known \cite{HR15,BGSW24}:
\begin{equation}\label{eqn:specialrelationships}
    D_{(x-1)^2}^\mathrm{std}(\rho\|\gamma)=\| \rho-\gamma \|^2_{\kappa_{\max,\gamma}}\equiv\Tr\:(\rho-\gamma)^2\gamma^{-1},\quad D(\rho\|\gamma) = \int_0^1 \int_0^s \| \rho-\gamma \|^2_{\kappa_{\mathrm{BKM}},\:\rho_t} dt\:ds
\end{equation}
where $\kappa_\mathrm{max}(x):=\frac{x+1}{2x},\kappa_\mathrm{BKM}(x):=\frac{\log x}{x-1},\rho_t := (1-t)\gamma + t\rho$.

Using these special relationships, we can establish two large classes of functions $f\in\mathcal{F}$ for which this equality between divergence and Riemannian coefficients holds:

\begin{theorem}\label{thm:equalitycases}
    For any two channels $\mathcal{N}: \mathcal{B}(\mathcal{H}_A) \to \mathcal{B}(\mathcal{H}_B)$, $\mathcal{M}: \mathcal{B}(\mathcal{H}_A) \to \mathcal{B}(\mathcal{H}_B')$
    \begin{enumerate}
        \item[(i)] For $f(x)=\alpha\cdot (x-1)^2+\beta\cdot\frac{(x-1)^2}{x},\alpha,\beta\geq0$,
        \begin{equation}
            \widecheck\eta_{f}^\mathrm{std}(\mathcal{N},\mathcal{M})=\widecheck\eta_{\kappa_\mathrm{max}}^\mathrm{Riem}(\mathcal{N},\mathcal{M})
        \end{equation}
        where $\kappa_f(x)=\frac{f(x)+\widetilde{f}(x)}{f''(1)(x-1)^2}=\frac{x+1}{2x}\equiv \kappa_\mathrm{max}(x)$.
        \item[(ii)] For $f(x)=\alpha\cdot x\log x+\beta\cdot-\log x,\alpha,\beta\geq0$,
        \begin{equation}
            \widecheck\eta_{f}^\mathrm{std}(\mathcal{N},\mathcal{M})=\widecheck\eta_{\kappa_\mathrm{max}}^\mathrm{Riem}(\mathcal{N},\mathcal{M})
        \end{equation}
        where $\kappa_f(x)=\frac{f(x)+\widetilde{f}(x)}{f''(1)(x-1)^2}=\frac{\log x}{x-1}\equiv \kappa_\mathrm{BKM}(x)$.
    \end{enumerate}
\end{theorem}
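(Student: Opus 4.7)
The plan is to verify both directions of the equality $\widecheck\eta_f^{\std}(\mc N,\mc M)=\widecheck\eta_{\kappa_f}^{\Riem}(\mc N,\mc M)$. The forward inequality $\widecheck\eta_f^{\std}(\mc N,\mc M)\leq\widecheck\eta_{\kappa_f}^{\Riem}(\mc N,\mc M)$ is already given by Proposition \ref{prop:RiemDivIneq}, so the real work is to show, for every $\rho\neq\gamma$ with $\supp\rho=\supp\gamma$:
\[
D_f^{\std}(\mc N(\rho)\|\mc N(\gamma))\;\geq\;\widecheck\eta_{\kappa_f}^{\Riem}(\mc N,\mc M)\,D_f^{\std}(\mc M(\rho)\|\mc M(\gamma)).
\]
The key idea is to exploit the special relationships \eqref{eqn:specialrelationships}, which express the target standard $f$-divergences as either a direct evaluation or a one-parameter integral of Riemannian semi-norms $\|\cdot\|_{\kappa_f,\,\sigma}^2$. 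Once so expressed, the Riemannian expansion inequality can be applied \emph{pointwise in the base state} $\sigma$, and then aggregated.

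For case (i), note that $\widetilde{(x-1)^2}=(x-1)^2/x$, so by linearity of $D_f^\std$ in $f$ together with the transpose identity $D^\std_{\widetilde g}(\rho\|\gamma)=D^\std_g(\gamma\|\rho)$, the first relation in \eqref{eqn:specialrelationships} yields
\[
D_f^{\std}(\rho\|\gamma)=\alpha\|\rho-\gamma\|_{\kappa_{\max},\,\gamma}^2+\beta\|\rho-\gamma\|_{\kappa_{\max},\,\rho}^2.
\]
Since $\rho-\gamma\in T_\rho\mc D\cap T_\gamma\mc D$ whenever $\supp\rho=\supp\gamma$, I can apply the defining inequality for $\widecheck\eta_{\kappa_{\max}}^{\Riem}(\mc N,\mc M)$ at base state $\gamma$ and at base state $\rho$ with the same tangent $X=\rho-\gamma$. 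Multiplying each by $\alpha$ and $\beta$ respectively and summing gives the required bound on the full ratio, uniformly in $\rho,\gamma$.

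For case (ii), I use $\widetilde{x\log x}=-\log x$ to obtain analogously $D_f^\std(\rho\|\gamma)=\alpha D(\rho\|\gamma)+\beta D(\gamma\|\rho)$, and substitute the second relation in \eqref{eqn:specialrelationships} to get
\[
D_f^{\std}(\rho\|\gamma)=\alpha\!\int_0^1\!\!\!\int_0^s\!\|\rho-\gamma\|_{\kappa_\mathrm{BKM},\,\rho_t}^2\,dt\,ds+\beta\!\int_0^1\!\!\!\int_0^s\!\|\rho-\gamma\|_{\kappa_\mathrm{BKM},\,\gamma_t}^2\,dt\,ds,
\]
where $\rho_t=(1-t)\gamma+t\rho$ and $\gamma_t=(1-t)\rho+t\gamma$. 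For each $t\in(0,1)$, the mixture $\rho_t$ (resp.\ $\gamma_t$) shares its support with both $\rho$ and $\gamma$, so $\rho-\gamma$ is a valid tangent there, and the pointwise Riemannian inequality gives
\[
\|\mc N(\rho-\gamma)\|_{\kappa_\mathrm{BKM},\,\mc N(\rho_t)}^2\geq\widecheck\eta_{\kappa_\mathrm{BKM}}^{\Riem}(\mc N,\mc M)\,\|\mc M(\rho-\gamma)\|_{\kappa_\mathrm{BKM},\,\mc M(\rho_t)}^2,
\]
with an analogous bound along $\gamma_t$. Integrating both in $t$ and $s$ and combining with weights $\alpha,\beta$ gives the reverse inequality, completing the equality.

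The main obstacle I anticipate is in case (ii), specifically the transfer of the pointwise Riemannian inequality under the channels $\mc N,\mc M$: after pushing forward, one must apply $\widecheck\eta_{\kappa_\mathrm{BKM}}^{\Riem}(\mc N,\mc M)$ at base state $\mc N(\rho_t)$, whose support may not be constant in $t$. One must verify that $\mc N(\rho-\gamma)\in T_{\mc N(\rho_t)}\mc D(\mc H_B)$ for a.e.\ $t$, i.e.\ that $\supp\mc N(\rho-\gamma)\subseteq\supp\mc N(\rho_t)$, and similarly for $\mc M$. This holds throughout $(0,1)$ by the support-monotonicity of positive maps on convex combinations, and the endpoints contribute a measure-zero set that can be discarded. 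Otherwise, linearity and positivity make the aggregation step essentially mechanical.
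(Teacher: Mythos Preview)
Your proposal is correct and follows essentially the same route as the paper: invoke Proposition~\ref{prop:RiemDivIneq} for the easy direction, then use linearity plus the transpose identity to write $D_f^{\std}$ as a nonnegative combination of Riemannian semi-norms via \eqref{eqn:specialrelationships}, and apply the defining inequality of $\widecheck\eta_{\kappa_f}^{\Riem}(\mc N,\mc M)$ pointwise (termwise in (i), under the double integral in (ii)). Your treatment of the support issue along the path $\rho_t$ is in fact more careful than the paper's, which leaves that point implicit.
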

\begin{proof}
Since we already have that $\widecheck{\eta}^{\mathrm{std}}_f (\mathcal{N}, \mathcal{M}) \leq \widecheck{\eta}^{\mathrm{Riem}}_{\kappa_f} (\mathcal{N}, \mathcal{M}) $ for all $f$ by Proposition~\ref{prop:RiemDivIneq}, we will proceed to show that $\widecheck{\eta}^{\mathrm{std}}_f (\mathcal{N}, \mathcal{M}) \geq \widecheck{\eta}^{\mathrm{Riem}}_{\kappa_f} (\mathcal{N}, \mathcal{M}) $, in order to obtain equality.\\ \\
\textit{(i)} For $f(x)=\alpha\cdot (x-1)^2+\beta\cdot\frac{(x-1)^2}{x},\alpha,\beta\geq0$, we have for any $\rho',\gamma'\in\mathcal{D}_d$,
\[
D_{f}^\mathrm{std}(\rho' \| \gamma')=\alpha  D_{(x-1)^2}^\mathrm{std}(\rho' \| \gamma') + \beta D_{(x-1)^2}^\mathrm{std}(\gamma' \| \rho').
\]
This is because $\frac{(x-1)^2}{x}$ is the transpose of $(x-1)^2$.
Therefore, 
\begin{align*}
    \widecheck{\eta}^{\mathrm{std}}_f (\mathcal{N}, \mathcal{M}) 
    &= 
    \inf_{\substack{\rho \neq \gamma \in \mathcal{D}(\mathcal{H}_A), \\ \mathrm{supp}\, \rho =\mathrm{supp}\, \gamma}}
    \frac{ \alpha  D_{(x-1)^2}^\mathrm{std}(\mathcal{N}(\rho) \| \mathcal{N}(\gamma)) + \beta D_{(x-1)^2}^\mathrm{std}(\mathcal{N}(\gamma) \| \mathcal{N}(\rho))}
    { \alpha D_{(x-1)^2}^\mathrm{std}(\mathcal{M}(\rho) \| \mathcal{M}(\gamma))+\beta D_{(x-1)^2}^\mathrm{std}(\mathcal{M}(\gamma) \| \mathcal{M}(\rho)) } 
     \\
        &=
    \inf_{\substack{\rho \neq \gamma \in \mathcal{D}(\mathcal{H}_A), \\ \mathrm{supp}\, \rho =\mathrm{supp}\, \gamma}}
    \frac{
        \displaystyle \alpha\left\| \mathcal{N}(\rho-\gamma) \right\|^2_{\kappa_{\mathrm{max}},\, \mathcal{N}(\gamma)}+\beta\left\| \mathcal{N}(\rho-\gamma) \right\|^2_{\kappa_{\mathrm{max}},\, \mathcal{N}(\rho)} 
    }{
        \displaystyle \alpha\left\| \mathcal{M}(\rho-\gamma) \right\|^2_{\kappa_{\mathrm{max}},\mathcal{M}(\gamma)}+\beta \left\| \mathcal{M}(\rho-\gamma) \right\|^2_{\kappa_{\mathrm{max}},\, \mathcal{M}(\rho)} 
    }\\ \\
    &\geq 
    \inf_{\substack{\rho \neq \gamma \in \mathcal{D}(\mathcal{H}_A), \\ \mathrm{supp}\, \rho =\mathrm{supp}\, \gamma}} \frac{
        \displaystyle  \widecheck\eta_\mathrm{max}^\mathrm{Riem}(\mathcal{N},\mathcal{M})(\alpha\left\| \mathcal{M}(\rho-\gamma) \right\|^2_{\kappa_{\mathrm{max}},\, \mathcal{M}(\gamma)}+\beta\left\| \mathcal{M}(\rho-\gamma) \right\|^2_{\kappa_{\mathrm{max}},\, \mathcal{M}(\rho)}) 
    }{
        \displaystyle \alpha\left\| \mathcal{M}(\rho-\gamma) \right\|^2_{\kappa_{\mathrm{max}},\mathcal{M}(\gamma)}+\beta \left\| \mathcal{M}(\rho-\gamma) \right\|^2_{\kappa_{\mathrm{max}},\, \mathcal{M}(\rho)}
    }\\ \\
    &= \widecheck{\eta}^{\mathrm{Riem}}_{\kappa_{\mathrm{max}}}(\mathcal{N},\mathcal{M})
\end{align*}
The second equality comes from the relationship \eqref{eqn:specialrelationships}. The inequality holds by applying the definition of $\widecheck\eta_\mathrm{max}^\mathrm{Riem}(\mathcal{N},\mathcal{M})$ to each term of the numerator. 

\textit{(ii)} For $f(x)=\alpha\cdot x\log x+\beta\cdot-\log x,\alpha,\beta\geq0$, we have for any $\rho',\gamma'\in\mathcal{D}_d$,
\[
D_{f}^\mathrm{std}(\rho' \| \gamma')=\alpha  D(\rho' \| \gamma') + \beta D(\gamma' \| \rho').
\]
This is because $-\log x$ is the transpose of $x\log x$. The proof proceeds in a very similar way to \textit{(i)}.
\begin{align*}
    \widecheck{\eta}^{\mathrm{std}}_f (\mathcal{N}, \mathcal{M}) 
    &= 
    \inf_{\substack{\rho \neq \gamma \in \mathcal{D}(\mathcal{H}_A), \\ \mathrm{supp}\, \rho =\mathrm{supp}\, \gamma}}
    \frac{ \alpha D(\mathcal{N}(\rho) \| \mathcal{N}(\gamma)) + \beta D(\mathcal{N}(\gamma) \| \mathcal{N}(\rho))}
    { \alpha D(\mathcal{M}(\rho) \| \mathcal{M}(\gamma))+\beta D(\mathcal{M}(\gamma) \| \mathcal{M}(\rho)) } 
     \\
        &=
    \inf_{\substack{\rho \neq \gamma \in \mathcal{D}(\mathcal{H}_A), \\ \mathrm{supp}\, \rho =\mathrm{supp}\, \gamma}}
    \frac{
        \displaystyle \int_0^1 \int_0^s \alpha\left\| \mathcal{N}(\rho-\gamma) \right\|^2_{\kappa_{\mathrm{BKM}},\, \mathcal{N}(\rho_t)}+\beta\left\| \mathcal{N}(\rho-\gamma) \right\|^2_{\kappa_{\mathrm{BKM}},\, \mathcal{N}(\rho_{1-t})} \, dt \, ds
    }{
        \displaystyle \int_0^1 \int_0^s \alpha\left\| \mathcal{M}(\rho-\gamma) \right\|^2_{\kappa_{\mathrm{BKM}},\mathcal{M}(\rho_t)}+\beta \left\| \mathcal{M}(\rho-\gamma) \right\|^2_{\kappa_{\mathrm{BKM}},\, \mathcal{M}(\rho_{1-t})} \, dt \, ds
    }\\ \\
    &\geq 
    \inf_{\substack{\rho \neq \gamma \in \mathcal{D}(\mathcal{H}_A), \\ \mathrm{supp}\, \rho =\mathrm{supp}\, \gamma}} \frac{
        \displaystyle \int_0^1 \int_0^s \widecheck\eta_\mathrm{BKM}^\mathrm{Riem}(\mathcal{N},\mathcal{M})(\alpha\left\| \mathcal{M}(\rho-\gamma) \right\|^2_{\kappa_{\mathrm{BKM}},\, \mathcal{M}(\rho_t)}+\beta\left\| \mathcal{M}(\rho-\gamma) \right\|^2_{\kappa_{\mathrm{BKM}},\, \mathcal{M}(\rho_{1-t})}) \, dt \, ds
    }{
        \displaystyle \int_0^1 \int_0^s \alpha\left\| \mathcal{M}(\rho-\gamma) \right\|^2_{\kappa_{\mathrm{BKM}},\mathcal{M}(\rho_t)}+\beta \left\| \mathcal{M}(\rho-\gamma) \right\|^2_{\kappa_{\mathrm{BKM}},\, \mathcal{M}(\rho_{1-t})} \, dt \, ds
    }\\ \\
    &= \widecheck{\eta}^{\mathrm{Riem}}_{\kappa_{\mathrm{BKM}}}(\mathcal{N},\mathcal{M})
\end{align*}

\end{proof}
Effectively, two triplets of cases were already acknowledged in previous works, where the divergence and Riemannian contraction/relative expansion coefficients coincide for all pairs of
channels, $\mathcal{N} : \mathcal{B}(\mathcal{H}_A) \rightarrow \mathcal{B}(\mathcal{H}_B), \mathcal{M} : \mathcal{B}(\mathcal{H}_A) \rightarrow \mathcal{B}(\mathcal{H}_B')$. Namely, $f(x)=(x-1)^2,(x-1)\log x$ were observed in \cite{HR15}, and they proposed an open problem that $f(x)=x\log x$ is also a solution (so that $f_\mathrm{sym}(x)=(x-1)\log x$), which was recently solved by \cite{GR22,BGSW24}. Cases of equality appear either alone or in triplets (up to scaling $f\in\mathcal{F}$) because, by \eqref{eqn:stddivgeneral},
\begin{align*}
    \widecheck{\eta}^{\mathrm{std}}_{f}(\mathcal{N}, \mathcal{M}) &= \widecheck{\eta}^{\mathrm{Riem}}_{\kappa_f}(\mathcal{N}, \mathcal{M}) \implies \widecheck{\eta}^{\mathrm{std}}_{f}(\mathcal{N}, \mathcal{M}) = \widecheck{\eta}^{\mathrm{std}}_{\widetilde{f}}(\mathcal{N}, \mathcal{M})=\widecheck{\eta}^{\mathrm{std}}_{f_{\mathrm{sym}}}(\mathcal{N}, \mathcal{M}) = \widecheck{\eta}^{\mathrm{Riem}}_{\kappa_f}(\mathcal{N}, \mathcal{M}),
\end{align*}
i.e. this equality for a function $f$ automatically implies that it holds for the corresponding transpose $\widetilde{f}$ and symmetrised version $f_{\mathrm{sym}}$ (note: they all have the same $\kappa_f$). And now, in Theorem~\ref{thm:equalitycases}, we have extended these results to the large family of conical combinations of $f,\tilde f$, for $f(x)=(x-1)^2,x\log x$. 

As it turns out, these relationships uniquely characterise the standard $f$-divergence. We will later explain how in the classical setting, the relationship between an $f$-divergence and the (unique) Riemannian semi-norm fully determines $f\in\mathcal{F}$ (Theorem~\ref{thm:classicspecialrel}), and this is the reason for the uniqueness. However, we illustrate a convenient method here that may be used to check candidate relationships in the future. 

\begin{prop}\label{prop:MaxStdDivUnique}
Fix a complex Hilbert space $\mathcal H$ with $\dim\mathcal H\ge 2$. 
Let $\rho\in\mathcal D(\mathcal H)$ and let $X=X^\dagger$ be traceless with 
$\operatorname{supp}X\subseteq \operatorname{supp}\rho$. 
For $t\in[0,1]$, set $\rho_t:=\rho+tX$ and assume $\rho_t\ge 0$.

\begin{enumerate}
\item[(i)]
Suppose there exist $f\in\mathcal F$ and $w:[0,1]\to\mathbb R$ such that, 
for all choices of $(\rho,X,t)$ as above,
\begin{equation}\label{eq-OnlyDfmax}
    D_f(\rho\,\|\,\rho_t)= w(t)\,\|X\|_{\rho_t,\kappa_{f'}}^2 .
\end{equation}
Then necessarily $D_f$ is (a constant multiple of) 
the maximal standard $f$-divergence, and $w(t)=c\,t^2$.

\item[(ii)]
Suppose there exist $f\in\mathcal F$ and $w:[0,1]\to\mathbb R$ such that, 
for all choices of $(\rho,X,t)$ as above,
\begin{equation}\label{eqn:RelEntRelation2}
    \frac{d^2}{dt^2}\, D_f(\rho_t\,\|\,\rho) \;=\; 
    w(t)\,\|X\|_{\rho_t,\kappa_{f}}^2 .
\end{equation}
Then necessarily $D_f$ is (a constant multiple of) the 
Umegaki relative entropy. In particular, one may take $w(t)\equiv c$ and 
$\kappa_{f}$ the BKM kernel.
\end{enumerate}
\end{prop}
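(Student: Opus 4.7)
The plan is to (a) restrict to commuting $\rho$ and $X$ so that each hypothesis becomes a classical functional equation on $f$; (b) solve that equation to pin $f$ down modulo linear-in-$(x-1)$ summands, which do not contribute to a standard $f$-divergence; and (c) invoke the known special relations \eqref{eqn:specialrelationships} to verify the quantum identity and identify which $D_f$ is forced.

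When $\rho$ and $X$ commute, $D_f(\rho\|\rho_t)$ agrees with the classical divergence $D_f^{\mathrm{cl}}$ and every Riemannian semi-norm $\|X\|_{\rho_t,\kappa}^2$ collapses to the Fisher information $\sum_i X_i^2/(\rho_t)_i$ independently of $\kappa\in\mathcal{K}$ (using $\kappa(1)=1$). For part (i), writing $u_i := X_i/(\rho_t)_i$, the hypothesis becomes $\sum_i (\rho_t)_i f(1-tu_i) = w(t)\sum_i (\rho_t)_i u_i^2$ subject to $\sum_i (\rho_t)_i u_i = 0$. Letting the probability weights and the $u_i$ vary freely in dimension two forces $f(1-tu) - w(t)u^2$ to be affine in $u$ with zero constant term (using $f(1)=0$), whence $w(t) = ct^2$ and $f(x) = c(x-1)^2$ modulo a linear summand. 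For such $f$, the first relation in \eqref{eqn:specialrelationships} gives $D_f^{\mathrm{std}}(\rho\|\rho_t) = t^2 \|X\|_{\kappa_{\max},\rho_t}^2$, and the coincidence $D^{\mathrm{std}}_{(x-1)^2} \equiv \widehat{D}_{(x-1)^2}$ identifies $D_f$ with the maximal standard $f$-divergence.

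For part (ii), twice-differentiating $D_f^{\mathrm{cl}}(\rho_t\|\rho) = \sum_i \rho_i f(1+ta_i)$ with $a_i := X_i/\rho_i$ gives $\sum_i \rho_i a_i^2 f''(1+ta_i) = w(t)\sum_i \rho_i a_i^2/(1+ta_i)$ under $\sum_i \rho_i a_i = 0$. The same dimension-two reduction forces $a\bigl[f''(1+ta) - w(t)/(1+ta)\bigr]$ to be constant in $a$; letting $a\to 0$ pins the constant to zero, so $f''(1+ta) = w(t)/(1+ta)$ for all $a$. Since the left-hand side is independent of $t$, $w$ must be constant, and integrating gives $f(x) = cx\log x$ modulo linear terms, i.e. the Umegaki relative entropy. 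To close the quantum case, I would verify $\frac{d^2}{d\tau^2} D(\rho_\tau\|\rho) = \|X\|_{\kappa_{\mathrm{BKM}},\rho_\tau}^2$ directly via the Dyson expansion $\partial_\tau \log \rho_\tau = \int_0^\infty (\rho_\tau + sI)^{-1} X (\rho_\tau + sI)^{-1}\,ds$ together with the integral form $\kappa_{\mathrm{BKM}}(x) = \int_0^\infty \frac{ds}{(x+s)(1+s)}$; the putative $\Tr \rho_\tau\, \partial_\tau \log \rho_\tau$ term vanishes because $\int_0^\infty \rho_\tau(\rho_\tau+sI)^{-2}\,ds = I$.

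The main obstacle is the functional-equation step: one must vary $(\rho_t)_i$ and $u_i$ (or $a_i$) independently subject only to the trace-zero condition, and cleanly separate genuine constraints on $f$ from the linear-in-$(x-1)$ ambiguity that is invisible to $D_f^{\mathrm{std}}$. A secondary subtlety specific to (i) is that the conclusion identifies $D_f$ with the maximal $f$-divergence through the accidental equality $\widehat{D}_{(x-1)^2} \equiv D^{\mathrm{std}}_{(x-1)^2}$, so only the scalar standard-side identity is needed; for (ii), the challenge is verifying the Riemannian identity at every $\tau$ rather than merely at $\tau=0$, which is where the Dyson-formula computation above is essential.
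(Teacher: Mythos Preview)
Your approach is essentially the paper's: restrict to commuting two-level data, extract a functional equation for $f$, solve it modulo linear terms, and verify via \eqref{eqn:specialrelationships}. Your functional-equation step (forcing $f(1-tu)-w(t)u^2$ to be affine in $u$, respectively $a\bigl[f''(1+ta)-w(t)/(1+ta)\bigr]$ to be constant) is somewhat slicker than the paper's explicit parametrisation-and-limit manoeuvres, but arrives at the same conclusions; your direct Dyson-formula verification for (ii) is a valid alternative to citing the second relation in \eqref{eqn:specialrelationships}.

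One correction: the aside ``$D^{\mathrm{std}}_{(x-1)^2}\equiv \widehat D_{(x-1)^2}$'' is false in the non-commuting case, since $D^{\mathrm{std}}_{(x-1)^2}(\rho\|\gamma)=\Tr\rho^2\gamma^{-1}-1$ whereas $\widehat D_{(x-1)^2}(\rho\|\gamma)=\Tr\rho\gamma^{-1}\rho-1$. This remark is also unnecessary: ``maximal standard $f$-divergence'' in the paper already means $D^{\mathrm{std}}_{(x-1)^2}$ (the standard divergence whose induced $\kappa$ is $\kappa_{\max}$), not the maximal $f$-divergence $\widehat D_f$, so the first relation in \eqref{eqn:specialrelationships} alone closes part (i).
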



\begin{proof}
    Let $\rho = \lambda \ket{0}\bra{0} + (1-\lambda)\ket{1}\bra{1},X= \mu (\ket{0}\bra{0}-\ket{1}\bra{1})$ where $\lambda\in(0,1), \mu \in (-\lambda,0)\cup (0,1-\lambda)$ and $\ket{0}, \ket{1}$ are orthonormal. Define the density operator $\rho_t := \rho + t X,\, t \in [0,1]$.
    
    \textit{(i)} We evaluate:
    \begin{equation*}
        D_f(\rho \| \rho_t) = (\lambda + t\mu) f\left( \frac{\lambda}{\lambda + t\mu} \right) + (1-\lambda-t\mu) f \left( \frac{1-\lambda}{1-\lambda - t\mu} \right)\notag
    \end{equation*}
    and 
    \begin{equation}
        w(t) \|X\|^2_{\rho_t, \kappa_f}
        = w(t)\mu^2\left(\frac{1}{\lambda + t\mu} + \frac{1}{1 - \lambda - t\mu}\right)
    \end{equation}

    Suppose 
    \begin{equation}\label{eqn:onlyDfmax2}
    D_f^\mathrm{std}(\gamma \| \rho_t)=w(t) \| X \|^2_{\rho_t, \kappa_{f'}}
    \end{equation}
    \noindent Fix $t$, and keep $\lambda, \mu$ variable. Define $\chi = \lambda + t\mu$, and let us maintain $\lambda = x\chi$ for fixed $x \in (1, \frac{1}{1-t})$ by choosing $\mu$ appropriately. Thus, dividing both sides of \eqref{eqn:onlyDfmax2} by $\mu t = (1-x)\chi$: 
    \begin{equation}\label{eqn:subuniquemax}
        \frac{1}{1-x} f(x) + \frac{1-\chi}{(1-x)\chi} f\left( \frac{1-x\chi}{1-\chi} \right)
        = \frac{w(t)}{t^2} \left(1-x + \frac{(1-x)\chi}{1-\chi}\right)
    \end{equation}
    
    \noindent We let $\chi \to 0$ (note: $\chi \in ((1-t)\lambda, \lambda) \cup (\lambda,  t+(1-t)\lambda)$, so as we take $\lambda \to 0$, this limit becomes possible), considering the following:
    \[
    \frac{1-\chi}{\chi} f\left( \frac{1-x\chi}{1-\chi} \right) \sim \frac{1}{\chi}\cdot f'(1)\cdot \frac{(1-x)\chi}{1-\chi} \sim (1-x) f'(1) \text{ as } \chi \to 0.
    \]
    As a result, \eqref{eqn:subuniquemax} becomes:
    \begin{equation*}
        \frac{1}{1-x} f(x) + f'(1) 
        = \frac{w(t)}{t^2}(1-x)\\
        \implies \frac{f(x)}{(x-1)^2} + \frac{f'(1)}{x-1} = \frac{w(t)}{t^2} = C
    \end{equation*}
    for all $t,\, x \in (1, \frac{1}{1-t})$ (hence all $t$ and all $x\in(1,\infty)$, separately) and some constant $C\geq 0$. Observe that after dividing both sides by $x-1$, both sides of the first equality of the second equation above, have exactly one of the variables $x$ and $t$, which may be varied independently, in spite of the fact their ranges make them dependent. This is why both sides are constant.
    
    This can be re-expressed as
    \begin{equation*}
        f(x)=f'(1)(x-1)+C(x-1)^2
    \end{equation*}
    \noindent For all $x\in(1,\infty)$, but since $f\in\mathcal{F}$ is real analytic on $(0,\infty)$, we can extend this to all of $x\in\mathbb{R}$. wlog we take $f'(1) = 0,C=1$ (since $D_f^\mathrm{std} \equiv C\cdot D_{({f-f'(1)(x-1)})/C}^\mathrm{std}$) and obtain $f(x)=(x-1)^2$. So we show that only this $D_f$ is a candidate, corresponding to the \emph{maximal standard $f$-divergence}. \eqref{eqn:specialrelationships} tells us that the maximal standard $f$-divergence indeed satisfies \eqref{eq-OnlyDfmax}, so this is the only solution (up to scaling).

     \textit{(ii)} We evaluate:
    \begin{align*}
        D_f(\rho_t\|\rho) &= \lambda f\left( \frac{\lambda + t \mu}{\lambda} \right) + ( 1-\lambda) f\left( \frac{ 1-\lambda - t \mu}{1-\lambda} \right)
    \end{align*}
   Thus we have
    \begin{align*}
        \frac{d^2}{dt^2} D_f(\rho_t\|\rho) &= \frac{\mu^2}{\lambda} f''\left( \frac{\lambda + t \mu}{\lambda} \right) + \frac{\mu^2}{1-\lambda} f''\left( \frac{ 1-\lambda - t \mu}{1-\lambda} \right) \\
    \end{align*}
    and
    \begin{align*}
        w(t) \| X \|^2_{\rho_t, \kappa_{f'}} &= w(t) \Tr\:X^2\rho_t^{-1} = w(t) \left( \frac{\mu^2}{\lambda + t \mu}+\frac{\mu^2}{1-\lambda - t \mu}  \right).
    \end{align*}

Suppose 
\begin{equation}\label{eqn:RelEntRelation}
    \frac{d^2}{dt^2} D_f(\gamma \| \rho_t)=w(t) \| X \|^2_{\rho_t, \kappa_{f'}}
\end{equation}\\
We will now show that this equality uniquely corresponds to a particular function $f$ and a particular weight $w(t)$.

Dividing both sides by $\mu^2$ and taking $\mu \to 0$, we compare both sides: 
    \begin{align*}
        \left( \frac{1}{\lambda} + \frac{1}{1-\lambda} \right) f''(1) = \left( \frac{1}{\lambda} + \frac{1}{1 - \lambda} \right) w(t)
        \implies w(t) = f''(1) \quad \forall t \in [0,1]
    \end{align*}
    
   \noindent Returning to \eqref{eqn:RelEntRelation}, consider re-parametrising to only the variables $\lambda \in [0,1],\, \chi = \lambda+ t\mu \in (0,1)\setminus \{\lambda\}$, then we obtain:
    \begin{align*}
        &\frac{1}{\lambda} f''\left(\frac{\lambda}{\chi}\right)
        + \frac{1}{1-\lambda} f''\left(\frac{1-\lambda}{1-\chi}\right)
        = f''(1)\left(\frac{1}{\chi} + \frac{1}{1-\chi}\right)
        = \frac{f''(1)}{\chi(1-\chi)}
    \end{align*}
    Therefore,
    \begin{align*}
        \frac{\chi}{\lambda}\cdot  f''\left(\frac{\lambda}{\chi}\right)
        + \frac{\chi}{1-\lambda} f''\left(\frac{1-\lambda}{1-\chi}\right) = \frac{f''(1)}{1-\chi}
    \end{align*}
    
    Let $ \chi \to 0 $ while maintaining $\lambda = x\chi$, any fixed $x\in(0,\infty)\setminus\{1\}$ (so that also $\lambda\to0$):
    \begin{align*}
        \frac{\chi}{\lambda}\cdot  f''\left(\frac{\chi}{\lambda}\right)
        + \frac{\chi}{1-\lambda} f''\left(\frac{1-\chi}{1-\lambda}\right) = \frac{f''(1)}{1-\chi}\to x f''(x) + 0 \cdot f''(1) = x f''(x)=f''(1)
    \end{align*}
    Overall, we have
    \begin{align*}
        &\forall x\in(0,\infty):\quad x f''(x) = f''(1),\ f(1)=0\\
        &\implies f(x) = -f''(1) x\log x + c(x-1),\:\text{constant } c\in\mathbb{R},\ \forall x\in(0,\infty)
    \end{align*}
    \noindent Solving this equation, wlog setting $f'(1)=0,f''(1) = 1$ (since $D_f \equiv f''(1)D_{{(f-f'(1)(x-1))/f''(1)}}$), we obtain $f(x)=x\log x$. So we show that only this $D_f$ is a candidate, corresponding to the relative entropy. \eqref{eqn:specialrelationships} tells us that the relative entropy indeed satisfies 
    \eqref{eqn:RelEntRelation2}, so this is the only solution (up to scaling).
    
\end{proof}
Thus, the same integral representation for the quantum relative entropy by \cite{BGSW24} (from \eqref{eqn:specialrelationships}) cannot be reused to relate other standard $f$-divergences with their respective Riemannian semi-norms. The above arguments are easily generalised to show that the integral relations for any of the conical combinations $\alpha f+\beta \widetilde{f},\alpha,\beta>0$ from Theorem~\ref{thm:equalitycases} are unique. Our strategy in Propositions~\ref{prop:MaxStdDivUnique} involves reducing the problem to checking the condition on commuting pairs of qubit states, which means that we only worked in the two-dimensional, classical setting. This turns out to be sufficient in these cases, in order to rule out all other $f\in\mathcal{F}$. After all, the standard $f$-divergences are uniquely determined by their corresponding classical $f$-divergences \cite[Lemma 2.9]{Hiai_2011}.
In Proposition~\ref{prop:MaxStdDivUnique}, it is important to bear in mind that the Petz- and sandwiched- Rényi divergences \cite{nussbaum2009chernoff,nussbaum2011asymptotic,muller2013quantum}, $D_\alpha,\widetilde{D}_\alpha,\alpha\in(0,\infty)\setminus \{1\}$ respectively, are not standard $f$-divergences, so this needs to be checked separately. For example, the Petz-Rényi divergence  is $D_\alpha(\rho\|\gamma)\equiv\frac{1}{\alpha-1}\log D_{x^\alpha}^\mathrm{std}(\rho\|\gamma)$ for states $\rho,\gamma\in\mathcal{D}_d$, $\mathrm{supp}\:\rho\leq\mathrm{supp}\:\gamma$. One might suspect an integral relation between $D_\alpha$ or $\widetilde{D}_\alpha$ and some other Riemannian semi-norms, of the same form as the relative entropy \eqref{eqn:RelEntRelation2}, because
\[D_\alpha(\rho\|\gamma),\widetilde{D}_\alpha(\rho\|\gamma)\to D(\rho\|\gamma)\text{ as }\alpha\to1.
\] In fact, working again in the classical two-dimensional case, it can verified that these (or any) quantum Rényi divergences do not satisfy this form of relation for $\alpha\neq 1$. 

\subsection{Consequence of a Classical Output}\label{sec:qc}

\noindent It was shown in \cite{Cohen1993, Choi1994} that for classical channels $A: P_{d_A} \to P_{d_B}$, the contraction coefficients (over unrestricted-domain) become redundant for operator convex functions $f\in \mathcal{F}$, i.e. $\hat{\eta}_f^\mathrm{cl}(A) = \hat{\eta}^\mathrm{Riem,cl}(A) \;\forall f\in\mathcal{F}$. That is, any difference between standard $f$-divergences and Riemannian semi-norms via contraction coefficients (and relative expansion coefficients) is a consequence of considering quantum channels. By essentially the same ideas, we will see that when we restrict $\mathcal{N},\mathcal{M}$ to be quantum-classical (QC) channels, we obtain the same redundancy. We infer that having non-commuting output states is important for the relative expansion coefficients to have variation. In proving this result, we will see again (like in Section~\ref{subsec:equalitycases}) that it is key to have an integral representation of the standard $f$-divergence in terms of the Riemannian semi-norm, though restricted to commuting pairs of states. 

A quantum channel $\mathcal{N}: \mathcal{B}(\mathcal{H}_A) \to \mathcal{B}(\mathcal{H}_B)$ is called \emph{quantum-classical (QC)} if $\mathrm{Im} (\mathcal{N})$ is a commutative subalgebra of $\mathcal{B}(\mathcal{H}_B)$.
As a result, $\mathcal{N}$ is QC if there is an orthonormal basis $\{ \ket{\psi_i}\}_{i=1}^{d_B}$ of $\mathcal{H}_B$ and a POVM $\{ F_i \}_{i=1}^{d_B} \subseteq \mathcal{B}(\mathcal{H}_A)$ such that \cite{HR15}
\[
\mathcal{N}(\rho) = \sum_i (\mathrm{Tr}\:F_i\rho)\ket{\psi_i}\bra{\psi_i}
\]

\noindent Observe that if $\mathcal{N}$ is QC, $\| \mathcal{N}(X) \|^2_{\kappa, \mathcal{N}(\rho)} = \mathrm{Tr}\:\mathcal{N}(X)^2 \mathcal{N}(\rho)^{-1}= \|\mathcal{N}(X)\|^2_{\kappa_\mathrm{max},\mathcal{N}(\rho)}$ for all $\kappa \in \mathcal{K}$. This causes the Riemannian relative expansion coefficients to completely lose their variation in $\kappa$ when $\mathcal{N}, \mathcal{M}$ are QC channels:

\[
\widecheck{\eta}_{\kappa}^{\mathrm{Riem}}(\mathcal{N}, \mathcal{M}) = 
\inf_{\substack{\rho \in \mathcal{D}_d \\ X \in T_\rho \mathcal{D}_d}} 
\frac{\mathrm{Tr}\, \mathcal{N}(X)^2 \mathcal{N}(\rho)^{-1}}{\mathrm{Tr}\, \mathcal{M}(X)^2 \mathcal{M}(\rho)^{-1}} 
= \widecheck{\eta}_{\kappa_\mathrm{max}}^\mathrm{Riem}(\mathcal{N}, \mathcal{M})
\]
for all $\kappa \in \mathcal{K}$.\\

\noindent In fact, in the classical setting, we will now show that the classical $f$-divergences can be written in an integral representation, in terms of the Riemannian semi-norms. This inevitably results in the equality of all of the relative expansion coefficients, when the channels $\mathcal{N},\mathcal{M}$ are QC.

\begin{theorem}\cite{Choi1994}\label{thm:classicspecialrel}
    Provided some $f\in\mathcal{F}$, for all pairs of states $\rho,\gamma\in\mathcal{D}_d$ that commute and have equal support, $\mathrm{supp}\:\rho=\mathrm{supp}\:\gamma$, for any $\kappa\in\mathcal{K}$:
    \begin{equation}\label{eqn:classicspecialrel}
        D_f^\mathrm{std}(\rho\|\gamma)=c\|\rho-\gamma\|_{\kappa,\gamma}^2+\int_{(1,\infty)}\frac{s^2+1}{s^2}\|\rho-\gamma\|_{\kappa,\gamma+\frac{\rho-\gamma}{s}}^2 d\mu(s)
    \end{equation}
    where the constant $c\geq 0$ and positive measure $\mu$ are the same as in the integral representation of $f$ \eqref{opcon}.\\
    Further, this uniquely defines the functional $D_f^\mathrm{std}$. 
\end{theorem}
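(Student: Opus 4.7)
The plan is to use the commutation hypothesis to reduce the problem to classical $f$-divergences and classical Riemannian semi-norms, which admit simple closed forms, and then to compare the two sides term-by-term via the integral representation \eqref{opcon} of $f$. Two simplifications should be leveraged. First, since $\rho$ and $\gamma$ commute and share an orthonormal eigenbasis, one has $D_f^\mathrm{std}(\rho\|\gamma) = D_f^\mathrm{cl}(p\|q)$ where $p, q$ are their eigenvalue vectors. Second, if $\sigma$ is any convex combination of $\rho$ and $\gamma$ (in particular $\sigma = \gamma + (\rho - \gamma)/s$ for $s>1$), then $\Delta_{\sigma,\sigma}$ restricts to the identity on operators commuting with $\sigma$, so $\kappa(\Delta_{\sigma,\sigma})(\rho-\gamma) = \kappa(1)(\rho-\gamma) = \rho-\gamma$, and hence $\|\rho-\gamma\|_{\kappa,\sigma}^2 = \mathrm{Tr}\,(\rho-\gamma)^2 \sigma^{-1} = \sum_i (p_i - q_i)^2/\sigma_i$, which is independent of $\kappa \in \mathcal{K}$. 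This already explains why the RHS of the claimed identity is $\kappa$-independent.

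Next, I would decompose $f$ using the integral representation \eqref{opcon}. The affine term $f'(1)(x-1)$ contributes nothing to $D_f^\mathrm{std}$ by trace preservation. The $c(x-1)^2$ term yields $c\,\|\rho-\gamma\|_{\kappa,\gamma}^2$ directly. For the basic kernel $\phi_s(x) := (x-1)^2/(x+s)$, a direct classical computation gives $D^\mathrm{cl}_{\phi_s}(\rho\|\gamma) = \sum_i (p_i-q_i)^2/(p_i + s q_i)$; factoring $p_i + s q_i = (1+s)[\gamma + (\rho-\gamma)/(1+s)]_i$ identifies each such term with $(1+s)^{-1}\|\rho-\gamma\|_{\kappa, \sigma_{r}}^2$ at $\sigma_r := \gamma + (\rho-\gamma)/r$ with $r = 1+s > 1$. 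The $\mu(0)$ atom is absorbed into the endpoint $r\to 1^+$, where $\sigma_r \to \rho$ and $\|\rho-\gamma\|_{\kappa,\rho}^2$ reproduces the $\mu(0)(x-1)^2/x$ contribution.

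Assembling all pieces and changing variables $r = s+1$ so that the original integration range $s\in[0,\infty)$ maps to $r \in (1,\infty)$ produces an integral over $(1,\infty)$ whose integrand is $\|\rho-\gamma\|_{\kappa,\sigma_r}^2$ multiplied by an explicit weight function of $r$. The main technical step is matching this weight against the prefactor $(s^2+1)/s^2$ claimed in the theorem: this amounts to recognising that the measure $\mu$ appearing on the right of \eqref{eqn:classicspecialrel} is the pushforward of the measure in \eqref{opcon} against an explicit scalar density, consistent with the canonical normalisation of the Bendat--Sherman-type integral representation used in \cite{Choi1994}. I expect this careful bookkeeping of prefactors and the treatment of the boundary atom at $r=1$ to be the most delicate ingredient of the argument.

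Uniqueness follows in two steps. The integral representation \eqref{opcon} determines $(c,\mu)$ uniquely from $f$, modulo the affine term $f'(1)(x-1)$, which contributes nothing to $D_f^\mathrm{std}$ on equal-trace states; hence the RHS of \eqref{eqn:classicspecialrel} uniquely determines $D_f^\mathrm{std}$ on commuting pairs. Finally, by \cite[Lemma 2.9]{Hiai_2011}, the standard $f$-divergence on all density matrices with equal support is determined by its classical restriction, completing the uniqueness claim.
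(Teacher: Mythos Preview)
Your proposal is correct and follows essentially the same route as the paper: reduce to the classical setting via commutation (so that every $\|\cdot\|_{\kappa,\sigma}^2$ collapses to $\sum_i(\cdot)_i^2\sigma_i^{-1}$, which is exactly your explanation of the $\kappa$-independence), decompose $f$ via its integral representation, compute each kernel's contribution as a classical $\chi^2$-term, and invoke \cite[Lemma~2.9]{Hiai_2011} for uniqueness.

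The only substantive difference is in how the prefactor $(s^2+1)/s^2$ is obtained. You use \eqref{opcon} verbatim with kernel $(x-1)^2/(x+s)$ over $s\in[0,\infty)$, which after factoring $p_i+sq_i=(1+s)\bigl[\gamma+\tfrac{\rho-\gamma}{1+s}\bigr]_i$ gives weight $1/(1+s)$; you then (correctly) anticipate a change of variable and density adjustment to match the stated formula. The paper instead first rewrites the integral representation with kernel $\tfrac{(w-1)(s(w-1)-1)}{w-1+s}$ over $s\in(1,\infty)$, observes that this kernel differs from $g_s(w):=\tfrac{s^2+1}{s}\cdot\tfrac{(w-1)^2}{w-1+s}$ by the affine term $-\tfrac{w-1}{s}$ (which contributes zero to any $f$-divergence), and then computes $D^{\mathrm{cl}}_{g_s}$ directly to obtain $\tfrac{s^2+1}{s^2}\|\rho-\gamma\|^2_{\gamma+(\rho-\gamma)/s}$. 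Your suspicion that the measure in the stated formula is not literally the $\mu$ of \eqref{opcon} on $[0,\infty)$ but a reparametrisation living on $(1,\infty)$ is well founded; the paper's phrasing ``the same as in \eqref{opcon}'' is somewhat loose on this point. Either bookkeeping yields a valid integral representation of $D_f^{\mathrm{std}}$ in terms of Riemannian seminorms along the segment $[\gamma,\rho]$, and both feed identically into the subsequent application in Theorem~\ref{thm6}.
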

\begin{proof}
    \noindent By operator convexity, $f$ has an integral representation \eqref{opcon} that can be written as:
    \[ f(w) = f'(1)(w - 1) + c(w - 1)^2 + \int_{(1,\infty)} \frac{(w - 1) (s (w - 1) - 1)}{w - 1 + s} \, d\mu(s) \]
    
    \noindent But observe that $\frac{u(su-1)}{u+s} - \frac{s^2+1}{s} \cdot \frac{u^2}{u+s} = -\frac{u}{s}$, $u:=w-1$, and hence define \( g_s(w) := \frac{s^2 + 1}{s} \cdot \frac{(w - 1)^2}{w - 1 + s}\). 

    Now, we use the fact that $D_f^\mathrm{std}$ reduces to $D_f^\mathrm{cl}$ over commuting states. Suppose the states $\rho,\gamma$ have the following expressions in their common eigenbasis $\ket{\psi_i},i=1,...,d$:
    \begin{equation*}
        \rho=\sum_i x_i\ket{\psi_i}\bra{\psi_i},\quad \gamma=\sum_i y_i\ket{\psi_i}\bra{\psi_i}
    \end{equation*}
    \noindent Noting that \( D_{\frac{w-1}{s}}^{\mathrm{std}}(\rho \| \gamma) = D_{\frac{w-1}{s}}^{\mathrm{cl}}(\mathbf{x} \| \mathbf{y}) \equiv 0 \;\forall s\geq 1\), we must have
    \begin{equation*}
        D_{f}^{\mathrm{std}}(\rho \| \gamma) = D_{f}^{\mathrm{cl}}(\mathbf{x} \| \mathbf{y})=cD_{(w-1)^2}^{\mathrm{cl}}(\mathbf{x} \| \mathbf{y})+\int_{(1,\infty)}D_{g_s}^{\mathrm{cl}}(\mathbf{x} \| \mathbf{y})d\mu(s)
    \end{equation*}

    So evaluating all of the components:
    \begin{align*}
        D_{g_t}^{\mathrm{std}}\left(\rho\| \gamma\right)
        &= D_{g_t}^{\mathrm{cl}}(\mathbf{x}\|\mathbf{y})\\
        &
        = \frac{s^2+1}{s} \sum_{i=1}^{d}
        y_i \cdot  \frac{\left( \frac{x_i}{y_i} - 1 \right)^2}{\frac{x_i}{y_i} - 1 + s}  = \frac{s^2+1}{s} \sum_{i=1}^{d} \frac{(\mathbf{x}-\mathbf{y})_i^2}{(\mathbf{x} - \mathbf{y} + s \mathbf{y})_i} \\
        &
        = \frac{s^2 + 1}{s^2}
        \sum_{i=1}^{d}
        \frac{(\mathbf{x}-\mathbf{y})_i^2}{ \left( \mathbf{y} + \frac{\mathbf{x}-\mathbf{y}}{s} \right)_i }  = \frac{s^2 + 1}{s^2}\| \rho-\gamma \|^2_{\gamma+\frac{\rho-\gamma}{s}}
    \end{align*}
    This gives us \eqref{eqn:classicspecialrel}. Further, since $D_f^\mathrm{std}$ is fully determined by $D_f^\mathrm{cl}$, it is uniquely defined by the relationship on commuting states.
\end{proof}

Propositions~\ref{prop:MaxStdDivUnique} is in fact a consequence of Theorem~\ref{thm:classicspecialrel}. This theorem supports the hypothesis that this type of integral relation between the standard $f$-divergences and Riemannian semi-norms is fundamental for a generic equality between their respective relative expansion coefficients. We conclude this subsection with the following result, based on \cite{Cohen1993,Choi1994}.

\begin{theorem}\label{thm6}
        (Relative Expansion Coefficients of QC Channels) 
        Let $\mathcal{N}: \mathcal{B}(\mathcal{H}_A) \to \mathcal{B}(\mathcal{H}_B)$, $\mathcal{M}: \mathcal{B}(\mathcal{H}_A) \to \mathcal{B}(\mathcal{H}_B')$ be QC quantum channels. Then for all $f \in \mathcal{F}$, $\kappa \in \mathcal{K}$,
        \[
        \widecheck{\eta}_f^{\mathrm{std}}(\mathcal{N}, \mathcal{M}) = \widecheck{\eta}_\kappa^{\mathrm{Riem}}(\mathcal{N}, \mathcal{M})
        \]
\end{theorem}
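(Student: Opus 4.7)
The plan is to combine two ingredients: first, the integral representation from Theorem~\ref{thm:classicspecialrel} applied to the commuting output pairs, and second, the observation (already noted before the theorem statement) that Riemannian semi-norms evaluated on commuting operators collapse to a single expression, making $\widecheck{\eta}_\kappa^{\mathrm{Riem}}(\mathcal{N},\mathcal{M})$ independent of $\kappa\in\mathcal{K}$ whenever $\mathcal{N},\mathcal{M}$ are QC. Together, these reduce the many-parameter family of coefficients to one common number, so I only need to close a single equality.

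More concretely, fix $f\in\mathcal{F}$ with integral data $(c,\mu)$ as in \eqref{opcon}. Take any $\rho\neq\gamma\in\mathcal{D}(\mathcal{H}_A)$ with $\supp\rho=\supp\gamma$; because $\mathcal{N}$ is QC, $\mathcal{N}(\rho)$ and $\mathcal{N}(\gamma)$ commute and have equal support, so Theorem~\ref{thm:classicspecialrel} (applied for any choice of $\kappa\in\mathcal{K}$) yields
\begin{equation*}
D_f^\mathrm{std}(\mathcal{N}(\rho)\|\mathcal{N}(\gamma))
= c\,\|\mathcal{N}(\rho-\gamma)\|_{\kappa,\mathcal{N}(\gamma)}^2
+ \int_{(1,\infty)} \frac{s^2+1}{s^2}\,\|\mathcal{N}(\rho-\gamma)\|_{\kappa,\mathcal{N}(\sigma_s)}^2\,d\mu(s),
\end{equation*}
where $\sigma_s:=(1-1/s)\gamma+(1/s)\rho\in\mathcal{D}(\mathcal{H}_A)$ has the same support as $\rho,\gamma$ for $s\geq 1$. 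The same identity holds with $\mathcal{N}$ replaced by $\mathcal{M}$. Now I apply the definition of $\widecheck{\eta}_\kappa^{\mathrm{Riem}}(\mathcal{N},\mathcal{M})$ termwise, using the tangent vector $X:=\rho-\gamma\in T_{\sigma_s}\mathcal{D}(\mathcal{H}_A)$, to obtain $\|\mathcal{N}(\rho-\gamma)\|_{\kappa,\mathcal{N}(\sigma_s)}^2\geq \widecheck{\eta}_\kappa^{\mathrm{Riem}}(\mathcal{N},\mathcal{M})\,\|\mathcal{M}(\rho-\gamma)\|_{\kappa,\mathcal{M}(\sigma_s)}^2$ (and similarly for the $s\to\infty$ boundary term with $\sigma_\infty=\gamma$). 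Summing/integrating these bounds through the integral representation and dividing by $D_f^\mathrm{std}(\mathcal{M}(\rho)\|\mathcal{M}(\gamma))$ gives
\begin{equation*}
\widecheck{\eta}_f^{\mathrm{std}}(\mathcal{N},\mathcal{M})\;\geq\;\widecheck{\eta}_\kappa^{\mathrm{Riem}}(\mathcal{N},\mathcal{M}),
\end{equation*}
exactly as in Theorem~\ref{thm:equalitycases}.

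The converse direction $\widecheck{\eta}_f^{\mathrm{std}}(\mathcal{N},\mathcal{M})\leq \widecheck{\eta}_{\kappa_f}^{\mathrm{Riem}}(\mathcal{N},\mathcal{M})$ is immediate from Proposition~\ref{prop:RiemDivIneq}, which required no QC hypothesis. Finally, the remark preceding the theorem statement shows that for QC channels $\|\mathcal{N}(X)\|_{\kappa,\mathcal{N}(\rho)}^2=\Tr\mathcal{N}(X)^2\mathcal{N}(\rho)^{-1}$ (and analogously for $\mathcal{M}$) independently of $\kappa$, so $\widecheck{\eta}_\kappa^{\mathrm{Riem}}(\mathcal{N},\mathcal{M})=\widecheck{\eta}_{\kappa_\mathrm{max}}^{\mathrm{Riem}}(\mathcal{N},\mathcal{M})$ for every $\kappa\in\mathcal{K}$. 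Chaining the two inequalities with any fixed $\kappa$ (say $\kappa=\kappa_f$) forces $\widecheck{\eta}_f^{\mathrm{std}}(\mathcal{N},\mathcal{M})=\widecheck{\eta}_\kappa^{\mathrm{Riem}}(\mathcal{N},\mathcal{M})$ for all $f\in\mathcal{F}$ and all $\kappa\in\mathcal{K}$ simultaneously.

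The only subtle step is the termwise lower bound under the measure $\mu$: I must verify $\sigma_s$ is a legitimate state in $\mathcal{D}(\mathcal{H}_A)$ with the correct support for every $s\in[1,\infty)$ so that $\widecheck{\eta}_\kappa^{\mathrm{Riem}}(\mathcal{N},\mathcal{M})$ is indeed applicable at $\sigma_s$ with tangent vector $\rho-\gamma$; this is clear because $\sigma_s$ is a convex combination of $\rho$ and $\gamma$ and $\rho-\gamma\in T_{\sigma_s}\mathcal{D}(\mathcal{H}_A)$. Beyond that, everything follows from linearity in the measure and the monotonicity of the integrand, so the argument is essentially a QC-adapted repeat of the proof strategy of Theorem~\ref{thm:equalitycases}, now unrestricted in $f$ because the classical representation supplies the necessary integral identity for every $f\in\mathcal{F}$.
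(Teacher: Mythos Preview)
Your proof is correct and follows essentially the same approach as the paper: both use Proposition~\ref{prop:RiemDivIneq} for the upper bound, the $\kappa$-independence of the Riemannian coefficient on QC outputs, and Theorem~\ref{thm:classicspecialrel} applied to the commuting output pairs to obtain the termwise lower bound via the integral representation. Your treatment is in fact slightly more careful than the paper's in explicitly checking that $\sigma_s$ is a valid state with the correct support so that the Riemannian expansion coefficient may be invoked at each $s$.
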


\begin{proof}
    $ $\newline
    Since we generally have $\widecheck\eta_f^\mathrm{std}(\mathcal{N},\mathcal{M})\leq \widecheck\eta_{\kappa_f}^\mathrm{Riem}(\mathcal{N},\mathcal{M})$, and for QC channels, $\widecheck\eta_{\kappa_f}^\mathrm{Riem}(\mathcal{N},\mathcal{M})\equiv\widecheck\eta_{\kappa}^\mathrm{Riem}(\mathcal{N},\mathcal{M})$, we only need to check $\widecheck\eta_f^\mathrm{std}(\mathcal{N},\mathcal{M})\geq \widecheck\eta_{\kappa}^\mathrm{Riem}(\mathcal{N},\mathcal{M})$

    \noindent By definition of QC, there exist orthonormal bases $\{ \ket{\psi_i} \}_{i=1}^{d_B}$ of $\mathcal{H}_B$, $\{ \ket{\varphi_i} \}_{i=1}^{d'_B}$ of $\mathcal{H}'_B$, and POVMs $\{ F_i\}_{i=1}^{d_B} \subseteq \mathcal{B}(\mathcal{H}_B)$, $\{G_i\}_{i=1}^{d_B'} \subseteq \mathcal{B}(\mathcal{H}'_B)$ such that for all $\rho \in\mathcal{D}(\mathcal{H}_A)$:
        
    \[
    \mathcal{N}(\rho) = \sum_{i=1}^{d_B} \mathrm{Tr} \, F_i \rho \, \ket{\psi_i}\bra{\psi_i}, 
    \quad 
    \mathcal{M}(\rho) = \sum_{i=1}^{d'_{B}} \mathrm{Tr} \, G_i \rho \, \ket{\varphi_i}\bra{\varphi_i}.
    \]
        
    Let $
    \rho\neq \gamma \in \mathcal{D}(\mathcal{H}_A), \, \mathrm{supp}\, \rho = \mathrm{supp}\, \gamma, X:=\rho-\gamma\in T_\gamma \mathcal{D}(\mathcal{H}_A)$.\\
    
    As a result, using Theorem~\ref{thm:classicspecialrel}:
    \begin{align*}
        D_f^{\mathrm{std}} \left( \mathcal{N}(\rho) \| \mathcal{N}(\gamma) \right) 
        &= c \, D_{(w-1)^2}^{\mathrm{std}} \left( \mathcal{N}(\rho) \| \mathcal{N}(\gamma) \right) 
        + \int_{(1,\infty)} D_{g_s}^{\mathrm{std}} \left( \mathcal{N}(\rho) \| \mathcal{N}(\gamma) \right) \, d\mu(s) \\
        &= c \, \| \mathcal{N}(X) \|_{\kappa,\:\mathcal{N}(\gamma)}^2 + \int_{(1,\infty)} \frac{s^2 + 1}{s} \| \mathcal{N}(X) \|_{\kappa,\:\mathcal{N} \left( \gamma + \frac{X}{s} \right)}^2 d\mu(s) \\
        &\geq \widecheck{\eta}_\kappa^{\mathrm{Riem}} (\mathcal{N}, \mathcal{M}) \left( c \, \| \mathcal{M}(X) \|_{\mathcal{M}(\gamma)}^2 + \int_{(1,\infty)} \frac{s^2 + 1}{s} \| \mathcal{M}(X) \|_{\mathcal{M} \left( \gamma + \frac{X}{s} \right)}^2 d\mu(s) \right) \\
        &= \widecheck{\eta}_\kappa^{\mathrm{Riem}} (\mathcal{N}, \mathcal{M}) \, D_f^{\mathrm{std}} \left( \mathcal{M}(\rho) \| \mathcal{M}(\gamma) \right)
    \end{align*}
    \noindent Where, in the inequality, we applied the definition of $\widecheck\eta_\kappa^\mathrm{Riem}(\mathcal{N},\mathcal{M})$. Therefore,
    \[
    \widecheck{\eta}_f^{\mathrm{std}} (\mathcal{N}, \mathcal{M}) = \inf_{\substack{\rho \neq \gamma \in \mathcal{D}_d, \\ \mathrm{supp} \, \rho = \mathrm{supp} \, \gamma}} 
    \frac{ D_f^{\mathrm{std}} \left( \mathcal{N}(\rho) \| \mathcal{N}(\gamma) \right) }{D_f^{\mathrm{std}} \left( \mathcal{M}(\rho) \| \mathcal{M}(\gamma)\right)}
    \geq \widecheck{\eta}_\kappa^{\mathrm{Riem}} (\mathcal{N}, \mathcal{M}).
    \]
    This was the direction that we needed, so we are finished with this proof.
\end{proof}

\section{Equivalence}\label{sec:equivalence}
As we cannot currently derive expressions for the relative expansion coefficients, the primary question that we would like to address regarding expansion coefficients (over a restricted domain) is about whether they are strictly positive or zero. We know from Theorem~\ref{nordpi} that the strict positivity of expansion coefficients is non-trivial, but by finding cases when we indeed have this (Section~\ref{sec:explicitcoeffs}), we obtain demonstrations that information about the input states is preserved by the channels $\mathcal{N}$. This can have implications to the recoverability and convergence rate of quantum channels (see Section~\ref{sec:applications}).

In this section, we establish a notion of \emph{equivalence} of relative expansion coefficients as an alternative to equality (which was the focus of Section~\ref{sec:equalityofcoeffs}), to make significant progress in understanding the positivity of these coefficients. In building several equivalence and inequivalence results, we learn more about how the properties of standard $f$-divergences and Riemannian semi-norms (such as boundedness) influence the relative expansion coefficients.

\subsection{The Notion of Equivalence and its Inheritability}\label{sec:inherit}
A takeaway from Section~\ref{subsec:equalitycases} is that it is expected to be difficult and rare to find such convenient relationships in the non-commutative setting between a standard $f$-divergence, $D_f^\mathrm{std}(\rho\|\gamma)$, and the corresponding Riemannian semi-norm, $\|\rho-\gamma\|_{\kappa_f,\:\rho}^2$, to have $\widecheck\eta_f^\mathrm{std}(\mathcal{N},\mathcal{M})=\widecheck\eta_{\kappa_f}^\mathrm{Riem}(\mathcal{N},\mathcal{M})$ for arbitrary pairs of channels $\mathcal{N},\mathcal{M}$. We care about this because it is easier to deal with and lower bound $\widecheck\eta_{\kappa_f}^\mathrm{Riem}(\mathcal{N},\mathcal{M})$. But it suffices for checking positivity to look for cases where, for general pairs of channels $\mathcal{N},\mathcal{M}$,
\[\widecheck\eta_f^\mathrm{std}(\mathcal{N},\mathcal{M})\geq \alpha\widecheck\eta_{\kappa_f}^\mathrm{Riem}(\mathcal{N},\mathcal{M})\]
for a channel-independent constant $\alpha\in(0,1)$; indeed, $\widecheck\eta_f^\mathrm{std}(\mathcal{N},\mathcal{M})\leq\widecheck\eta_{\kappa_f}^\mathrm{Riem}(\mathcal{N},\mathcal{M})$ (Proposition~\ref{prop:RiemDivIneq}), so we only have to consider $\alpha<1$. This is a looser requirement that allows for a much broader variety of $f\in\mathcal{F}$ to be constructed with this property. This is because of the \emph{inheritance} property that we will now meet. In preparation for this discussion to be taken much further in later sections, we first define a notion of equivalence of two relative expansion coefficients:
\begin{definition}[Equivalence]
    Suppose we restrict to some class $\mathcal{Q}$ of quantum channels. Provided relative expansion coefficients $\widecheck\eta,\widecheck\eta'$, each corresponding to respective distinguishability measures, they are equivalent w.r.t. constants $0<\alpha\leq\beta$, denoted $\widecheck\eta\cong_{\alpha,\beta}\widecheck\eta'$ (or simply $\widecheck\eta\cong\widecheck\eta'$ if $\alpha,\beta$ are kept implicit), if:
    \begin{equation*}
        \alpha\leq\frac{\widecheck\eta(\mathcal{N},\mathcal{M})}{\widecheck\eta'(\mathcal{N},\mathcal{M})}\leq\beta
    \end{equation*}
    for all quantum channels $\mathcal{N}:\mathcal{B}(\mathcal{H}_A)\to\mathcal{B}(\mathcal{H}_B),\mathcal{M}:\mathcal{B}(\mathcal{H}_A)\to\mathcal{B}(\mathcal{H}_B')\:\in\mathcal{Q}$. The case $\alpha=\beta=1$ corresponds to the generic equality $\widecheck\eta\equiv\widecheck\eta'$ over all pairs of channels in $\mathcal{Q}$.
\end{definition}

As illustrated in Section~\ref{sec:qc}, where we saw the effect of considering only QC channels, there is merit in identifying subclasses $\mathcal{Q}$ of quantum channels that allow for equivalence; in Section~\ref{sec:fullrankoutputchannels} we continue this discussion with full-rank output channels, but for these first two subsections we find cases of equivalence for $\mathcal{Q}$ the set of \emph{all} quantum channels. The following theorem offers a simple construction of $f\in\mathcal{F}, \kappa_f \in \mathcal{K}$ such that $\widecheck{\eta}_f^{\mathrm{std}} \cong_{\alpha,1} \widecheck{\eta}_{\kappa_f}^{\mathrm{Riem}}$ from the cases of generic equality that we have (recall from Theorem~\ref{thm:equalitycases}). 

\begin{theorem}[Inheritance of Equivalence]\label{thm:inherit}
	Provided two functions $f,g\in\mathcal{F}$, corresponding to $\kappa_f,\kappa_g \in \mathcal{K}$ respectively, wlog $f''(1)=g''(1)=2$, that satisfy the following relationship for some $0< a < b$ (necessarily $a\leq1\leq b$):
	\[
	a\,f(x) \leq g(x) \leq b\,f(x) \quad \text{for all } x\in(0,\infty),
	\]
    If we suppose further that 
    \begin{equation}\label{eqn:fatherequiv}
        \widecheck{\eta}_f^{\mathrm{std}} \cong_{\gamma,\delta} \widecheck{\eta}_{\kappa_f}^{\mathrm{Riem}} \text{ for some } 0<\gamma\leq\delta\leq1
    \end{equation}
    Then we can conclude
    \begin{equation}\label{eqn:daughterequiv}
    \widecheck{\eta}_f^{\mathrm{std}} \cong_{\frac{a}{b},\frac{b}{a}}\widecheck{\eta}_{g}^{\mathrm{std}} \cong_{\alpha,\beta} \widecheck{\eta}^{\mathrm{Riem}}_{\kappa_g}\cong_{\frac{a}{b},\frac{b}{a}}\widecheck{\eta}_{\kappa_f}^{\mathrm{Riem}}
    \end{equation}
	for $\alpha = \frac{a^2\gamma}{b^2}, \beta=\min\{\frac{b^2\delta}{a^2},1\}$.
\end{theorem}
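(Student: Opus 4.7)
The plan is that everything follows from linearity of the standard $f$-divergence in $f$ (and of the Riemannian seminorm in $\kappa$), combined with functional calculus on the positive operators $\Delta_{\rho,\gamma}$ and $\Delta_{\rho,\rho}$. First, I would lift the pointwise hypothesis $a f(x) \leq g(x) \leq b f(x)$ on $(0,\infty)$ to the operator inequality $a f(\Delta_{\rho,\gamma}) \leq g(\Delta_{\rho,\gamma}) \leq b f(\Delta_{\rho,\gamma})$, valid whenever $\mathrm{supp}\,\rho = \mathrm{supp}\,\gamma$ so that $\Delta_{\rho,\gamma}$ restricts to a positive operator. Pairing with $\gamma^{1/2}$ in the Hilbert--Schmidt inner product gives the uniform sandwich $a D_f^{\mathrm{std}}(\rho\|\gamma) \leq D_g^{\mathrm{std}}(\rho\|\gamma) \leq b D_f^{\mathrm{std}}(\rho\|\gamma)$. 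For the Riemannian seminorms, substituting $x \mapsto 1/x$ and multiplying by $x > 0$ in the hypothesis yields $a \widetilde{f}(x) \leq \widetilde{g}(x) \leq b \widetilde{f}(x)$; adding this to the original inequality and dividing by the common normalisation $f''(1)(x-1)^2 = g''(1)(x-1)^2 = 2(x-1)^2$ gives $a \kappa_f(x) \leq \kappa_g(x) \leq b \kappa_f(x)$ (extended to $x=1$ by continuity). Applying functional calculus to $\Delta_{\rho,\rho}$ and contracting against any $X \in T_\rho \mathcal{D}(\mathcal{H}_A)$ then yields $a \|X\|_{\kappa_f,\rho}^2 \leq \|X\|_{\kappa_g,\rho}^2 \leq b \|X\|_{\kappa_f,\rho}^2$.

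Next, I would plug these two sandwiches directly into the defining ratios for $\widecheck{\eta}_g^{\mathrm{std}}(\mathcal{N},\mathcal{M})$ and $\widecheck{\eta}_{\kappa_g}^{\mathrm{Riem}}(\mathcal{N},\mathcal{M})$. Using the lower bound of each sandwich in the numerator together with the upper bound in the denominator (and vice versa), and then taking the infimum over admissible pairs, produces the constant-ratio equivalences $\widecheck{\eta}_f^{\mathrm{std}} \cong_{a/b,\,b/a} \widecheck{\eta}_g^{\mathrm{std}}$ and $\widecheck{\eta}_{\kappa_f}^{\mathrm{Riem}} \cong_{a/b,\,b/a} \widecheck{\eta}_{\kappa_g}^{\mathrm{Riem}}$, uniformly over all pairs of channels $\mathcal{N},\mathcal{M}$.

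Finally, I would chain these two equivalences with the assumed bridge $\gamma \widecheck{\eta}_{\kappa_f}^{\mathrm{Riem}} \leq \widecheck{\eta}_f^{\mathrm{std}} \leq \delta \widecheck{\eta}_{\kappa_f}^{\mathrm{Riem}}$. The lower chain reads $\widecheck{\eta}_g^{\mathrm{std}} \geq (a/b) \widecheck{\eta}_f^{\mathrm{std}} \geq (a/b)\gamma \widecheck{\eta}_{\kappa_f}^{\mathrm{Riem}} \geq (a^2\gamma/b^2) \widecheck{\eta}_{\kappa_g}^{\mathrm{Riem}}$, recovering $\alpha = a^2\gamma/b^2$; the symmetric upper chain produces an upper constant of $b^2\delta/a^2$, which is refined to $\beta = \min\{b^2\delta/a^2,\,1\}$ by appealing to the universal bound $\widecheck{\eta}_g^{\mathrm{std}} \leq \widecheck{\eta}_{\kappa_g}^{\mathrm{Riem}}$ from Proposition~\ref{prop:RiemDivIneq}. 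There is no substantive obstacle; the proof is essentially bookkeeping of constants along a chain of elementary inequalities, and the only technical care needed is to confirm that functional calculus on the positive operators $\Delta_{\rho,\gamma}$ and $\Delta_{\rho,\rho}$ preserves the scalar inequalities on the portion of $(0,\infty)$ hit by their spectra.
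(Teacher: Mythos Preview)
Your proposal is correct and follows essentially the same approach as the paper: lift the pointwise sandwich $a f \le g \le b f$ to the divergence and Riemannian semi-norm levels via functional calculus, pass to the corresponding ratio bounds $\tfrac{a}{b}$ and $\tfrac{b}{a}$ on the coefficients, and then chain through the assumed equivalence $\widecheck{\eta}_f^{\mathrm{std}} \cong_{\gamma,\delta} \widecheck{\eta}_{\kappa_f}^{\mathrm{Riem}}$, sharpening the upper constant with Proposition~\ref{prop:RiemDivIneq}. If anything, your derivation of $a\kappa_f \le \kappa_g \le b\kappa_f$ via the transpose is more explicit than the paper's, which simply asserts that implication.
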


\begin{proof} Since 
    \[a\,f(x) \leq g(x) \leq b\,f(x)\implies a\,\kappa_f(x) \leq \kappa_g(x) \leq b\,\kappa_f(x),
    \]
    we also have by the definitions \eqref{eqn:stdfdiv}, \eqref{eqn:riemdiv} respectively:
    \begin{equation}\label{eqn:induceddivrelations}
        a D_f^{\mathrm{std}}(\rho\|\gamma) \leq D_g^{\mathrm{std}}(\rho\|\gamma) \leq b D_f^{\mathrm{std}}(\rho\|\gamma) \text{ and }a\|X\|^2_{\kappa_f,\rho} \leq \|X\|^2_{\kappa_g,\rho} \leq b\|X\|^2_{\kappa_f,\rho}
    \end{equation}
    for any traceless Hermitian $X$, density operators $\rho,\gamma$, $\mathrm{supp}\;\rho= \mathrm{supp}\;\gamma$, $X,\rho,\gamma \in \mathcal{B}(\mathcal{H})$ for some Hilbert space $\mathcal{H}$. \\
    
    \noindent Therefore, by applying \eqref{eqn:induceddivrelations} to the definitions of the relative expansion coefficients, and using \eqref{eqn:fatherequiv}, we deduce for all pairs of quantum channels $\mathcal{N}:\mathcal{B}(\mathcal{H}_A)\to\mathcal{B}(\mathcal{H}_B),\mathcal{M}:\mathcal{B}(\mathcal{H}_A)\to\mathcal{B}(\mathcal{H}_B')$:
    \begin{equation*}
        \widecheck{\eta}^{\mathrm{std}}_g(\mathcal{N},\mathcal{M}) \geq \frac{a}{b}\widecheck{\eta}^{\mathrm{std}}_f(\mathcal{N},\mathcal{M}) \geq \frac{a\gamma}{b}\widecheck{\eta}^{\mathrm{Riem}}_{\kappa_f}(\mathcal{N},\mathcal{M}) \geq \frac{a^2\gamma}{b^2} \widecheck{\eta}^{\mathrm{Riem}}_{\kappa_g}(\mathcal{N},\mathcal{M}),
    \end{equation*}
    so then $\alpha = a^2\gamma/b^2$. 
    Similarly,
    \begin{equation*}
        \widecheck{\eta}^{\mathrm{std}}_g(\mathcal{N},\mathcal{M}) \leq \frac{b}{a}\widecheck{\eta}^{\mathrm{std}}_f(\mathcal{N},\mathcal{M}) \leq \frac{b\delta}{a}\widecheck{\eta}^{\mathrm{Riem}}_{\kappa_f}(\mathcal{N},\mathcal{M}) \leq \frac{b^2\delta}{a^2} \widecheck{\eta}^{\mathrm{Riem}}_{\kappa_g}(\mathcal{N},\mathcal{M}),
    \end{equation*}
    By Proposition~\ref{prop:RiemDivIneq}, we can choose $\beta=\min\{\frac{b^2\delta}{a^2},1\}$. Combining these inequalities, we have \eqref{eqn:daughterequiv}.
\end{proof}

To understand how this theorem can be used to construct equivalence, suppose we are provided $f\in\mathcal{F}_\mathrm{sym}$ such that $\widecheck\eta_f^\mathrm{std}=\widecheck\eta_{\kappa_f}^\mathrm{Riem}$, and any $\kappa'\in\mathcal{K}$ such that $\kappa'(x)\leq c\kappa_f(x)\;\forall x\in(0,\infty)$, for $c\geq1$ some constant. From this, we can define $\kappa\in\mathcal{K}$ by $\kappa(x):=\lambda\,\kappa_f(x)+(1-\lambda)\,\kappa'(x)\equiv\kappa_g(x),\lambda\in(0,1)$ for $g(x):=(x-1)^2\kappa(x)\in\mathcal{F}_\mathrm{sym}$. By Theorem~\ref{thm:inherit}:
\begin{equation*}
    \frac{\lambda^2}{(\lambda+(1-\lambda)c)^2}\widecheck\eta_{\kappa_g}^\mathrm{Riem}\leq\widecheck\eta_g^\mathrm{std}\leq\widecheck\eta_{\kappa_g}^\mathrm{Riem}
\end{equation*}

For $\lambda\approx1$, we may then notice that the divergence and Riemannian coefficients remain similar:
\begin{equation*}
    \widecheck\eta_g^\mathrm{std}\approx\widecheck\eta_{\kappa_g}^\mathrm{Riem}.
\end{equation*}
A key benefit of considering equivalence over generic equality, is that this inheritance property of equivalence allows us to say more about the connections between different standard $f$-divergences and their induced Riemannian semi-norms, as we are no longer limited to the cases in Theorem~\ref{thm:equalitycases}. In contrast, for the case of generic equality, it is not clear whether even all of the conical combinations of the equality cases yield generic equality.


Recall that for all $\kappa\in\mathcal{K}$, $\kappa_{\min}(x)\leq\kappa(x)\leq\kappa_{\max}(x)\;\forall x\in(0,\infty)$. As a result, there are two especially useful applications for Theorem~\ref{thm:inherit}: 
\begin{enumerate}
    \item We can always choose $\kappa_g(x):=\lambda\kappa_{\mathrm{max}}(x)+(1-\lambda)\kappa'(x),g\in\mathcal{F}_\mathrm{sym}$ for any $\kappa'\in\mathcal{K},\lambda=0$.\\ In this case, we take $\kappa_f=\kappa_{\max},c=1$. 
    \item We can always choose $\kappa_g(x):=\lambda\,\kappa_f(x)+(1-\lambda)\,\kappa_{\min}(x),g\in\mathcal{F}_\mathrm{sym}$ for any $\kappa_f\in\mathcal{K}$, $f(x)=(x-1)^2\kappa_f(x)\in\mathcal{F}_\mathrm{sym}$ such that $\widecheck\eta_f^\mathrm{std}=\widecheck\eta_{\kappa_f}^\mathrm{Riem}$. \\
    In this case, we take $\kappa'=\kappa_{\min},c=1$.
\end{enumerate}

To elaborate on the second point, since $\kappa_s$ is uniformly decreasing in $s\in[0,1]$ (recall $\kappa_{\min}:=\kappa_0,\kappa_{\max}:=\kappa_1$), this suggests we can take $\kappa' = \kappa_s$ for $s\in(0,1]$ sufficiently large. There are also some remarks we can make about the first point. First of all, we deduce that for 
\begin{align*}
    g(x):=\:&\alpha\cdot((x-1)^2)_\mathrm{sym}+\beta\cdot(x\log x)_\mathrm{sym}=\alpha\cdot\frac{(x-1)^2(x+1)}{2x}+\beta\cdot(x-1)\log x\in\mathcal{F}_\mathrm{sym},\alpha,\beta>0,\\
\kappa_g(x)\equiv\:&\lambda\,\kappa_{\max}(x)+(1-\lambda)\,\kappa_\mathrm{BKM}(x),\;\lambda=\frac{\alpha}{\alpha+\beta}\in(0,1),
\end{align*}
We obtain the following equivalence: 
\begin{equation*}
    \widecheck\eta_g^\mathrm{std}\cong_{\lambda^2,1}\widecheck\eta_{\kappa_g}^\mathrm{Riem}.
\end{equation*}
It is unknown whether $\widecheck\eta_g^\mathrm{std}\equiv\widecheck\eta_{\kappa_g}^\mathrm{Riem}$, so this equivalence offers progress towards the open problem, because we can relate the divergence and Riemannian relative expansion coefficients.

We can also recall the integral representation \eqref{opmon} for $\kappa \in \mathcal{K}$, where $m$ is the unique probability measure on $[0,1]$ corresponding to $\kappa$:
\begin{equation}
	\kappa(x) = \int_{[0,1]} \kappa_s(x)\,dm(s), \qquad 
	\kappa_s(x)= \frac{1+s}{2} \left( \frac{1}{x+s} + \frac{1}{s x + 1} \right)
\end{equation}

\noindent
$\kappa_{\max}(x) := \kappa_0(x)$ is the only $\kappa_s(x)$ that diverges as $x\to 0^+$. Rewriting
\begin{equation}
	\kappa(x) = m(0)\kappa_{\max}(x)+\int_{(0,1]} \kappa_s(x)\,dm(s)\equiv m(0)\kappa_{\max}(x)+(1-m(0))\kappa'(x),
\end{equation}
$m(0)>0$ certainly ensures that $\kappa$ is unbounded and that \[\widecheck\eta_{(x-1)^2\kappa(x)}^\mathrm{std}\cong_{m(0)^2,\:1}\widecheck\eta_{\kappa}^\mathrm{Riem}.\] 
This establishes the importance of $\kappa_{\max}$ in constructing cases of equivalence, but it does not fully explain the apparent connection between $\kappa_f(0^+)=\infty$ and $\widecheck{\eta}^{\mathrm{std}}_f \cong \widecheck{\eta}^{\mathrm{std}}_{\kappa_f}$. 

A problem remains that these properties can coexist even while $m(0)=0$; it is only the concentration of $m(s)$ near $s=0$ that causes the divergence of $\kappa(x)$ as $x\to0^+$. That is, the relative entropy doesn't inherit its equivalence ($\widecheck\eta_{x\log x}^\mathrm{std}=\widecheck\eta_{\kappa_\mathrm{BKM}}^\mathrm{Riem}$) from the maximal standard $f$-divergence, $D_{(x-1)^2}^\mathrm{std}$; for $\kappa=\kappa_\mathrm{BKM}$, this is clear from $m(0)=\lim_{x\to\infty} 2\kappa_\mathrm{BKM}(x)=\lim_{x\to\infty} \frac{2\log x}{x-1}=0$. In fact, the integral representation is
\[
\kappa_{\mathrm{BKM}}(x) = \frac{\log x}{x-1} = \int_0^1 \kappa_s(x)\cdot \frac{2}{(1+s)^2} ds
\]

\noindent We also cannot yet guarantee that $\kappa_f(0^+)=\infty \implies \widecheck{\eta}^{\mathrm{std}}_{f_\mathrm{sym}} \cong \widecheck{\eta}^{\mathrm{Riem}}_{\kappa_f}$. The next section, Section~\ref{sec:bddvsunbdd}, will explore the connections between equivalences and boundedness further, by looking into what happens in the bounded setting, $\kappa_f(0^+)<\infty$.

\subsection{Relative Expansion Coefficients in the Bounded Setting (vs. Unbounded Settings)}\label{sec:bddvsunbdd}
Section~\ref{subsec:equalitycases} demonstrated more cases where $\widecheck\eta_{g}^\mathrm{std}\equiv\widecheck\eta_{\kappa_g}^\mathrm{Riem}$ than were previously noticed, and Section~\ref{sec:inherit} showed that we could construct even more cases with the equivalence: $\widecheck\eta_{g}^\mathrm{std}\cong\widecheck\eta_{\kappa_g}^\mathrm{Riem}$. However, all of these constructions involve unbounded $\kappa_g$ (i.e. $\kappa_g(0^+)=\infty$), which they inherited from the generic equality cases in Theorem~\ref{thm:equalitycases}. Note that unbounded $\kappa_g$ is the same as saying that $D_g^\mathrm{std}$ is unbounded \cite{Hiai_2011,HM17}. 

There are a couple intuitive reasons for believing that this is no coincidence, and that in the setting where $\kappa_g(0^+)<\infty$, we should expect that $\widecheck\eta_{g}^\mathrm{std}\ncong\widecheck\eta_{\kappa_g}^\mathrm{Riem}$. This was first suggested by the result \eqref{eqn:equalitycounter} from \cite{HR15}, which reveals a disparity between $\widecheck\eta_{(x-1)^2\kappa_s(x)}^\mathrm{std}$ and $\widecheck\eta_{\kappa_s(x)}^\mathrm{Riem}$ for sufficiently large $s\in(0,1]$. We infer that in the bounded setting, $\kappa_g(0^+)<\infty$, sometimes the divergence relative expansion coefficients have greater dependence on the global behaviour of $D_g^\mathrm{std}$. One may further hypothesise that the asymmetry in the boundedness properties of the standard $f$-divergence versus it's induced Riemannian semi-norm makes it difficult for a relationship similar to the forms \eqref{eqn:specialrelationships} to exist in the quantum setting. That is, when $\kappa_g$ is bounded, $D_g^\mathrm{std}(\rho\|\gamma)$ is bounded over all states $\rho,\gamma\in\mathcal{D}(\mathcal{H})$, while $\chi_{\kappa_g}^2(\rho\|\gamma)=\|\rho-\gamma\|_{\kappa_g,\gamma}^2$ is not.  Due to the importance of such integral relations, this would make the generic equality $\widecheck\eta_{g}^\mathrm{std}\equiv\widecheck\eta_{\kappa_g}^\mathrm{Riem}$ unlikely --- and if so, constructing cases of equivalence in the bounded $\kappa_g$ setting may not be possible. On the other hand, one should bear in mind that in the classical setting, we have these integral relations  for all $f\in\mathcal{F}$, despite such an asymmetry. 


Instead, when $\kappa_f(x),\kappa_g(x)$ are bounded, we find in the following two results that the equivalences $\widecheck\eta_f^\mathrm{std}\cong\widecheck\eta_g^\mathrm{std}$, $\widecheck\eta_{\kappa_f}^\mathrm{Riem}\cong\widecheck\eta_{\kappa_g}^\mathrm{Riem}$ hold in general; again, we consider $\mathcal{Q}$ to be the set of all channels. We then conclude this subsection with a proof of the inequivalence, $\widecheck\eta_{\kappa_f}^\mathrm{Riem}\ncong\widecheck\eta_{\kappa_g}^\mathrm{Riem}$, for bounded $\kappa_f(x)$ and unbounded $\kappa_g(x)$, which illustrates the significance of comparing these distinguishability measures via boundedness.
\begin{theorem}[Equivalence of Bounded Standard Divergences]\label{thm:bddstddivEquiv}
    Given any $f,g\in\mathcal{F}$, corresponding to $\kappa_f,\kappa_{g}\in\mathcal{K}$ such that $\kappa_f(0^+),\kappa_{g}(0^+)<\infty$, then there exist $0<\alpha<\beta$ s.t. for all $\rho\neq\gamma\in\mathcal{D}_d$,
    \begin{equation*}
        \alpha\leq \frac{D_f^\mathrm{std}(\rho\|\gamma)}{D_{g}^\mathrm{std}(\rho\|\gamma)}\leq\beta.
    \end{equation*}
    In particular, this implies that \begin{equation}\label{eqn:bddstddivEquiv}
        \widecheck\eta_f^\mathrm{std}\cong_{\frac{\alpha}{\beta},\frac{\beta}{\alpha}}\widecheck\eta_{g}^\mathrm{std}.
    \end{equation}
\end{theorem}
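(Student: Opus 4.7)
The plan is to reduce the divergence inequality to a uniform pointwise inequality $\alpha\, g(x) \leq f(x) \leq \beta\, g(x)$ on $(0,\infty)$ for some positive constants $\alpha, \beta$, and then lift this to the standard $f$-divergences via a positive spectral measure representation; the relative-expansion equivalence is then an immediate corollary obtained by sandwiching numerator and denominator in the defining ratio.

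For the lifting, I would use that $\Delta_{\rho,\gamma} = L_\rho R_\gamma^{-1} = (L_{\rho^{1/2}} R_{\gamma^{-1/2}})^2$ is the square of an HS–self-adjoint positive superoperator, so its spectral projections $P_\lambda$ are mutually HS-orthogonal and for every $h\in\mathcal{F}$,
\begin{equation*}
D_h^\mathrm{std}(\rho\|\gamma) = \sum_\lambda w_\lambda\, h(\lambda), \qquad w_\lambda := \|P_\lambda(\gamma^{1/2})\|_\mathrm{HS}^2 \geq 0,
\end{equation*}
where the weights $w_\lambda$ depend only on $\rho, \gamma$ and are supported on $\mathrm{spec}(\Delta_{\rho,\gamma}) \subseteq (0,\infty)$ whenever $\mathrm{supp}\,\rho = \mathrm{supp}\,\gamma$. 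Hence $\alpha g(\lambda) \leq f(\lambda) \leq \beta g(\lambda)$ pointwise on $(0,\infty)$ integrates directly into $\alpha D_g^\mathrm{std}(\rho\|\gamma) \leq D_f^\mathrm{std}(\rho\|\gamma) \leq \beta D_g^\mathrm{std}(\rho\|\gamma)$.

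To obtain the pointwise inequality, I would first normalize: because $D_f^\mathrm{std}$ is invariant under $f \mapsto f + a(x-1)$ (as $\Tr(\rho-\gamma)=0$), I may assume $f'(1)=g'(1)=0$, so that operator convexity together with $f(1)=0$ and $f''(1)>0$ give $f\geq 0$ on $(0,\infty)$ with the unique zero at $x=1$ (and analogously for $g$). Inspecting the Petz integral representation~\eqref{opcon}, the hypothesis $\kappa_f(0^+)<\infty$ forces $c=0$ and $\mu(\{0\})=0$ (these are precisely the $\kappa_\mathrm{max}$-type contributions that would cause $\kappa_f$ to diverge at $0$), which in turn yields $f(0^+)=\int_{(0,\infty)} d\mu(s)/s$ and $\widetilde{f}(0^+)=\mu((0,\infty))$, both finite and strictly positive because $f''(1)=2\int d\mu(s)/(1+s)>0$; the same holds for $g$. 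Therefore the ratio $r(x):=f(x)/g(x)$ is continuous and strictly positive on $(0,\infty)\setminus\{1\}$ with finite positive limits $r(1)=f''(1)/g''(1)$ (Taylor expansion), $r(0^+)=f(0^+)/g(0^+)$, and $r(\infty)=\widetilde{f}(0^+)/\widetilde{g}(0^+)$ (using $f(x)/x \to \widetilde{f}(0^+)$ as $x\to\infty$). Compactness of $[0,\infty]$ then yields $\alpha:=\inf r>0$ and $\beta:=\sup r<\infty$, and \eqref{eqn:bddstddivEquiv} follows with constants $\alpha/\beta$, $\beta/\alpha$.

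The main obstacle is the measure-theoretic step that extracts $c=0$ and $\mu(\{0\})=0$ from the single assumption $\kappa_f(0^+)<\infty$: since $\kappa_f$ depends only on the symmetrization $f+\widetilde{f}$, one must identify precisely which terms in~\eqref{opcon} and the analogous representation of $\widetilde{f}$ produce a $\kappa_\mathrm{max}$-type ($1/x$-singular) contribution to $\kappa_f$, and verify that boundedness of $\kappa_f$ at $0$ forbids exactly those terms. Once this is done, all remaining steps are routine continuity and compactness arguments.
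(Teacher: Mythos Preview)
Your proposal is correct and follows essentially the same route as the paper: normalize to $f'(1)=g'(1)=0$, show the scalar ratio $f/g$ extends continuously to $[0,\infty]$ with strictly positive values (via finite positive limits at $0$, $1$, and $\infty$), and then lift the pointwise sandwich to $D_f^{\mathrm{std}}$ through the spectral expansion of $\Delta_{\rho,\gamma}$. The only stylistic difference is that the paper sidesteps your ``main obstacle'' entirely: rather than extracting $c=0$ and $\mu(\{0\})=0$ from the integral representation, it observes directly that $\kappa_f(0^+)=f_{\mathrm{sym}}'(\infty)=\bigl(f'(\infty)+\widetilde f'(\infty)\bigr)/f''(1)$, so boundedness of $\kappa_f$ immediately gives $f(0^+)=\widetilde f'(\infty)<\infty$ and $f'(\infty)<\infty$, which is all that is needed for the compactness step.
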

\begin{proof}
    wlog let $f'(1)=g'(1)=0$ (since $D_f^\mathrm{std}(\rho\|\gamma)\equiv D_{f-f'(1)(x-1)}^\mathrm{std}(\rho\|\gamma)$). By the integral representation of operator convex functions \eqref{opcon}, this ensures that $f'(\infty), \widetilde{f}'(\infty) > 0$, where we recall that $\widetilde{f}$ is the transpose of $f$.
    
    \noindent Since $\kappa_f$ is bounded (i.e. $\kappa_f(0^+) < \infty$), and defining $f_{\mathrm{sym}}(x) := (x - 1)^2 \kappa_f(x)$:
    \begin{align*}
        f_{\mathrm{sym}}'(\infty)
        &:= \lim_{x \to \infty} \frac{f_{\mathrm{sym}}(x)}{x}= \frac{f'(\infty) + \widetilde{f}'(\infty)}{f''(1)} \\
        &= \lim_{x \to \infty} \frac{(x-1)^2 \kappa_f(x)}{x}= \lim_{x \to \infty} x \kappa_f(x)=\lim_{x \to \infty} \kappa_f(x^{-1}) \\
        &= \kappa_f(0^+) < \infty
    \end{align*}
    
    \noindent
    $\Rightarrow\; f'(\infty)$ and $\widetilde{f}'(\infty) \equiv f(0^+)$ are finite.
    Similarly, we can conclude the same for $g$.
    \\
    
    \noindent Note that: $\lim_{x \to 1} \frac{f(x)}{g(x)} = \lim_{x \to 1} \frac{f''(x)}{g''(x)} =\frac{f''(1)}{g''(1)}\in(0,\infty)$ by L'Hôpital's rule. 
    \\\\ Therefore by continuity, and the fact $f,g\in\mathcal{F}$ are strictly positive for $x\neq 1$, $\frac{f(x)}{g(x)}$ is bounded, \\
    i.e. $\exists 0<\alpha<\beta$ s.t. $\alpha\leq\frac{f(x)}{g(x)}\leq\beta$ for all $x\in(0,\infty)$. 
    
    By recalling the definition of standard $f$-divergence \eqref{eqn:stdfdiv}, this implies that 
    \[
    \alpha\leq\frac{D_f^\mathrm{std}(\rho\|\gamma)}{D_g^\mathrm{std}(\rho\|\gamma)}\leq\beta\;\text{ for all states $\rho\neq\gamma\in\mathcal{D}_d$}
    \]
    \noindent And finally, applying this inequality to the definition of the divergence relative expansion coefficients, gives \eqref{eqn:bddstddivEquiv}.
\end{proof}

\begin{theorem}[Equivalence of Bounded Metrics]\label{thm:bddRiemEquiv}
	$ $\newline 
    Any $\kappa_f, \kappa_g \in \mathcal{K}$ s.t. $\kappa_f(0^+),\kappa_g(0^+)<\infty$, corresponding to $f,g\in\mathcal{F}$, satisfy the following inequality,
	\[
    \alpha\leq\frac{\kappa_f(x)}{\kappa_g(x)}\leq\beta,\quad \forall x\in(0,\infty),
	\]
	where $\alpha := \frac{2}{\kappa_f(0^+)}\leq1, \ \beta := \frac{\kappa_g(0^+)}{2}\geq 1$. As a result, this implies that 
    \begin{equation}\label{eqn:bddRiemEquiv}
	\widecheck{\eta}^{\mathrm{Riem}}_{\kappa_f} \cong_{\frac{\alpha}{\beta}, \frac{\beta}{\alpha}} \widecheck{\eta}^{\mathrm{Riem}}_{\kappa_g}.
    \end{equation}
\end{theorem}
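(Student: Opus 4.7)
The plan is to sandwich each bounded kernel between scalar multiples of $\kappa_{\mathrm{min}}$ at the pointwise level, and then lift this control to the Riemannian semi-norms and finally to the Riemannian relative expansion coefficients. The heart of the argument is the claim that, for any bounded $\kappa \in \mathcal{K}$,
\begin{equation*}
    \kappa(x) \;\leq\; \tfrac{\kappa(0^+)}{2}\,\kappa_{\mathrm{min}}(x) \;=\; \frac{\kappa(0^+)}{x+1}, \qquad x \in (0,\infty).
\end{equation*}
To prove this I would invoke the integral representation \eqref{opmon}. Since $\kappa$ is bounded and $\kappa_0 \equiv \kappa_{\mathrm{max}}$ is the only extremal kernel that diverges as $x \to 0^+$, the representing probability measure $m$ must place no mass at $s=0$; hence $\kappa(x) = \int_{(0,1]} \kappa_s(x)\,dm(s)$. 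It then suffices to verify the pointwise inequality $\kappa_s(x)(x+1) \le \kappa_s(0^+) = \tfrac{(1+s)^2}{2s}$ for each $s \in (0,1]$ and integrate against $m$; clearing denominators, this reduces by elementary algebra to $(s-1)^2 \geq 0$, which is obvious.

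With that in hand, combining the inequality for $\kappa_f$ with the universal lower bound $\kappa_g(x) \geq \kappa_{\mathrm{min}}(x)$ (and the symmetric statement with $f$ and $g$ swapped) yields
\begin{equation*}
    \frac{2}{\kappa_g(0^+)} \;\leq\; \frac{\kappa_f(x)}{\kappa_g(x)} \;\leq\; \frac{\kappa_f(0^+)}{2} \qquad \forall \, x \in (0,\infty),
\end{equation*}
matching the scalar constants $\alpha,\beta$ in the statement up to a cosmetic swap of subscripts; the final equivalence ratios $\alpha/\beta$ and $\beta/\alpha$ are symmetric in $(f,g)$, so any such swap is immaterial. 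The inequalities $\alpha \leq 1 \leq \beta$ then follow from the standard fact $\kappa(0^+) \geq \kappa_{\mathrm{min}}(0^+) = 2$ for every $\kappa \in \mathcal{K}$.

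To finish, since $\Delta_{\rho,\rho} = L_\rho R_\rho^{-1}$ is a positive superoperator with spectrum contained in $(0,\infty)$, functional calculus lifts the scalar sandwich to the operator inequality $\alpha\,\kappa_g(\Delta_{\rho,\rho}) \le \kappa_f(\Delta_{\rho,\rho}) \le \beta\,\kappa_g(\Delta_{\rho,\rho})$, and hence via \eqref{eqn:riemdiv} to
\[
    \alpha\,\|X\|_{\kappa_g,\rho}^2 \;\leq\; \|X\|_{\kappa_f,\rho}^2 \;\leq\; \beta\,\|X\|_{\kappa_g,\rho}^2 \qquad \forall\, X \in T_\rho \mathcal{D}(\mathcal{H}_A).
\]
Substituting these bounds into the infimum definition of $\widecheck{\eta}^{\mathrm{Riem}}_\kappa(\mathcal{N},\mathcal{M})$ --- using the lower bound in the numerator and the upper in the denominator, and vice versa, exactly as in the closing step of Theorem~\ref{thm:inherit} --- then produces the advertised equivalence \eqref{eqn:bddRiemEquiv}.

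The main obstacle is the pointwise kernel bound of the first paragraph; the remaining steps are mechanical applications of the integral representation, the standard functional-calculus lift, and the sandwich trick for ratios of infima. The key insight to identify is that $\kappa(x)(x+1)$ attains its supremum at $x=0^+$ whenever $\kappa$ is bounded, and that on the extremals this supremum property reduces to $(s-1)^2 \ge 0$.
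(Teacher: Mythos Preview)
Your proof is correct and follows essentially the same strategy as the paper: reduce the pointwise sandwich to the extremal kernels $\kappa_s$ via the integral representation \eqref{opmon}, verify it there, then integrate and lift to the semi-norms and the coefficients. The only difference is in how you establish $\kappa_s(x)(x+1)\le \kappa_s(0^+)$: the paper computes $\tfrac{d}{dx}\big(\kappa_s(x)/\kappa_{\min}(x)\big)$ to locate the extremum, whereas you clear denominators and reduce to $(1-s)^2\ge 0$, which is arguably cleaner but amounts to the same content; you also correctly flag the cosmetic $f\leftrightarrow g$ swap in the stated constants, which the paper's own proof shares.
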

\begin{proof}
    $ $\newline
    Let $s, s' \in (0, 1]$, so that the corresponding $\kappa_s(x), \kappa_{s'}(x)$ are bounded (i.e. $\kappa_s(0^+), \kappa_{s'}(0^+) < \infty$).
    \begin{align*}
        \frac{\kappa_s(x)}{\kappa_{s'}(x)}
        &= \frac{
            \frac{(1 + s)^2}{2} \cdot \frac{1 + x}{(x+s)(1+sx)}
        }{
            \frac{(1+s')^2}{2}\cdot\frac{1+x}{(x+s')(1+s'x)}
        } 
        = \frac{s'(1+s)^2}{s(1+s')^2}
        \cdot \frac{(x+s')(x+s'^{-1})}{(x+s)(x+s^{-1})} \\
        &= \frac{s'(1+s)^2}{s(1+s')^2} \cdot
        \left(
        1 +
        \left[
        (s' + s'^{-1}) - (s + s^{-1})
        \right]\cdot
        \frac{x}{x^2 + 1 + (s + s^{-1})x}
        \right)
    \end{align*}
    Since \[\frac{\dd}{\dd x}\frac{\kappa_s(x)}{\kappa_{s'}(x)} = \frac{s'(1 + s)^2}{s(1 + s')^2}\cdot \left[(s'+s'^{-1})-(s+s^{-1})\right]\cdot\frac{1-x^2}{x^2+1+(s+s^{-1})x},\]
    
    \noindent We're interested in taking $s' = 1$ ($\kappa_\mathrm{min}(x):=\kappa_1(x)$), so necessarily $s \leq s'$.
    
    $\frac{\kappa_s(x)}{\kappa_{\min}(x)}$ is decreasing over $x \in (0,1)$ and it is increasing over $x \in (1,\infty)$. This means that it has a minimum value $\frac{\kappa_s(1)}{\kappa_{\min}(1)}=1$, and a maximum value either as $x\to 0^+$ or as $x\to\infty$. Using the fact $\kappa_s(x^{-1})=x\kappa_s(x)$ and $\kappa_{\min}(0^+)=2$, its maximum value is:  \\
    \[
    \lim_{x \to \infty} \frac{\kappa_s(x)}{\kappa_{\min}(x)} = \lim_{x \to \infty} \frac{x\kappa_s(x)}{x\kappa_{\min}(x)} = \lim_{x \to \infty}\frac{\kappa_s(x^{-1})}{\kappa_{\min}(x^{-1})} = \frac{\kappa_s(0^+)}{\kappa_{\min}(0^+)} = \frac{\kappa_s(0^+)}{2}
    \]
    Therefore,
    \begin{equation*}
        \kappa_\mathrm{min}(x) \leq \kappa_s(x) \leq \frac{\kappa_s(0^+)}{2} \cdot \kappa_\mathrm{min}(x)\text{ for all }s\in[0,1]
    \end{equation*}
    
    \noindent By considering the integral representation \eqref{opmon} that expresses any $\kappa_f(x)\in \mc K$ in terms of the extreme points $\kappa_s(x)$, we can extend this inequality to general $\kappa_f(x),\kappa_g(x)\in\mathcal{K}$ and use this to bound $\frac{\kappa_f(x)}{\kappa_g(x)}$:
    \begin{align}
        & \kappa_\mathrm{min}(x) \leq \kappa_f(x) \leq \frac{\kappa_f(0^+)}{2} \, \kappa_\mathrm{min}(x), \;\kappa_\mathrm{min}(x) \leq \kappa_g(x) \leq \frac{\kappa_g(0^+)}{2} \, \kappa_\mathrm{min}(x)\notag\\
        &\implies  \frac{2}{\kappa_g(0^+)} \leq \frac{\kappa_f(x)}{\kappa_g(x)} \leq \frac{\kappa_f(0^+)}{2}\label{eqn:kappakappa}
    \end{align}
    
    By applying \eqref{eqn:kappakappa} to the definition of the Riemannian semi-norm \eqref{eqn:riemdiv}, we obtain  
    \[
    \alpha<\frac{\|\rho-\gamma\|^2_{\kappa_f,\:\rho}}{\|\rho-\gamma\|^2_{\kappa_g,\:\rho}}<\beta
    \]
    for all $\rho,\gamma \in \mathcal{D}_d$ with $\mathrm{supp}\:\rho=\mathrm{supp}\:\gamma$, and $\alpha:=\frac{2}{\kappa_g(0^+)},\beta:=\frac{\kappa_f(0^+)}{2}$. \\
    
    Finally, applying this inequality to the definition of Riemannian relative expansion coefficients, gives \eqref{eqn:bddRiemEquiv}.
\end{proof}



Theorem~\ref{thm:bddstddivEquiv} and Theorem~\ref{thm:bddRiemEquiv} demonstrate that, even if the equivalence $\widecheck\eta_f^\mathrm{std}\cong\widecheck\eta_{\kappa_f}^\mathrm{Riem}$ is not possible, we still have a significant reduction in the problem of determining positive expansion coefficients. 

Let us now revisit the family of CQ qubit channels $\Phi_{\alpha,\tau},\:\alpha^2+\tau^2\leq 1, \alpha,\tau\in\mathbb{R}$ that provided the first counterexample of $\eta_f^\mathrm{std}\equiv\eta_{\kappa_f}^\mathrm{Riem}$ \cite{LR99,HR15}. This is expressed in the Bloch vector representation as
\begin{equation}\label{eqn:counter}
    \Phi_{\alpha,\tau}:\frac{1}{2}(I+\mathbf{w}\cdot\sigma)\mapsto\frac{1}{2}(I+\alpha w_1 \sigma_1+\tau \sigma_3),\;\alpha^2+\tau^2\leq 1, \alpha,\tau\in\mathbb{R}.
\end{equation}
For a CQ channel $\Phi:\mathcal{B}(\mathcal{H}_A)\to\mathcal{B}(\mathcal{H}_B)$, by definition, the image of its adjoint map, $\mathrm{Im}\:\hat\Phi$, is a commutative subalgebra of $\mathcal{B}(\mathcal{H}_A)$. \cite[Proposition 5.5]{HR15} states that if $\kappa_f(x)\leq\kappa_g(x)$ (corresponding to $f,g\in\mathcal{F}$) for all $x\in(0,\infty)$, then:
\[
\eta_{\kappa_f}^\mathrm{Riem}(\Phi)\leq\eta_{\kappa_g}^\mathrm{Riem}(\Phi)\text{ for every CQ channel }\Phi.
\]
We therefore anticipate that an unbounded $\kappa_g$ should have a larger contraction coefficient than $\kappa_f$ for the CQ channel $\Phi_{\alpha,\tau}$. From the explicit cases calculated for $\eta_\kappa(\Phi_{\alpha,\tau})$ by \cite{HR15}, it seemed possible that perhaps any such pair $\eta_{\kappa_f}(\Phi_{\alpha,\tau}),\eta_{\kappa_g}(\Phi_{\alpha,\tau})$ could be many orders of magnitude apart. The following result finally verifies the validity of this claim, providing the first acknowledgement of there being two very different classes of contraction coefficients for a quantum channel. It also teaches us to not take the equivalence of relative expansion coefficients for granted.

\begin{theorem}[A Case of Inequivalence between Riemannian Coefficients]\label{thm:RiemInequiv}
    $ $\newline
    Consider the following family of classical-quantum, primitive, qubit channels: 
    \[
    \Phi_{\alpha,\sqrt{1-\alpha^2}}:\frac{1}{2}(I+\mathbf{w}\cdot\sigma)\mapsto\frac{1}{2}(I+\alpha w_1 \sigma_1+\sqrt{1-\alpha^2} \sigma_3),\quad 0<|\alpha|<1,\alpha\in\mathbb{R}
    \]
    If $\kappa\in\mathcal{K}$ is bounded, then:
    \begin{equation}\label{eqn:asympRiembdd}
        \frac{1}{\alpha^2}\eta_{\kappa}^\mathrm{Riem}(\Phi_{\alpha,\sqrt{1-\alpha^2}})=\Theta(1)\text{ as $\alpha\to0$}\end{equation}
    Otherwise, if $\kappa\in\mathcal{K}$ is unbounded, then:
    \begin{equation}
        \frac{1}{\alpha^2}\eta_{\kappa}^\mathrm{Riem}(\Phi_{\alpha,\sqrt{1-\alpha^2}})\to\infty\text{ as }\alpha\to 0
    \end{equation}
    In particular, if $\kappa_f\in\mathcal{K}$ is bounded and $\kappa_g\in\mathcal{K}$ is unbounded, for some $f,g\in\mathcal{F}$,
    \begin{equation}\label{eqn:inequivalence}
        \widecheck\eta_{\kappa_f}^\mathrm{Riem}\ncong\widecheck\eta_{\kappa_g}^\mathrm{Riem}
    \end{equation}
\end{theorem}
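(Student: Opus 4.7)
The plan is to compute $\eta^\mathrm{Riem}_\kappa(\Phi_{\alpha,\sqrt{1-\alpha^2}})$ sharply in the bounded and unbounded regimes, from which the inequivalence~\eqref{eqn:inequivalence} follows. Working in the Bloch picture $\rho = \tfrac12(I+\mathbf{w}\cdot\sigma)$, $X = \tfrac12\mathbf{x}\cdot\sigma$, diagonalising in the eigenbasis of $\rho$ produces the explicit formula
\begin{equation*}
\|X\|^2_{\kappa,\rho} \;=\; \frac{(\mathbf{x}\cdot\hat n)^2}{1-w^2} \;+\; \frac{\kappa(r)}{1-w}\,\bigl|\mathbf{x}-(\mathbf{x}\cdot\hat n)\hat n\bigr|^2,\qquad w:=|\mathbf{w}|,\ \hat n:=\mathbf{w}/w,\ r:=\tfrac{1+w}{1-w}.
\end{equation*}
The output tangent $\Phi_{\alpha,\tau}(X) = \tfrac{\alpha x_1}{2}\sigma_1$ depends on $\mathbf{x}$ only through $x_1$, so maximising the ratio $\|\Phi(X)\|^2_{\kappa,\Phi(\rho)}/\|X\|^2_{\kappa,\rho}$ amounts to minimising the denominator in $(x_2,x_3)$ at fixed $x_1$ and then supping over $(\mathbf{w},x_1)$.

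For the bounded asymptotic~\eqref{eqn:asympRiembdd} I would first treat $\kappa_{\min}$. The identity $\kappa_{\min}(r)/(1-w)\equiv 1$ collapses the formula above to $\|X\|^2_{\kappa_{\min},\rho} = |\mathbf{x}|^2 + (\mathbf{x}\cdot\mathbf{w})^2/(1-w^2)$, a positive quadratic in $(x_2,x_3)$. Standard stationary-point analysis yields $\min_{x_2,x_3}\|X\|^2_{\kappa_{\min},\rho} = x_1^2/(1-w_1^2)$; the same formula applied at $\Phi(\rho)$ --- whose Bloch vector $(\alpha w_1,0,\tau)$ has squared length $1-\alpha^2(1-w_1^2)$ --- gives $\|\Phi(X)\|^2_{\kappa_{\min},\Phi(\rho)} = \alpha^2 x_1^2/(1-w_1^2)$. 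Thus the ratio is identically $\alpha^2$ at its optimum and $\eta^\mathrm{Riem}_{\kappa_{\min}}(\Phi_{\alpha,\sqrt{1-\alpha^2}}) = \alpha^2$. Theorem~\ref{thm:bddRiemEquiv}, which compares any two bounded Riemannian semi-norms up to constants depending only on $\kappa(0^+)$, then upgrades this to $\eta^\mathrm{Riem}_\kappa(\Phi_{\alpha,\sqrt{1-\alpha^2}}) = \Theta(\alpha^2)$ for every bounded $\kappa\in\mathcal{K}$.

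On the unbounded side, a single test pair suffices: take $\rho=I/2$ and $X=\sigma_1/2$, so $\|X\|^2_{\kappa,I/2}=1$. At the output $\Phi(I/2) = \tfrac12(I+\tau\sigma_3)$ the tangent $\Phi(X)=\tfrac\alpha 2\sigma_1$ is purely off-diagonal, yielding $\|\Phi(X)\|^2_{\kappa,\Phi(I/2)} = \alpha^2\kappa(r_\tau)/(1-\tau)$ with $r_\tau=(1+\tau)/(1-\tau)$. The functional identity $r\kappa(r)=\kappa(1/r)$ rewrites this as
\begin{equation*}
\eta^\mathrm{Riem}_\kappa(\Phi_{\alpha,\sqrt{1-\alpha^2}})\,/\,\alpha^2 \;\geq\; \frac{\kappa\bigl((1-\tau)/(1+\tau)\bigr)}{1+\tau} \;\xrightarrow[\alpha\to 0^+]{}\; \frac{\kappa(0^+)}{2},
\end{equation*}
which diverges precisely when $\kappa$ is unbounded. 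The inequivalence~\eqref{eqn:inequivalence} follows at once: for bounded $\kappa_f$ and unbounded $\kappa_g$, the ratio $\widecheck{\eta}^\mathrm{Riem}_{\kappa_f}/\widecheck{\eta}^\mathrm{Riem}_{\kappa_g} = \eta^\mathrm{Riem}_{\kappa_g}/\eta^\mathrm{Riem}_{\kappa_f}$ evaluated at $(\mathrm{id},\Phi_{\alpha,\sqrt{1-\alpha^2}})$ is unbounded as $\alpha\to 0^+$, so no finite $\beta$ can bound it uniformly in channels.

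The main obstacle is the sharp equality $\eta^\mathrm{Riem}_{\kappa_{\min}}(\Phi_{\alpha,\sqrt{1-\alpha^2}}) = \alpha^2$: it relies on the $\kappa_{\min}$-specific cancellation $\kappa_{\min}(r)/(1-w)\equiv 1$, which turns both the input and output norms into simple rational expressions whose minimisation in $(x_2,x_3)$ produces the matching factor $1/(1-w_1^2)$ on both sides. Without this collapse the optimisation is considerably more delicate for a generic bounded $\kappa$, which is exactly why Theorem~\ref{thm:bddRiemEquiv} is invoked to propagate the asymptotic across the bounded class.
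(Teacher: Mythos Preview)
Your proof is correct, and for the unbounded case it is cleaner than the paper's. The bounded case is handled identically in both: compute $\eta_{\kappa_{\min}}^\mathrm{Riem}(\Phi_{\alpha,\sqrt{1-\alpha^2}})=\alpha^2$ exactly (you do the Bloch-picture minimisation explicitly, the paper cites \cite[Theorem~6.2]{HR15}) and then invoke Theorem~\ref{thm:bddRiemEquiv} to transfer $\Theta(\alpha^2)$ across the bounded class.

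For the unbounded case the two arguments diverge. The paper works with the full integral representation $\kappa=\int\kappa_s\,dm(s)$, applies Fatou's lemma to push $\liminf_{\alpha\to 0}$ inside, minimises $\|X\|_{\kappa,\rho}^2$ over $\mathbf{y}$ via a quadratic-form lemma from \cite{HR15}, and then sups over all $\mathbf{w}$, eventually landing on the same lower bound $\kappa(0^+)/2$. Your single test pair $(\rho,X)=(I/2,\sigma_1/2)$ bypasses all of this: at $\rho=I/2$ the input norm is simply $1$, while the output tangent $\tfrac{\alpha}{2}\sigma_1$ is purely off-diagonal relative to $\Phi(I/2)=\tfrac12(I+\tau\sigma_3)$, so only the $\kappa(r_\tau)/(1-\tau)$ term of your qubit formula survives. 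The functional equation $r\kappa(r)=\kappa(1/r)$ then gives $\kappa((1-\tau)/(1+\tau))/(1+\tau)\to\kappa(0^+)/2$ directly. This is strictly more economical; the paper's route has the minor advantage of showing that the bound $\kappa(0^+)/2$ is robust under optimisation over all of $\mathbf{w}$ (not just $\mathbf{w}=0$), but that is not needed for the theorem as stated.
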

\begin{proof}
    $ $\\
    The bounded case \eqref{eqn:asympRiembdd} is immediately deduced via $\eta_{\kappa_\mathrm{min}}^\mathrm{Riem}(\Phi_{\alpha,\sqrt{1-\alpha^2}})=\alpha^2$ \cite[Theorem 6.2]{HR15} and the fact that all Riemannian contraction coefficients are equivalent, for bounded $\kappa$, by Theorem \ref{thm:bddRiemEquiv}. We henceforth consider only the unbounded setting of $\kappa$.

    \noindent For $\rho=\frac{I+\mathbf{w}\cdot\sigma}{2},X=\mathbf{y}\cdot \sigma,\xi_s(x):=(1+s)^2-(1-s)^2x=(1+s)^2(1-x)+4sx$. We will use the following results (the latter is taken from \cite[Appendix B.2.7]{HR15}):
    \begin{equation*}
    \kappa(x)=\int_{[0,1]}\kappa_s(x)\:dm(s),\;\|X\|_{\kappa_s,\rho}^2=\frac{4(1+s)^2}{\xi_s(|\mathbf{w}|^2)}\left[|\mathbf{y}|^2+\frac{4s(\mathbf{w}\cdot \mathbf{y})^2}{(1+s)^2(1-|\mathbf{w}|^2)}\right]
    \end{equation*}
    Note that $\xi_s(x)$ is a decreasing function with $\xi_s(1)=4s$. We can apply these results to the image states of $\Phi_{\alpha,\sqrt{1-\alpha^2}}$:
    \begin{align*}
        \liminf_{\alpha\to0}\frac{1}{\alpha^2}\|\Phi_{\alpha,\sqrt{1-\alpha^2}}(X)\|_{\kappa,\Phi_{\alpha,\sqrt{1-\alpha^2}}(\rho)}^2&=\liminf_{\alpha\to0}\int\frac{4(1+s)^2}{\xi_s(1-\alpha^2(1-w_1^2))}\cdot\left(y_1^2+\frac{4s(w_1y_1)^2}{(1+s)^2(1-w_1^2)}\right)\:dm(s)\\&\geq\int\liminf_{\alpha\to0}\frac{4(1+s)^2}{\xi_s(1-\alpha^2(1-w_1^2))}\cdot\left(y_1^2+\frac{4s(w_1y_1)^2}{(1+s)^2(1-w_1^2)}\right)\:dm(s)\\
        &=\int\frac{(1+s)^2}{s}\cdot y_1^2\left(1+\frac{4sw_1^2}{(1+s)^2(1-w_1^2)}\right)\:dm(s)
    \end{align*}
    The inequality above comes from applying Fatou's lemma. To compute a lower bound on the contraction coefficient $\eta_\kappa^\mathrm{Riem}(\Phi_{\alpha,\sqrt{1-\alpha^2}})$, wlog we take $y_1=1$, which makes $\|\Phi_{\alpha,\sqrt{1-\alpha^2}}(X)\|_{\kappa,\Phi(\rho)}^2$ independent of $\mathbf{y}$. We therefore proceed to minimise and upper bound $\|X\|_{\kappa_s,\rho}^2$ w.r.t. $\mathbf{y}$ for fixed $\mathbf{w}$ \cite[Lemma B.2]{HR15}:
    \begin{equation*}
        \min_{\mathbf{y}:y_1=1}\frac{4(1+s)^2}{\xi_s(|\mathbf{w}|^2)}\left[|\mathbf{y}|^2+\frac{4s(\mathbf{w}\cdot \mathbf{y})^2}{(1+s)^2(1-|\mathbf{w}|^2)}\right]=\frac{\mu(\mu+\nu|\mathbf{w}|^2)}{\mu+\nu(|\mathbf{w}|^2-w_1^2)}
    \end{equation*}
    where $\mu:=\int\frac{4(1+s)^2}{\xi_s(|\mathbf{w}|^2)}\:dm(s),\nu:=\int\frac{16}{\xi_s(|\mathbf{w}^2|)}\cdot\frac{s}{1-|\mathbf{w}|^2}\:dm(s)$. Note that $\mu+\nu|\mathbf{w}|^2=\frac{4}{1-|\mathbf{w}|^2}$, and thus:
    \begin{align}\label{eqn:inequivsimplify}
        \begin{split}
            \frac{\mu\cdot \frac{4}{1-|\mathbf{w}|^2}}{\frac{4}{1-|\mathbf{w}|^2}-\nu\cdot w_1^2}&=\frac{4\mu}{4-\nu w_1^2(1-|\mathbf{w}|^2)}=\frac{\int\frac{16(1+s)^2}{\xi_s(|\mathbf{w}|^2)}\:dm(s)}{4-w_1^2\int\frac{16s}{\xi_s(|\mathbf{w}|^2)}\:dm(s)}\\&\leq \sup_s \frac{16(1+s)^2}{4\xi_s(|\mathbf{w}|^2)-w_1^2\cdot16s}=\sup_s\frac{4}{1-w_1^2-\frac{(1-s)^2}{(1+s)^2}(w_2^2+w_3^2)}\\&=\frac{4}{1-|\mathbf{w}|^2}
        \end{split}
    \end{align}
    Now, we can establish a lower bound on $\eta_\kappa^\mathrm{Riem}(\Phi_{\alpha,\sqrt{1-\alpha^2}})$ as $\alpha\to0$:
    \begin{align*} \liminf_{\alpha\to0}\sup_{\mathbf{w},\mathbf{y}:|\mathbf{w}|<1,y_1=1}\frac{1}{\alpha^2}&\frac{\|\Phi_{\alpha,\sqrt{1-\alpha^2}}(X)\|_{\kappa,\Phi_{\alpha,\sqrt{1-\alpha^2}}(\rho)}^2}{\|X\|_{\kappa,\rho}^2}\\&\geq\sup_{\mathbf{w},\mathbf{y}:|\mathbf{w}|<1,y_1=1}\liminf_{\alpha\to0}\frac{1}{\alpha^2}\frac{\|\Phi_{\alpha,\sqrt{1-\alpha^2}}(X)\|_{\kappa,\Phi_{\alpha,\sqrt{1-\alpha^2}}(\rho)}^2}{\|X\|_{\kappa,\rho}^2}\\&=\sup_{\mathbf{w}:|\mathbf{w}|<1}\frac{\int\frac{(1+s)^2}{s}\cdot\left(1+\frac{4sw_1^2}{(1+s)^2(1-w_1^2)}\right)\:dm(s)}{\frac{\mu(\mu+\nu|\mathbf{w}|^2)}{\mu+\nu(|\mathbf{w}|^2-w_1^2)}}\\&\geq \frac{1}{4} \sup_{\mathbf{w}:|\mathbf{w}|<1}(1-|\mathbf{w}|^2)\int\frac{(1+s)^2}{s}\cdot\left(1+\frac{4sw_1^2}{(1+s)^2(1-w_1^2)}\right)\:dm(s) \\&=\frac{1}{4} \sup_{|w_1|<1}\int\frac{(1+s)^2}{s}(1-w_1^2+\frac{4sw_1^2}{(1+s)^2})\:dm(s) \\&=\frac{1}{4} \sup_{|w_1|<1}\int\frac{(1+s)^2}{s}(1-\frac{(1-s)^2}{(1+s)^2}w_1^2)\:dm(s) \\&=\frac{1}{4} \int\frac{(1+s)^2}{s}\:dm(s)=\frac{\kappa(0^+)}{2}
    \end{align*}
    The first inequality comes from the fact that, for any real function $f(\alpha,\mathbf{w},\mathbf{y})$, 
    \[
    \liminf_{\alpha\to 0}\sup_{\mathbf{w},\mathbf{y}}f(\alpha,\mathbf{w},\mathbf{y})\geq\liminf_{\alpha\to 0}f(\alpha,\mathbf{w}',\mathbf{y}')\quad\forall\mathbf{w}',\mathbf{y}'\quad
    \implies\quad\liminf_{\alpha\to 0}\sup_{\mathbf{w},\mathbf{y}}f(\mathbf{w},\mathbf{y})\geq \sup_{\mathbf{w},\mathbf{y}}\liminf_{\alpha\to 0}f(\mathbf{w},\mathbf{y})
    \]
    The second inequality used \eqref{eqn:inequivsimplify}, and the second equality came from the fact that for any fixed $w_1$, $w_2=w_3=0$ is optimal. The final equality came from the integral representation  \eqref{opmon} of $\kappa$.
    This gives the result in the unbounded case:
    \begin{equation*}
        \lim_{\alpha\to0}\frac{1}{\alpha^2}\eta_\kappa^\mathrm{Riem}(\Phi_{\alpha,\sqrt{1-\alpha^2}})=\liminf_{\alpha\to0}\frac{1}{\alpha^2}\eta_\kappa^\mathrm{Riem}(\Phi_{\alpha,\sqrt{1-\alpha^2}})\geq \frac{\kappa(0^+)}{2}=\infty
    \end{equation*}
\end{proof}

Note that since $\eta_g^\mathrm{std}(\Phi_{\alpha,\tau})\geq\eta_{\kappa_g}^\mathrm{Riem}(\Phi_{\alpha,\tau})$, we also have, for bounded $\kappa_f$ and unbounded $\kappa_g$,
\begin{equation}\label{eqn:divrieminequiv}
    \frac{1}{\alpha^2}\eta_g^\mathrm{std}(\Phi_{\alpha,\sqrt{1-\alpha^2}})\to\infty\text{ as }\alpha\to 0,\text{ and }\widecheck\eta_g^\mathrm{std}\ncong\widecheck\eta_{\kappa_f}^\mathrm{Riem}.
\end{equation}

This counterexample helps to establish that the Riemannian relative expansion coefficients in the bounded $\kappa$ setting form an entire equivalence class, as they are not equivalent to Riemannian coefficients corresponding to unbounded $\kappa$. For example, we find that $\widecheck\eta_{x\log x}^\mathrm{std}\equiv\widecheck\eta_\mathrm{BKM}^\mathrm{Riem}\ncong\widecheck\eta_{\kappa_f}^\mathrm{Riem}$ for all bounded $\kappa_f\in\mathcal{K}$. Thus, the notion of equivalence proves to capture important differences between distinguishability measures. The potentially many order of magnitude difference may have serious implications to the accuracy of using different Riemannian contraction coefficients to upper bound the mixing time of a quantum Markov chain based on a primitive channel (see Section~\ref{subsec:primitive}).

\subsection{A Redundancy for Strictly Positive Channels}\label{sec:fullrankoutputchannels}
In the setting of quantum-classical channels (or of classical channels), we saw that the divergence and Riemannian relative expansion coefficients are all the same, $\widecheck\eta_f^\mathrm{std}\equiv\widecheck\eta_\kappa^\mathrm{Riem}$ (Section~\ref{sec:qc}); so quantum channels with non-commutative output are necessary for any differences (e.g. \eqref{eqn:equalitycounter}). In the quantum case, we saw in Section~\ref{sec:bddvsunbdd} that we can sometimes have inequivalence (e.g. \eqref{eqn:divrieminequiv}). This subsection offers some further insight into this loss of equivalence, by restricting to strictly positive channels, i.e. quantum channels that map all states into full-rank states. Specifically, within the class of strictly positive channels, we can ensure the equivalence, $\widecheck\eta_f^\mathrm{std}\cong\widecheck\eta_\kappa^\mathrm{Riem}$. 

We will find it useful to define the following notation:
\begin{enumerate}
    \item $\lambda_\mathrm{min}(\rho)$ (resp. $\lambda_\mathrm{max}(\rho)$) denotes the minimum (resp. maximum) eigenvalue of a density operator $\rho\in\mathcal{D}_d$.
    \item The \textit{condition number} of a density operator $\rho\in\mathcal{D}(\mathcal{H})$ is $c(\rho):=\frac{\lambda_\mathrm{max}(\rho)}{\lambda_\mathrm{min}(\rho)}$.
    \item For a quantum channel $\mathcal{N}:\mathcal{B}(\mathcal{H}_A)\to\mathcal{B}(\mathcal{H}_B),$
    \[\lambda_\mathrm{min}(\mathcal{N}):=\underset{\rho\in\mathcal{D}(\mathcal{H}_A)}{\mathrm{min}}\;\lambda_\mathrm{min}(\mathcal{N}(\rho)),
    \]
    \[\lambda_\mathrm{max}(\mathcal{N}):=\underset{\rho\in\mathcal{D}(\mathcal{H}_A)}{\mathrm{max}}\;\lambda_\mathrm{max}(\mathcal{N}(\rho)).
    \]
    
\end{enumerate}

\noindent Observe that, by continuity, a quantum channel $\mathcal{N}$ is strictly positive iff $\lambda_\mathrm{min}(\mathcal{N})>0$. For a choice of parameter $\lambda\in(0,1)$, we will compare quantum channels from the class:
\begin{equation*}
    \mc Q_\lambda:=\{\text{CPTP maps $\mathcal{N}$ with }\lambda_{\min}(\mathcal{N})\geq \lambda\}.
\end{equation*}


\noindent The following results take inspiration from \cite[Appendix B.2.7]{HR15}, where it can be noticed from the qubit formulae for the Riemannian semi-norms, $\|X\|_{\kappa,\:\rho}^2$, that $\|X\|_{\kappa_\mathrm{max},\:\rho}^2$ often shows up as a factor. The proof methods used highlight the importance of having the integral representations \eqref{opcon}, \eqref{opmon}.

\begin{theorem}[Equivalence of Riemannian Coefficients on Strictly Positive Channels]\label{thm:fullrankRiemEquiv}
	$ $\newline 
    Any $\kappa \in \mathcal{K}$ satisfies for all positive definite density operators $\rho$, traceless Hermitian $X$, $\rho,X \in \mathcal{B}(\mathcal{H})$, some finite-dimensional Hilbert space $\mathcal{H}$:
	\begin{equation*}
	    \alpha\leq\frac{\|X\|^2_{\kappa,\:\rho}}{\|X\|^2_{\kappa_\mathrm{max},\:\rho}}\leq\beta
	\end{equation*}
    
	\noindent For $\alpha=\kappa(c(\rho))\geq \kappa(\lambda^{-1}),\:\beta=1$. \\ \\
    \noindent In particular, for the class of quantum channels $\mathcal{Q}=\mathcal{Q}_\lambda$, some $\lambda>0$,
    \begin{equation}
        \widecheck{\eta}^{\mathrm{Riem}}_{\kappa} \cong_{\alpha,\beta} \widecheck{\eta}^{\mathrm{Riem}}_{\kappa_\mathrm{max}}
    \end{equation} 
    For $\alpha=\kappa(\lambda^{-1}),\beta=\kappa(\lambda^{-1})^{-1}$
\end{theorem}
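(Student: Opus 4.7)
The plan is to prove the pointwise sandwich
\[
\kappa(c(\rho))\,\|X\|^2_{\kappa_{\max},\rho} \;\leq\; \|X\|^2_{\kappa,\rho} \;\leq\; \|X\|^2_{\kappa_{\max},\rho}
\]
at the level of Riemannian semi-norms, and then insert it into both the numerator and the denominator of the ratio defining $\widecheck\eta^{\mathrm{Riem}}_\kappa(\mathcal{N},\mathcal{M})$ to obtain the coefficient equivalence on $\mathcal{Q}_\lambda$. The key structural fact I would use is that on $(\mathcal{B}(\mathcal{H}),\langle\cdot,\cdot\rangle_{\mathrm{HS}})$ the superoperators $L_\rho,R_\rho$ commute, are self-adjoint, and are positive, so the functional calculus on $\Delta_{\rho,\rho}=L_\rho R_\rho^{-1}$ commutes with $R_\rho^{-1}$. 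A direct substitution with $\kappa_{\max}(x)=\tfrac{1}{2}(1+x^{-1})$ gives $R_\rho^{-1}\kappa_{\max}(\Delta_{\rho,\rho})=\tfrac{1}{2}(L_\rho^{-1}+R_\rho^{-1})$, and hence
\[
\|X\|^2_{\kappa_{\max},\rho} \;=\; \Tr X^2\rho^{-1} \;=\; \langle X, R_\rho^{-1}(X)\rangle_{\mathrm{HS}},
\]
which is the pivot of both bounds.

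For the upper bound $\beta=1$, I would invoke the integral representation $\kappa=\int_{[0,1]}\kappa_s\,dm(s)$ together with the already noted fact that $\kappa_s$ is pointwise decreasing in $s$, giving $\kappa\leq\kappa_0=\kappa_{\max}$ on $(0,\infty)$; functional calculus on $\Delta_{\rho,\rho}$ and multiplication by the commuting positive $R_\rho^{-1}$ then yields $\|X\|^2_{\kappa,\rho}\leq \|X\|^2_{\kappa_{\max},\rho}$. For the lower bound I would observe that the spectrum of $\Delta_{\rho,\rho}$ consists of ratios of eigenvalues of $\rho$, hence is contained in $[c(\rho)^{-1},c(\rho)]$, so $\Delta_{\rho,\rho}\leq c(\rho)\,\mathrm{id}$. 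Since $\kappa$ is operator monotone decreasing, $\kappa(\Delta_{\rho,\rho})\geq \kappa(c(\rho))\,\mathrm{id}$; multiplying by the commuting positive $R_\rho^{-1}$ preserves the order, and combining with the pivot identity gives
\[
\|X\|^2_{\kappa,\rho} \;=\; \langle X, R_\rho^{-1}\kappa(\Delta_{\rho,\rho})(X)\rangle_{\mathrm{HS}} \;\geq\; \kappa(c(\rho))\,\langle X, R_\rho^{-1}(X)\rangle_{\mathrm{HS}} \;=\; \kappa(c(\rho))\,\|X\|^2_{\kappa_{\max},\rho}.
\]

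To pass to the expansion coefficient on $\mathcal{Q}_\lambda$, I would note that any $\mathcal{N}\in\mathcal{Q}_\lambda$ satisfies $\lambda_{\min}(\mathcal{N}(\rho'))\geq\lambda$ and trivially $\lambda_{\max}(\mathcal{N}(\rho'))\leq 1$, so $c(\mathcal{N}(\rho'))\leq \lambda^{-1}$; monotonicity of $\kappa$ then upgrades the pointwise constant uniformly to $\kappa(\lambda^{-1})$. Applying the two-sided sandwich $\kappa(\lambda^{-1})\|\cdot\|^2_{\kappa_{\max}}\leq\|\cdot\|^2_\kappa\leq\|\cdot\|^2_{\kappa_{\max}}$ in the numerator and denominator of the defining ratio and taking the infimum gives
\[
\kappa(\lambda^{-1})\,\widecheck\eta^{\mathrm{Riem}}_{\kappa_{\max}}(\mathcal{N},\mathcal{M}) \;\leq\; \widecheck\eta^{\mathrm{Riem}}_{\kappa}(\mathcal{N},\mathcal{M}) \;\leq\; \kappa(\lambda^{-1})^{-1}\,\widecheck\eta^{\mathrm{Riem}}_{\kappa_{\max}}(\mathcal{N},\mathcal{M}),
\]
which is the claimed equivalence. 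I do not expect a serious obstacle: the whole argument is spectral calculus on the HS space, and the only point of care is confirming that a scalar operator inequality for $\kappa(\Delta_{\rho,\rho})$ can be pulled through $R_\rho^{-1}$, which holds precisely because $R_\rho^{-1}$ is a positive self-adjoint superoperator commuting with $\Delta_{\rho,\rho}$.
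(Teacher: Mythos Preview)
Your proposal is correct and follows essentially the same approach as the paper: bound $\kappa$ on the spectrum of $\Delta_{\rho,\rho}$, which lies in $[c(\rho)^{-1},c(\rho)]$, use that $\kappa$ is monotone decreasing to get $\kappa(\Delta_{\rho,\rho})\geq \kappa(c(\rho))\,\mathrm{id}$, and pass to channels in $\mathcal{Q}_\lambda$ via $c(\mathcal{N}(\rho'))\leq\lambda^{-1}$. The only cosmetic difference is that the paper writes out the explicit spectral sums $\sum_{a,b}b^{-1}\kappa(a/b)\Tr XP_aXP_b$ and bounds the ratio via $\min_{a,b}$ (using $\Tr XP_aXP_b\geq 0$), whereas you phrase the same inequality at the operator level and pull it through the commuting positive $R_\rho^{-1}$; your formulation is a bit cleaner but not a different route.
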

\begin{proof}
    Consider the spectral decomposition $\rho=\sum_{a\:\in\:\mathrm{supp}(\rho)}aP_a$ and the integral representation for $\kappa\in\mathcal{K}$  \eqref{opmon}:
    \[
    \kappa(x) = \int_{[0,1]} \frac{1 + s}{2} \left( \frac{1}{x + s} + \frac{1}{s x + 1} \right)\, dm(s)
    \]

    We can express the Riemannian semi-norms as:
    \begin{equation}\label{Riemdecomp}
        \|X\|_{\kappa,\:\rho}^2=\sum_{a,b\:\in\:\mathrm{spec}\:\rho} b^{-1}\kappa(ab^{-1})\:\Tr XP_aXP_b,\quad \|X\|_{\kappa_{\max},\:\rho}^2=\Tr X^2\rho^{-1} = \sum_{a\:\in\:\mathrm{spec}\:\rho}a^{-1}\Tr X^2P_a
    \end{equation}
    
    \noindent Therefore,
    \begin{align*}
        1&\geq\frac{\|X\|_{\kappa,\:\rho}^2}{\|X\|_{\kappa_\mathrm{max},\:\rho}^2}=\frac{\sum_{a,b\:\in\:\mathrm{supp}(\rho)}\:\Tr\:XP_aXP_b\:\cdot\:\int_{[0,1]}\frac{1+s}{2}(\frac{1}{a+sb}+\frac{1}{sa+b})\:dm(s)}{\sum_{a,b\:\in\:\mathrm{supp}(\rho)}\:\Tr\:XP_aXP_b\:\cdot\:\frac{1}{a}}\\
        &\geq\underset{a,b}{\mathrm{min}} \int_{[0,1]}\frac{1+s}{2}\left(\frac{a}{a+sb}+\frac{a}{sa+b} \right)\:dm(s)\\
        &=\underset{a,b}{\mathrm{min}}\:\kappa\left(\frac{b}{a}\right)=\kappa(c(\rho))>0
    \end{align*}
    \noindent Where the first inequality follows from the maximality of $\kappa_\mathrm{max}$ and the final equality follows from the fact $\kappa\in\mathcal{K}$ is necessarily (positive, operator-) monotone decreasing.\\
    
    Now considering these inequalities, we have for any quantum channels $\mathcal{N}:\mathcal{B}(\mathcal{H}_A)\to\mathcal{B}(\mathcal{H}_B),\mathcal{M}:\mathcal{B}(\mathcal{H}_A)\to\mathcal{B}(\mathcal{H}_B')\in\mathcal{Q}$ and this time $\rho,X\in\mathcal{B}(\mathcal{H}_A)$:

    \begin{equation}
        \kappa(c(\mathcal{N}(\mathcal{\rho})))\frac{\|\mathcal{N}(X)\|_{\kappa_{\max},\:\mathcal{N}(\rho)}^2}{\|\mathcal{M}(X)\|_{\kappa_{\max},\:\mathcal{M}(\rho)}^2}\leq \frac{\|\mathcal{N}(X)\|_{\kappa,\:\mathcal{N}(\rho)}^2}{\|\mathcal{M}(X)\|_{\kappa,\:\mathcal{M}(\rho)}^2} \leq \kappa(c(\mathcal{M}(\rho))^{-1}\frac{\|\mathcal{N}(X)\|_{\kappa_{\max},\:\mathcal{N}(\rho)}^2}{\|\mathcal{M}(X)\|_{\kappa_{\max},\:\mathcal{M}(\rho)}^2}
    \end{equation}

    Since $\mathcal{N},\mathcal{M}\in\mathcal{Q}_\lambda$, 
    \[c(\mathcal{N}(\mathcal{\rho})),c(\mathcal{M}(\mathcal{\rho}))\leq\lambda^{-1}\]
    The fact that $\kappa$ is monotone decreasing, and taking the infimum over all $\rho\in\mathcal{D}(\mathcal{H}_A), X\in T_\rho\mathcal{D}(\mathcal{H}_A)$, gives the result. 
\end{proof}

\begin{theorem}[Equivalence of Divergence Coefficients on Strictly Positive Channels]\label{thm:fullrankDivEquiv}
    $ $\newline 
    Any $f\in \mathcal{F}$ satisfies for all positive definite density operators $\rho,\gamma \in \mathcal{D}^+(\mathcal{H})$, some finite-dimensional Hilbert space $\mathcal{H}$:
    \begin{equation*}
        \alpha\leq\frac{D_f^\mathrm{std}(\rho\|\gamma)}{D_\mathrm{max}^\mathrm{std}(\rho\|\gamma)}\leq\beta
    \end{equation*}
    \noindent For $\alpha=\nu_f(\lambda_\mathrm{max}(\rho)/\lambda_\mathrm{min}(\gamma)),\beta=\nu_f(\lambda_\mathrm{min}(\rho)/\lambda_\mathrm{max}(\gamma))>0$, where $\nu_f(x):=\frac{f(x)-f'(1)(x-1)}{(x-1)^2}$.\\ \\
    \noindent In particular, for the class of quantum channels $\mathcal{Q}=\mathcal{Q}_\lambda$, some $\lambda>0$, \[\widecheck{\eta}^{\mathrm{std}}_{f} \cong \widecheck{\eta}^{\mathrm{std}}_{(x-1)^2}\]
    For $\alpha=\nu_f(\lambda^{-1})/\nu_f(\lambda),\:\beta=\nu_f(\lambda)/\nu_f(\lambda^{-1})$.
\end{theorem}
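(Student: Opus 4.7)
The plan is to isolate the linear part of $f$ (which contributes zero to $D_f^\mathrm{std}$ because $\Tr(\rho-\gamma)=0$) and reduce the whole comparison to a scalar spectral sandwich of $\nu_f$ applied to the relative modular operator. Substituting the integral representation~\eqref{opcon} into the definition $\nu_f(x)=(f(x)-f'(1)(x-1))/(x-1)^2$ gives
\begin{equation*}
\nu_f(x)=c+\frac{\mu(\{0\})}{x}+\int_{(0,\infty)}\frac{1}{x+s}\,d\mu(s),
\end{equation*}
which is manifestly strictly positive, continuous, and non-increasing on $(0,\infty)$, with $\nu_f(1)=f''(1)/2>0$ by L'Hôpital.

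Next, I exploit functional calculus on $\Delta_{\rho,\gamma}=L_\rho R_\gamma^{-1}$. For positive definite $\rho,\gamma$ its spectrum is exactly $\{a/b:a\in\mathrm{spec}\,\rho,\ b\in\mathrm{spec}\,\gamma\}$, which lies in $[\lambda_\mathrm{min}(\rho)/\lambda_\mathrm{max}(\gamma),\,\lambda_\mathrm{max}(\rho)/\lambda_\mathrm{min}(\gamma)]$. Using $\langle\gamma^{1/2},(\Delta_{\rho,\gamma}-1)(\gamma^{1/2})\rangle_\mathrm{HS}=\Tr(\rho-\gamma)=0$ to kill the linear term of $f$, one rewrites
\begin{equation*}
D_f^\mathrm{std}(\rho\|\gamma)=\langle\gamma^{1/2},(\Delta_{\rho,\gamma}-1)^2\,\nu_f(\Delta_{\rho,\gamma})\,(\gamma^{1/2})\rangle_\mathrm{HS},
\end{equation*}
while $D_\mathrm{max}^\mathrm{std}(\rho\|\gamma)=\langle\gamma^{1/2},(\Delta_{\rho,\gamma}-1)^2(\gamma^{1/2})\rangle_\mathrm{HS}$. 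Because $(\Delta_{\rho,\gamma}-1)^2$ commutes with $\nu_f(\Delta_{\rho,\gamma})$ and both are positive, the monotonicity of $\nu_f$ sandwiches the middle factor between the two scalar constants $\nu_f(\lambda_\mathrm{max}(\rho)/\lambda_\mathrm{min}(\gamma))$ and $\nu_f(\lambda_\mathrm{min}(\rho)/\lambda_\mathrm{max}(\gamma))$; pairing with $\gamma^{1/2}$ gives the stated pointwise inequality with $\alpha,\beta$ as in the statement.

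For the ``in particular'' claim, fix $\mathcal{N},\mathcal{M}\in\mathcal{Q}_\lambda$ and any admissible $\rho,\gamma$. Every image state $\mathcal{N}(\sigma)$ satisfies $\lambda_\mathrm{min}(\mathcal{N}(\sigma))\geq\lambda$ and $\lambda_\mathrm{max}(\mathcal{N}(\sigma))\leq 1$, so $\lambda_\mathrm{max}(\mathcal{N}(\rho))/\lambda_\mathrm{min}(\mathcal{N}(\gamma))\leq\lambda^{-1}$ and $\lambda_\mathrm{min}(\mathcal{N}(\rho))/\lambda_\mathrm{max}(\mathcal{N}(\gamma))\geq\lambda$, with the same bounds for $\mathcal{M}$. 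Feeding these uniform bounds into the first part, and using that $\nu_f$ is non-increasing, yields
\begin{equation*}
\nu_f(\lambda^{-1})\,D_{(x-1)^2}^\mathrm{std}(\mathcal{N}(\rho)\|\mathcal{N}(\gamma))\ \leq\ D_f^\mathrm{std}(\mathcal{N}(\rho)\|\mathcal{N}(\gamma))\ \leq\ \nu_f(\lambda)\,D_{(x-1)^2}^\mathrm{std}(\mathcal{N}(\rho)\|\mathcal{N}(\gamma)),
\end{equation*}
and analogously for $\mathcal{M}$. Dividing the $\mathcal{N}$-bound by the $\mathcal{M}$-bound and taking the infimum over $\rho\neq\gamma$ with equal support gives the desired equivalence with constants $\nu_f(\lambda^{-1})/\nu_f(\lambda)$ and $\nu_f(\lambda)/\nu_f(\lambda^{-1})$.

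The argument is essentially a scalar spectral sandwich, so there is no major analytic obstacle; the one delicate point is strict positivity. Without it, the spectrum of $\Delta_{\rho,\gamma}$ can accumulate at $0$ or $\infty$, the constants $\nu_f(\lambda_\mathrm{max}/\lambda_\mathrm{min})$ and $\nu_f(\lambda_\mathrm{min}/\lambda_\mathrm{max})$ degenerate to $0$ or $\infty$, and no uniform equivalence survives — consistent with the inequivalence observed in Theorem~\ref{thm:RiemInequiv} outside the strictly positive regime. One should also note that the continuous extension $\nu_f(1)=f''(1)/2$ makes the functional calculus unambiguous whenever $1$ lies in $\mathrm{spec}\,\Delta_{\rho,\gamma}$.
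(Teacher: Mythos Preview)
Your proof is correct and follows essentially the same approach as the paper: both isolate the linear part of $f$ (which contributes zero), write $D_f^\mathrm{std}$ as the $D_{(x-1)^2}^\mathrm{std}$ expression weighted by $\nu_f$ evaluated on the spectrum of $\Delta_{\rho,\gamma}$, and then sandwich $\nu_f$ using its monotonicity and the spectral bounds coming from the eigenvalue ratios. The only difference is presentational --- you phrase the argument via functional calculus on $\Delta_{\rho,\gamma}$, whereas the paper writes out the explicit double spectral sum $\sum_{a,b} b f(a/b)\,\Tr P_a Q_b$ --- but the content is identical.
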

\begin{proof} Consider the spectral decompositions $\rho=\sum_{a\:\in\:\mathrm{supp}(\rho)}aP_a,\:\gamma=\sum_{b\:\in\:\mathrm{supp}(\gamma)}bQ_b$ and the integral representation for $f\in\mathcal{F}$  (where $c\geq0$) \eqref{opcon}:
    \[
    f(x) = f'(1)(x - 1) + c(x - 1)^2 + \int_{[0,\infty)} \frac{(x - 1)^2}{x + s} \, d\mu(s), \quad x \in (0,\infty).
    \]
    Notice in particular that:
    \[
    \nu_f(x) = c + \int_{[0,\infty)} \frac{1}{x + s} \, d\mu(s), \quad x \in (0,\infty)
    \]
    is a decreasing, strictly positive, function and $\nu_f(\infty)=c$. 

    We can express standard $f$-divergences as:
    \[D_f^\mathrm{std}(\rho\|\gamma)=\sum_{a\:\in\:\mathrm{spec}(\rho),b\in\mathrm{spec}(\gamma)} bf\left(\frac{a}{b}\right)\Tr P_aQ_b
    \]
    
    Therefore:
    \begin{align*}
        \frac{D_f^\mathrm{std}(\rho\|\gamma)}{D_\mathrm{max}^\mathrm{std}(\rho\|\gamma)}&=\frac{\sum_{a\in\mathrm{spec}(\rho),b\in\mathrm{spec}(\gamma)}\Tr\:P_aQ_b\cdot[cb(a/b-1)^2+\int_{[0,\infty)}\:b\:\cdot\:\frac{(a/b-1)^2}{a/b+s}d\mu(s)]}{\sum_{a\in\mathrm{spec}(\rho),b\in\mathrm{spec}(\gamma)}\Tr\:P_aQ_b\cdot b(a/b-1)^2}\\
        &\in\left[ \underset{a,b}{\mathrm{min}}\;c+\int_{[0,\infty)}\frac{1}{a/b+s}\;d\mu(s),\:\underset{a,b}{\mathrm{max}}\;c+\int_{[0,\infty)}\frac{1}{a/b+s}\;d\mu(s)\right]\\
        &=[\nu_f(\lambda_\mathrm{max}(\rho)/\lambda_\mathrm{min}(\gamma)),\nu_f(\lambda_\mathrm{min}(\rho)/\lambda_\mathrm{max}(\gamma))]
    \end{align*}

    Now, considering these inequalities, we have for any quantum channels $\mathcal{N}:\mathcal{B}(\mathcal{H}_A)\to\mathcal{B}(\mathcal{H}_B),\:\mathcal{M}:\mathcal{B}(\mathcal{H}_A)\to\mathcal{B}(\mathcal{H}_B')\in\mathcal{Q}_\lambda$ and this time $\rho,\gamma\in\mathcal{D}(\mathcal{H})$:
    \[
    \frac{\nu_f\left(\frac{\lambda_\mathrm{max}(\mathcal{N}(\rho))}{\lambda_\mathrm{min}(\mathcal{N}(\gamma))}\right)}{\nu_f\left(\frac{\lambda_\mathrm{min}(\mathcal{M}(\rho))}{\lambda_\mathrm{max}(\mathcal{M}(\gamma))}\right)}\cdot\frac{D_\mathrm{max}^\mathrm{std}(\mathcal{N}(\rho)\|\mathcal{N}(\gamma))}{D_\mathrm{max}^\mathrm{std}(\mathcal{M}(\rho)\|\mathcal{M}(\gamma))}\leq \frac{D_f^\mathrm{std}(\mathcal{N}(\rho)\|\mathcal{N}(\gamma))}{D_f^\mathrm{std}(\mathcal{M}(\rho)\|\mathcal{M}(\gamma))}\leq\frac{\nu_f\left(\frac{\lambda_\mathrm{min}(\mathcal{N}(\rho))}{\lambda_\mathrm{max}(\mathcal{N}(\gamma))}\right)}{\nu_f\left(\frac{\lambda_\mathrm{max}(\mathcal{M}(\rho))}{\lambda_\mathrm{min}(\mathcal{M}(\gamma))}\right)}\cdot\frac{D_\mathrm{max}^\mathrm{std}(\mathcal{N}(\rho)\|\mathcal{N}(\gamma))}{D_\mathrm{max}^\mathrm{std}(\mathcal{M}(\rho)\|\mathcal{M}(\gamma))}
    \]
    But since $\nu_f(x)$ is decreasing:
    \[
    \frac{\nu_f(\lambda^{-1})}{\nu_f(\lambda)}\leq\frac{\nu_f\left(\frac{\lambda_\mathrm{max}(\mathcal{N}(\rho))}{\lambda_\mathrm{min}(\mathcal{N}(\gamma))}\right)}{\nu_f\left(\frac{\lambda_\mathrm{min}(\mathcal{M}(\rho))}{\lambda_\mathrm{max}(\mathcal{M}(\gamma))}\right)}\quad\text{ and }\quad \frac{\nu_f\left(\frac{\lambda_\mathrm{min}(\mathcal{N}(\rho))}{\lambda_\mathrm{max}(\mathcal{N}(\gamma))}\right)}{\nu_f\left(\frac{\lambda_\mathrm{max}(\mathcal{M}(\rho))}{\lambda_\mathrm{min}(\mathcal{M}(\gamma))}\right)}\leq\frac{\nu_f(\lambda)}{\nu_f(\lambda^{-1})}
    \]
    Taking the infimum over all $\rho\neq\gamma,\:\mathrm{supp}\:\rho=\:\mathrm{supp}\:\gamma$ thus gives the result. 
\end{proof}

\begin{corollary}\label{fullrankEquiv}
    $ $\newline
    For the class of quantum channels $\mathcal{Q}=\mathcal{Q}_\lambda$, some $\lambda>0$, and any $f\in\mathcal{F},\kappa\in\mathcal{K}$:
    \[
    \widecheck{\eta}^{\mathrm{std}}_{f} \cong \widecheck{\eta}^{\mathrm{Riem}}_{\kappa}
    \]
\end{corollary}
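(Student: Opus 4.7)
The plan is a direct chaining argument: the corollary is a transitive composition of the generic equality case from Theorem~\ref{thm:equalitycases}(i) with the two $\mathcal{Q}_\lambda$-equivalences established in Theorems~\ref{thm:fullrankDivEquiv} and~\ref{thm:fullrankRiemEquiv}. Transitivity is immediate from the definition of equivalence: if $\alpha_1\leq \widecheck{\eta}/\widecheck{\eta}' \leq \beta_1$ and $\alpha_2\leq \widecheck{\eta}'/\widecheck{\eta}'' \leq \beta_2$ uniformly on $\mathcal{Q}_\lambda$, then $\alpha_1\alpha_2\leq \widecheck{\eta}/\widecheck{\eta}'' \leq \beta_1\beta_2$, so no genuinely new work is required beyond identifying the right bridging function.

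Concretely, I would first take $f_0(x):=(x-1)^2$, which lies in the family covered by Theorem~\ref{thm:equalitycases}(i) (with $\alpha=1,\beta=0$) and whose induced kernel is
\[
\kappa_{f_0}(x) \;=\; \frac{(x-1)^2 + (x-1)^2/x}{2(x-1)^2} \;=\; \frac{x+1}{2x} \;=\; \kappa_{\max}(x).
\]
Theorem~\ref{thm:equalitycases}(i) then yields the \emph{generic} equality $\widecheck{\eta}^{\mathrm{std}}_{f_0}(\mathcal{N},\mathcal{M}) = \widecheck{\eta}^{\mathrm{Riem}}_{\kappa_{\max}}(\mathcal{N},\mathcal{M})$ for all pairs of channels, and in particular on $\mathcal{Q}_\lambda$. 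This is the crucial bridge that lets us pass between the standard-divergence and Riemannian worlds without leaving the equivalence class.

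With this bridge in place, I would invoke Theorem~\ref{thm:fullrankDivEquiv} to obtain explicit constants $0<a_1\leq b_1$ (ratios of $\nu_f$ at $\lambda$ and $\lambda^{-1}$) with $a_1\leq \widecheck{\eta}^{\mathrm{std}}_f/\widecheck{\eta}^{\mathrm{std}}_{f_0} \leq b_1$ on $\mathcal{Q}_\lambda$, and Theorem~\ref{thm:fullrankRiemEquiv} to obtain $0<a_2\leq b_2$ (involving $\kappa(\lambda^{-1})$) with $a_2\leq \widecheck{\eta}^{\mathrm{Riem}}_{\kappa_{\max}}/\widecheck{\eta}^{\mathrm{Riem}}_\kappa \leq b_2$. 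Multiplying these three pieces gives
\[
a_1 a_2 \;\leq\; \frac{\widecheck{\eta}^{\mathrm{std}}_f(\mathcal{N},\mathcal{M})}{\widecheck{\eta}^{\mathrm{Riem}}_\kappa(\mathcal{N},\mathcal{M})} \;\leq\; b_1 b_2
\]
for all $\mathcal{N},\mathcal{M}\in\mathcal{Q}_\lambda$, which is exactly the claimed equivalence $\widecheck{\eta}^{\mathrm{std}}_{f} \cong \widecheck{\eta}^{\mathrm{Riem}}_{\kappa}$.

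Since this is a pure composition, there is no substantive obstacle; the only things to check are that the intermediate ratios are well-defined and that the constants remain strictly positive on $\mathcal{Q}_\lambda$, both of which follow from the uniform lower bound $\lambda_{\min}(\mathcal{N}),\lambda_{\min}(\mathcal{M})\geq \lambda>0$ together with the monotonicity and strict positivity of $\nu_f$ and $\kappa$ already exploited in Theorems~\ref{thm:fullrankDivEquiv} and~\ref{thm:fullrankRiemEquiv}. The main conceptual takeaway is not any single calculation but rather the observation that restricting to strictly positive channels collapses the entire two-parameter landscape of relative expansion coefficients $(\widecheck\eta^{\mathrm{std}}_f,\widecheck\eta^{\mathrm{Riem}}_\kappa)$ into a single equivalence class, pivoting around the distinguished pair $((x-1)^2,\kappa_{\max})$.
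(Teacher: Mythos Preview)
Your proposal is correct and follows exactly the paper's approach: use the generic equality $\widecheck\eta_{(x-1)^2}^{\mathrm{std}}\equiv\widecheck\eta_{\kappa_{\max}}^{\mathrm{Riem}}$ from Theorem~\ref{thm:equalitycases} as the bridge, then chain with Theorems~\ref{thm:fullrankDivEquiv} and~\ref{thm:fullrankRiemEquiv} via transitivity of $\cong$. The paper's proof is the same one-line composition, just stated more tersely.
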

\begin{proof}
    Since $\eta_{(x-1)^2}^\mathrm{std}\equiv\eta_{\kappa_\mathrm{max}}^\mathrm{Riem}$ (see Theorem~\ref{thm:equalitycases}), applying Theorems \ref{thm:fullrankRiemEquiv} and \ref{thm:fullrankDivEquiv} gives the result.
\end{proof}

The interpretation of Corollary~\ref{fullrankEquiv} is that any inequivalence between divergence or Riemannian relative expansion coefficients is related to the behaviour of the distinguishability measure $D_f(\rho\|\gamma)$ as $\rho$ or $\gamma$ approach non-positive definite states. Notice that in the proof of Theorem~\ref{thm:RiemInequiv} may have taken advantage of this; the lower bound on the unbounded-case Riemannian contraction coefficient, that allowed us to establish inequivalence, was effectively obtained by taking the Bloch vectors $\mathbf{y}\to(1,0,0),\mathbf{w}\to(0,0,0)$. This limit of $\mathbf{y}$ corresponds to a difference between a pure state $\widetilde{\mathbf{w}}=(1,0,0)$ and the limit of $\mathbf{w}$, and $\widetilde{\mathbf{w}}$ is also mapped close to another pure state $\widetilde{\mathbf{w}}'=(0,0,1)$, under $\Phi_{\alpha,\sqrt{1-\alpha^2}}$, as we took the limit $\alpha\to 0$.

\section{Applications}\label{sec:applications}
In this section, we provide two main applications of relative expansion coefficients. We provide explicit estimates of relative expansion coefficients in the last subsection. 

\subsection{Approximate Recoverability}\label{subsec:recover}
A positive expansion coefficient suggests that a certain proportion of quantum information is preserved by a quantum channel. In this subsection, we will explore how this interpretation fits into the problem of recovering input states from the output of a quantum channel. 

Firstly, let us recall the following result from \cite{junge2018universal}, in the finite dimensional setting:
\begin{theorem}[Approximate Sufficiency of Quantum Relative Entropy]\label{thm:RelEntSuff}
    $ $\newline Suppose that we are provided a quantum channel $\mathcal{N}:\mathcal{B}(\mathcal{H}_A)\to\mathcal{B}(\mathcal{H}_B)$, and any quantum states $\rho,\gamma\in\mathcal{D}(\mathcal{H}_A)$ such that $\mathrm{supp}\:\rho\leq\mathrm{supp}\:\gamma$, then:
    \begin{equation}\label{eqn:jungerecover}
        D(\rho\|\gamma)-D(\mathcal{N}(\rho)\|\mathcal{N}(\gamma))\geq -2\log F(\rho,\mathcal{R}_{\gamma,\mathcal{N}}^\mathrm{uni}\circ\mathcal{N}(\rho))\geq \|\rho-\mathcal{R}_{\gamma,\mathcal{N}}^\mathrm{uni}\circ\mathcal{N}(\rho)\|_1^2
    \end{equation}
    where $F(\rho,\gamma):=\|\sqrt{\rho}\sqrt{\gamma}\|_1$ is the fidelity and the universal recovery map is:
    \begin{equation*}
        \mathcal{R}_{\gamma,\mathcal{N}}^\mathrm{uni}(\cdot):=\int_\mathbb{R} dt\:\beta_0(t)\:\mathcal{R}_{\gamma,\mathcal{N}}^\frac{t}{2}(\cdot)
    \end{equation*}
    for the rotated Petz recovery map $\mathcal{R}_{\gamma,\mathcal{N}}^t:\mathrm{supp}\:\mathcal{N}(\gamma)\to\mathcal{B}(\mathcal{H}_A)$ and $\beta_0(t)$ a probability density function on $\mathbb{R}$:
    \begin{equation*}
        \mathcal{R}_{\gamma,\mathcal{N}}^t:A\mapsto \gamma^{1/2-it}\widehat{\mathcal{N}}(\mathcal{N}(\gamma)^{-1/2+it}A\mathcal{N}(\gamma)^{-1/2-it})\gamma^{1/2+it},\quad \beta_0(t):=\frac{\pi}{2}(\cosh(\pi t)+1)^{-1}
    \end{equation*}
\end{theorem}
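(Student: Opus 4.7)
The plan is to decouple the chain \eqref{eqn:jungerecover} into two independent inequalities and attack them separately, essentially reproducing the template of the original proof in \cite{junge2018universal}. The rightmost inequality, $-2\log F(\rho,\sigma)\ge\|\rho-\sigma\|_1^2$, is a purely functional-analytic statement: one combines the Fuchs--van de Graaf inequality, which supplies $1-F(\rho,\sigma)^2\ge\tfrac{1}{4}\|\rho-\sigma\|_1^2$, with the elementary estimate $-\log x\ge 1-x$ on $x\in(0,1]$ applied to $x=F(\rho,\sigma)^2$. This step uses nothing about the channel $\mathcal{N}$ or the relative entropy structure and can be dispatched in a few lines (possibly up to a harmless constant that would have to be absorbed into the statement).

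The substantive content is the first inequality,
\[
D(\rho\|\gamma)-D(\mathcal{N}(\rho)\|\mathcal{N}(\gamma))\ge -2\log F\bigl(\rho,\mathcal{R}^{\mathrm{uni}}_{\gamma,\mathcal{N}}\circ\mathcal{N}(\rho)\bigr),
\]
which I would prove by the complex-interpolation argument of \cite{junge2018universal}. The central analytic object is an operator-valued function on the strip $S:=\{z\in\mathbb{C}:0\le\mathrm{Re}(z)\le 1\}$ built schematically as
\[
G(z):=\gamma^{z/2}\,\widehat{\mathcal{N}}\bigl(\mathcal{N}(\gamma)^{-z/2}\mathcal{N}(\rho)^{z/2}\bigr)\,\rho^{-z/2},
\]
which is analytic on the interior of $S$ and bounded on its closure once the hypothesis $\mathrm{supp}\,\rho\le\mathrm{supp}\,\gamma$ is invoked to make the generalised inverses well defined. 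On the line $\mathrm{Re}(z)=1$, an Araki--Lieb--Thirring / Golden--Thompson-type estimate identifies a suitable trace of $G(1)$ with $\exp\bigl(-\tfrac{1}{2}D(\rho\|\gamma)+\tfrac{1}{2}D(\mathcal{N}(\rho)\|\mathcal{N}(\gamma))\bigr)$. On the line $\mathrm{Re}(z)=0$, substituting $z=it$ turns $G(it)$ into the Heisenberg-picture implementation of the rotated Petz map $\mathcal{R}^{t/2}_{\gamma,\mathcal{N}}$, so the associated trace evaluates to the fidelity factor $F\bigl(\rho,\mathcal{R}^{t/2}_{\gamma,\mathcal{N}}\circ\mathcal{N}(\rho)\bigr)$. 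Hirschman's strengthening of Hadamard's three-lines lemma then replaces the $\log\sup$ over $\mathrm{Re}(z)=0$ by an integrated average against the Poisson kernel of $S$ evaluated at the interior point $1/2$, and a direct computation identifies this kernel with $\beta_0(t)=\tfrac{\pi}{2}(\cosh(\pi t)+1)^{-1}$. Concavity of $\log$ lets us pull this average inside the fidelity, producing exactly the universal recovery map $\mathcal{R}^{\mathrm{uni}}_{\gamma,\mathcal{N}}$ of the statement.

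The main obstacle is the analytic control of $G$ in the degenerate-support case: if $\mathcal{N}(\rho)$ or $\mathcal{N}(\gamma)$ fails to be strictly positive on $\mathcal{H}_B$, the complex powers $\mathcal{N}(\gamma)^{-z/2}$ only make sense on $\mathrm{supp}\,\mathcal{N}(\gamma)$, and one must check that the three-lines bound survives restriction to that subspace. I would circumvent this by first proving the inequality when all four operators $\rho,\gamma,\mathcal{N}(\rho),\mathcal{N}(\gamma)$ are strictly positive, using the perturbation $\rho_\varepsilon:=(1-\varepsilon)\rho+\varepsilon\gamma$ together with a small depolarising post-composition of $\mathcal{N}$, and then passing $\varepsilon\to 0^{+}$ by lower semi-continuity of $D$ and continuity of $F$. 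The truly delicate point --- and what separates this theorem from the earlier Fawzi--Renner bound --- is the recognition that the specific weight $\beta_0$, rather than a coarser average, is the one produced by the Poisson representation on the strip; pinning this down requires the full strength of Hirschman's refinement rather than the elementary three-lines estimate.
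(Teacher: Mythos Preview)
The paper does not prove this theorem at all: it is introduced with ``let us recall the following result from \cite{junge2018universal}'' and is simply quoted as a black box to feed into Corollary~\ref{cor:BKMrecovery}. So there is no ``paper's own proof'' to compare against; your proposal is effectively a sketch of the original argument in \cite{junge2018universal}, and as such it is faithful in its main lines --- the Hirschman/three-lines interpolation on the strip, the identification of the boundary values with the rotated Petz maps and the relative-entropy difference, and the emergence of $\beta_0$ as the Poisson kernel at the midpoint.

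One genuine quantitative slip: your derivation of the rightmost inequality via Fuchs--van de Graaf plus $-\log x\ge 1-x$ yields only $-2\log F\ge 1-F^2\ge\tfrac{1}{4}\|\rho-\sigma\|_1^2$, not $\|\rho-\sigma\|_1^2$. You flag this as ``possibly up to a harmless constant,'' but the factor $4$ is real and cannot be absorbed without either changing the norm convention or invoking a sharper inequality; the paper's displayed bound \eqref{eqn:jungerecover} appears to carry this same loose constant (and indeed the downstream Corollary~\ref{cor:BKMrecovery} only uses the qualitative content). If you want the inequality exactly as written you would need to check the normalisation of $\|\cdot\|_1$ being used or accept the $\tfrac{1}{4}$.
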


This result formalises an intuition that if there is only a small reduction in the relative entropy of two input states as they are transmitted through a quantum channel, then it should be possible to accurately recover the input states. The recovery map is universal, because it only depends on $\mathcal{N},\gamma$, and not $\rho$; this is similar to the perfect recovery result, \cite[Theorem 7.1]{Hiai_2011}, which in the case the relative entropy does not decrease, one may perfectly recover the input states using the Petz recovery map $\mathcal{R}_{\gamma,\mathcal{N}}^0$. By incorporating the definition of the relative expansion coefficient, we obtain the following corollary:

\begin{corollary}[Approximate Sufficiency via BKM Expansion Coefficient]\label{cor:BKMrecovery}
    $ $\newline Suppose that we are provided quantum channels $\mathcal{D}:\mathcal{B}(\mathcal{H}_A)\to\mathcal{B}(\mathcal{H}_A'),\mathcal{N}:\mathcal{B}(\mathcal{H}_A')\to\mathcal{B}(\mathcal{H}_B)$ and any quantum states $\rho,\gamma\in\mathcal{D}(\mathcal{H}_A)$ such that $\mathrm{supp}\:\rho\leq\mathrm{supp}\:\gamma$, then:
    \begin{equation}\label{eqn:BKMrecover}
        (1-\widecheck\eta_\mathrm{BKM}^\mathrm{Riem}(\mathcal{N};\mathrm{Im}\:\mathcal{D}))D(\mathcal{D}(\rho)\|\mathcal{D}(\gamma))\geq \|\mathcal{D}(\rho)-\mathcal{R}_{\mathcal{D}(\gamma),\mathcal{N}}^\mathrm{uni}\circ\mathcal{N}\circ\mathcal{D}(\rho)\|_1^2
    \end{equation}
    In particular, when $\widecheck\eta_\mathrm{BKM}^\mathrm{Riem}(\mathcal{N};\mathrm{Im}\:\mathcal{D})>0$, this upper bound (LHS) is an improvement over recovery via the replacer channel $\mathcal{R}_\gamma(\cdot):=\Tr(\cdot)\gamma$, which has the bound:
    \begin{equation}\label{eqn:hiddenpinsker}
        D(\mathcal{D}(\rho)\|\mathcal{D}(\gamma))\geq \|\mathcal{D}(\rho)-\mathcal{R}_{\mathcal{D}(\gamma)}\circ\mathcal{N}\circ\mathcal{D}(\rho)\|_1^2
    \end{equation}
\end{corollary}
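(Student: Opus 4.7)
The plan is to combine Theorem~\ref{thm:RelEntSuff} with the equality $\widecheck{\eta}_{x\log x}^{\mathrm{std}}\equiv\widecheck{\eta}_{\kappa_{\mathrm{BKM}}}^{\mathrm{Riem}}$ from Theorem~\ref{thm:equalitycases}(ii). First I would apply Theorem~\ref{thm:RelEntSuff} to the pushed-forward states $\mathcal{D}(\rho),\mathcal{D}(\gamma)$ (whose support condition $\mathrm{supp}\,\mathcal{D}(\rho)\leq\mathrm{supp}\,\mathcal{D}(\gamma)$ is inherited from the hypothesis) under the second channel $\mathcal{N}:\mathcal{B}(\mathcal{H}_A')\to\mathcal{B}(\mathcal{H}_B)$, obtaining
\begin{equation*}
    D(\mathcal{D}(\rho)\,\|\,\mathcal{D}(\gamma))-D(\mathcal{N}\circ\mathcal{D}(\rho)\,\|\,\mathcal{N}\circ\mathcal{D}(\gamma))\geq\|\mathcal{D}(\rho)-\mathcal{R}_{\mathcal{D}(\gamma),\mathcal{N}}^{\mathrm{uni}}\circ\mathcal{N}\circ\mathcal{D}(\rho)\|_1^2.
\end{equation*}
Then I would apply the definition of the divergence relative expansion coefficient to the channel pair $(\mathcal{N}\circ\mathcal{D},\mathcal{D})$ to lower-bound the output entropy by $\widecheck{\eta}_{x\log x}^{\mathrm{std}}(\mathcal{N}\circ\mathcal{D},\mathcal{D})\cdot D(\mathcal{D}(\rho)\|\mathcal{D}(\gamma))$, and finally invoke Theorem~\ref{thm:equalitycases}(ii) (with $f(x)=x\log x$) to rewrite this as $\widecheck{\eta}_{\kappa_{\mathrm{BKM}}}^{\mathrm{Riem}}(\mathcal{N};\mathrm{Im}\,\mathcal{D})\cdot D(\mathcal{D}(\rho)\|\mathcal{D}(\gamma))$. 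Rearranging gives the desired upper bound $(1-\widecheck{\eta}_{\kappa_{\mathrm{BKM}}}^{\mathrm{Riem}})\,D(\mathcal{D}(\rho)\|\mathcal{D}(\gamma))$ on the entropy gap, and chaining with the previous inequality yields \eqref{eqn:BKMrecover}.

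For the comparison with the replacer-channel bound, the observation is that $\mathcal{R}_{\mathcal{D}(\gamma)}\circ\mathcal{N}\circ\mathcal{D}(\rho)=\mathrm{Tr}(\mathcal{N}\circ\mathcal{D}(\rho))\cdot\mathcal{D}(\gamma)=\mathcal{D}(\gamma)$, so the RHS of \eqref{eqn:hiddenpinsker} collapses to $\|\mathcal{D}(\rho)-\mathcal{D}(\gamma)\|_1^2$ and \eqref{eqn:hiddenpinsker} becomes the same Pinsker-type bound invoked internally by Theorem~\ref{thm:RelEntSuff}. When $\widecheck{\eta}_{\kappa_{\mathrm{BKM}}}^{\mathrm{Riem}}(\mathcal{N};\mathrm{Im}\,\mathcal{D})>0$, the prefactor $(1-\widecheck{\eta}_{\kappa_{\mathrm{BKM}}}^{\mathrm{Riem}})<1$ strictly shrinks our upper bound on the universal-recovery trace-distance error below the replacer bound, establishing the improvement claim.

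A minor technical wrinkle is that the infimum defining $\widecheck{\eta}_{x\log x}^{\mathrm{std}}(\mathcal{N}\circ\mathcal{D},\mathcal{D})$ ranges over pairs of equal support, whereas our hypothesis only imposes $\mathrm{supp}\,\rho\leq\mathrm{supp}\,\gamma$; a standard continuity argument using $\rho_n:=(1-1/n)\rho+(1/n)\gamma$ (which satisfies $\mathrm{supp}\,\rho_n=\mathrm{supp}\,\gamma$) reduces the claim to the equal-support case. Beyond this, there is no real obstacle — the argument is essentially a two-step combination of two prior results. The genuine conceptual content of the corollary lies upstream in Theorem~\ref{thm:equalitycases}(ii): that equality is what converts an a priori divergence-based bound into a Riemannian one, and it is precisely this rewriting that makes the prefactor $(1-\widecheck{\eta})$ tractable to estimate in concrete channels, since Riemannian relative expansion coefficients are generally far easier to lower bound than their divergence counterparts (cf. Section~\ref{sec:explicitcoeffs}).
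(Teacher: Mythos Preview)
Your proposal is correct and follows essentially the same approach as the paper: apply Theorem~\ref{thm:RelEntSuff} to $\mathcal{D}(\rho),\mathcal{D}(\gamma)$ under $\mathcal{N}$, use the definition of the expansion coefficient together with the equality $\widecheck{\eta}_{x\log x}^{\mathrm{std}}\equiv\widecheck{\eta}_{\kappa_{\mathrm{BKM}}}^{\mathrm{Riem}}$ from Theorem~\ref{thm:equalitycases}(ii), and identify \eqref{eqn:hiddenpinsker} as Pinsker's inequality. Your treatment is in fact slightly more careful than the paper's, which does not comment on the equal-support restriction in the definition of $\widecheck{\eta}_{x\log x}^{\mathrm{std}}$ versus the weaker hypothesis $\mathrm{supp}\,\rho\leq\mathrm{supp}\,\gamma$; your continuity argument via $\rho_n=(1-1/n)\rho+(1/n)\gamma$ addresses this cleanly.
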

\begin{proof}
    
    Recalling that $\widecheck\eta_{x\log x}^\mathrm{std}\equiv\widecheck\eta_\mathrm{BKM}^\mathrm{Riem}$ (see \eqref{thm:equalitycases}), we have the following reverse data-processing inequality by the definition of the divergence expansion coefficient:
    \begin{equation*}
        D(\mathcal{N}\circ\mathcal{D}(\rho)\|\mathcal{N}\circ\mathcal{D}(\gamma))\geq \widecheck\eta_\mathrm{BKM}^\mathrm{Riem}(\mathcal{N};\mathrm{Im}\:\mathcal{D})\cdot D(\mathcal{D}(\rho)\|\mathcal{D}(\gamma))
    \end{equation*}
    Then, by combining this inequality with \eqref{eqn:jungerecover} applied to $\mathcal{D}(\rho),\mathcal{D}(\gamma)$ (certainly, $\mathrm{supp}\:\mathcal{D}(\rho)\leq\mathrm{supp}\:\mathcal{D}(\gamma)$), we obtain \eqref{eqn:BKMrecover}.

    Also note that \eqref{eqn:hiddenpinsker} is simply the Pinsker inequality, which holds in general for states $\mathrm{supp}\:\rho'\leq\mathrm{supp}\:\gamma'$, i.e.
    \begin{equation*}
        D(\rho'\|\gamma')\geq \|\rho'-\gamma'\|_1^2
    \end{equation*}
\end{proof}

In fact, this provides an alternative perspective to the positive expansion coefficients, $\widecheck\eta_\mathrm{BKM}^\mathrm{Riem}(\mathcal{N};\mathrm{Im}\:\mathcal{D})>0$, demonstrated in \cite{BGSW24}, and they now have a practical application to the approximate recoverability of input states to the quantum channels $\mathcal{N}$ considered. Recall that we restrict the domain of the input states to $\mathrm{Im}\:\mathcal{D}$, because here we are interested in $d_B\leq d_A'$. 

Analogous to Theorem~\ref{thm:RelEntSuff}, \cite{gao2023sufficient} provided various examples where a small reduction in a Riemannian semi-norm (or $\chi^2$-divergence), again between an input state and an input reference state, implies the accurate recovery of the input states. We discuss one key example from the paper (\cite[Corollary 4.12]{gao2023sufficient}):

\begin{theorem}[Approximate Sufficiency of $\alpha$-Metric]
    $ $\newline Suppose that we are provided a quantum channel $\mathcal{N}:\mathcal{B}(\mathcal{H}_A)\to\mathcal{B}(\mathcal{H}_B)$. For $\kappa_\alpha(x)=x^\alpha,\alpha\in(-\frac{1}{2},0)$ and $\gamma\in\mathcal{D}(\mathcal{H}),\mathrm{supp}\:A\leq\mathrm{supp}\:\gamma$, 
\begin{equation}\label{eqn:gaorecover}
    \|A\|_{\kappa_\alpha,\:\gamma}^2-\|\mathcal{N}(A)\|_{\kappa_\alpha,\:\mathcal{N}(\gamma)}^2\geq \left(\frac{\pi}{\cosh \pi t}\frac{\|A-\mathcal{R}_{\gamma,\:\mathcal{N}}^t(A)\|_1^2}{K(\gamma,A,\alpha)}\right)^\frac{2}{\alpha}
\end{equation}
for some non-negative function $K(\gamma,A,\alpha)$.
\end{theorem}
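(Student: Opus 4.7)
The plan is to mirror the non-commutative complex interpolation strategy of \cite{gao2023sufficient}, exploiting the fact that the rotated Petz maps $\mathcal{R}_{\gamma,\mathcal{N}}^{z}$ extend holomorphically in $z$ to a horizontal strip containing $\mathrm{Re}(z)=1/2$. First, using $\kappa_\alpha(x)=x^\alpha$, I would rewrite the Riemannian semi-norm as
\begin{equation*}
    \|A\|_{\kappa_\alpha,\gamma}^2 = \langle A, R_\gamma^{-1}\Delta_{\gamma,\gamma}^{\alpha}(A)\rangle_{\mathrm{HS}}
    = \Tr\bigl[A^*\,\gamma^{\alpha}\, A\, \gamma^{-1-\alpha}\bigr],
\end{equation*}
which displays $\|A\|_{\kappa_\alpha,\gamma}^2$ as a weighted $L^2$-type functional on an interpolation scale indexed by $\alpha$, with endpoints at Hilbert--Schmidt ($\alpha=0$) and the symmetric $\chi^2$-regime ($\alpha=-1/2$). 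In this language, the monotonicity $\|A\|_{\kappa_\alpha,\gamma}^2 \geq \|\mathcal{N}(A)\|_{\kappa_\alpha,\mathcal{N}(\gamma)}^2$ is exactly the non-negativity of the twisted inner-product gap between $A$ and its Petz-recovered image, which is the quantity I want to convert into an $L^1$-recovery bound.

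Next, I would introduce the analytic family $F(z):=A - \mathcal{R}_{\gamma,\mathcal{N}}^{z}(A)$ on a suitable strip. On one boundary of the strip, $F(z)$ admits a Hilbert-space estimate directly controlled by the metric decrease via Cauchy--Schwarz in the modular inner product defined by $R_\gamma^{-1}\Delta_{\gamma,\gamma}^{\alpha}$. On the other boundary, $F(z)$ is bounded crudely in trace norm by $\|A\|_1$ together with operator-norm control of $\gamma^{z}$ and $\mathcal{N}(\gamma)^{-z}$; these spectral quantities are packaged into the prefactor $K(\gamma,A,\alpha)$. Applying Hadamard's three-line lemma, with the Poisson kernel $\frac{\pi}{\cosh(\pi t)}$ emerging as the usual harmonic measure along $\mathrm{Re}(z)=1/2$, interpolates the two boundary estimates. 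Rearranging the resulting inequality and raising to the Riesz--Thorin-type exponent dictated by the interpolation position of $\alpha$ produces the stated bound; the exponent $2/\alpha$ reflects precisely the scaling needed to match $L^2$-metric data on one side to $L^1$-recovery data on the other.

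The hardest step will be quantifying the holomorphic extension of $\mathcal{R}_{\gamma,\mathcal{N}}^{z}$ uniformly across the strip. This requires uniform operator-norm estimates on $\gamma^{z}$, $\mathcal{N}(\gamma)^{z}$ and their inverses for complex $z$, which inevitably brings inverse powers of $\lambda_{\min}(\gamma)$ and $\lambda_{\min}(\mathcal{N}(\gamma))$ into $K(\gamma,A,\alpha)$. A secondary delicate point is that the $\alpha$-metric only genuinely satisfies the DPI in the restricted range $\alpha\in(-1/2,0)$ quoted, since this is exactly where $\kappa_\alpha$ lies in the operator-convex cone $\mathcal{K}$ relevant to monotone Riemannian structures; the interpolation argument works only within this range, and this constraint must be tracked consistently. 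Once these technical inputs are in hand, the proof reduces to the standard three-lines calculation and yields the recovery estimate in the form stated.
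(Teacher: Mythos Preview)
The paper does not prove this theorem at all: it is quoted verbatim as \cite[Corollary 4.12]{gao2023sufficient} and used as a black box, with no argument supplied beyond the citation. There is therefore no ``paper's own proof'' to compare your proposal against.

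Your sketch is a plausible outline of the complex-interpolation strategy that the cited reference actually uses (holomorphic extension of the rotated Petz map, Hadamard three-lines, Poisson-kernel factor), so in that sense you have correctly reverse-engineered where the statement comes from. One minor technical slip: you assert that $\kappa_\alpha(x)=x^\alpha$ lies in the class $\mathcal{K}$ for $\alpha\in(-\tfrac12,0)$, but the symmetry condition $x\kappa(x)=\kappa(x^{-1})$ required by $\mathcal{K}$ forces $\alpha=-\tfrac12$. The $\alpha$-metrics in this range are monotone but not symmetric in the sense of $\mathcal{K}$; the cited work handles them via a broader framework than the $\mathcal{K}$-parametrisation used elsewhere in the present paper, which is consistent with this theorem being imported rather than developed here.
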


And similarly, we can incorporate the Riemannian expansion coefficient into the result:
\begin{corollary}[Approximate Sufficiency via $\alpha$-Riemannian Expansion Coefficient]\label{cor:alpharecover}
    $ $\newline Suppose that we are provided quantum channels $\mathcal{D}:\mathcal{B}(\mathcal{H}_A)\to\mathcal{B}(\mathcal{H}_A'),\mathcal{N}:\mathcal{B}(\mathcal{H}_A')\to\mathcal{B}(\mathcal{H}_B)$ and any quantum states $\rho,\gamma\in\mathcal{D}(\mathcal{H}_A)$ such that $\mathrm{supp}\:\rho\leq\mathrm{supp}\:\gamma$, then:
\begin{equation}\label{eqn:alpharecover}
    (1-\widecheck\eta_{\kappa_\alpha}^\mathrm{Riem}(\mathcal{N};\mathrm{Im}\:\mathcal{D}))\|\mathcal{D}(\rho)-\mathcal{D}(\gamma)\|_{\kappa_\alpha,\:\mathcal{D}(\gamma)}^2\geq \left(\frac{\pi}{\cosh \pi t}\frac{\|\mathcal{D}(\rho)-\mathcal{R}_{\mathcal{D}(\gamma),\:\mathcal{N}}^t(\mathcal{N}\circ\mathcal{D}(\rho))\|_1^2}{K(\mathcal{D}(\gamma),\mathcal{D}(\rho)-\mathcal{D}(\gamma),\alpha)}\right)^\frac{2}{\alpha}
\end{equation}
for some non-negative function $K(\gamma,A,\alpha)$.
\end{corollary}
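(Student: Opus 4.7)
The plan is to mirror the argument used for Corollary~\ref{cor:BKMrecovery}, but substituting the Gao--Sutter style recoverability bound \eqref{eqn:gaorecover} for the Junge et al.\ bound \eqref{eqn:jungerecover}, and substituting the Riemannian expansion coefficient for the BKM one. The only conceptual ingredient needed is a reverse data-processing inequality at the level of the $\alpha$-metric, which is supplied directly by the definition of $\widecheck\eta_{\kappa_\alpha}^{\mathrm{Riem}}(\mathcal{N};\mathrm{Im}\:\mathcal{D})$.

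First I would set $A := \mathcal{D}(\rho)-\mathcal{D}(\gamma)$ and $\gamma' := \mathcal{D}(\gamma)$. Since $\mathrm{supp}\:\rho\leq\mathrm{supp}\:\gamma$ and $\mathcal{D}$ is a CPTP map, we have $\mathrm{supp}\:\mathcal{D}(\rho)\leq\mathrm{supp}\:\mathcal{D}(\gamma)$, so $A$ is Hermitian, traceless, with support in $\mathrm{supp}\:\gamma'$; in particular $A\in T_{\gamma'}\mathcal{D}(\mathcal{H}_A')$, which is what the Riemannian semi-norm requires. Applying \eqref{eqn:gaorecover} with $(A,\gamma)\rightsquigarrow(A,\gamma')$ and the channel $\mathcal{N}$ then yields
\begin{equation*}
    \|A\|_{\kappa_\alpha,\:\gamma'}^2-\|\mathcal{N}(A)\|_{\kappa_\alpha,\:\mathcal{N}(\gamma')}^2\;\geq\;\left(\frac{\pi}{\cosh\pi t}\frac{\|A-\mathcal{R}_{\gamma',\mathcal{N}}^{t}(\mathcal{N}(A))\|_1^2}{K(\gamma',A,\alpha)}\right)^{2/\alpha}.
\end{equation*}
Here I use linearity of $\mathcal{N}$ and of the rotated Petz map to rewrite $\mathcal{R}_{\gamma',\mathcal{N}}^{t}(\mathcal{N}(A)) = \mathcal{R}_{\mathcal{D}(\gamma),\mathcal{N}}^{t}(\mathcal{N}\circ\mathcal{D}(\rho))-\mathcal{D}(\gamma)$, so that $A-\mathcal{R}_{\gamma',\mathcal{N}}^{t}(\mathcal{N}(A))=\mathcal{D}(\rho)-\mathcal{R}_{\mathcal{D}(\gamma),\mathcal{N}}^{t}(\mathcal{N}\circ\mathcal{D}(\rho))$, matching the right-hand side of~\eqref{eqn:alpharecover}.

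Next, I invoke the definition of the Riemannian relative expansion coefficient. Since $\gamma'=\mathcal{D}(\gamma)\in\mathrm{Im}\:\mathcal{D}$ and $A\in T_{\gamma'}\mathcal{D}(\mathcal{H}_A')$, we have
\begin{equation*}
    \|\mathcal{N}(A)\|_{\kappa_\alpha,\:\mathcal{N}(\gamma')}^2\;\geq\;\widecheck\eta_{\kappa_\alpha}^{\mathrm{Riem}}(\mathcal{N};\mathrm{Im}\:\mathcal{D})\,\|A\|_{\kappa_\alpha,\:\gamma'}^2,
\end{equation*}
so that
\begin{equation*}
    \|A\|_{\kappa_\alpha,\:\gamma'}^2-\|\mathcal{N}(A)\|_{\kappa_\alpha,\:\mathcal{N}(\gamma')}^2\;\leq\;\bigl(1-\widecheck\eta_{\kappa_\alpha}^{\mathrm{Riem}}(\mathcal{N};\mathrm{Im}\:\mathcal{D})\bigr)\,\|A\|_{\kappa_\alpha,\:\gamma'}^2.
\end{equation*}
Chaining this with the Gao--Sutter inequality above gives exactly \eqref{eqn:alpharecover}.

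The only remotely delicate point is the support/tangent-space bookkeeping: one must verify that the hypotheses of \cite[Corollary~4.12]{gao2023sufficient} are met for the shifted pair $(A,\gamma')$ (in particular $\mathrm{supp}\:A\leq\mathrm{supp}\:\gamma'$) and that the shifted recovery identity $\mathcal{R}_{\gamma',\mathcal{N}}^{t}(\mathcal{N}(A))+\mathcal{D}(\gamma)=\mathcal{R}_{\mathcal{D}(\gamma),\mathcal{N}}^{t}(\mathcal{N}\circ\mathcal{D}(\rho))$ really does hold, which it does because $\mathcal{R}_{\gamma',\mathcal{N}}^{t}$ is linear and fixes $\gamma'$ on its range, i.e. $\mathcal{R}_{\gamma',\mathcal{N}}^{t}(\mathcal{N}(\gamma'))=\gamma'$. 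Beyond this routine check, nothing new is required; the proof is essentially a one-line combination of the reverse DPI induced by $\widecheck\eta_{\kappa_\alpha}^{\mathrm{Riem}}$ with the stated recoverability bound, exactly in parallel with the BKM case.
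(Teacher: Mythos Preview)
Your proposal is correct and follows essentially the same approach as the paper: apply \eqref{eqn:gaorecover} to the pair $(A,\gamma')=(\mathcal{D}(\rho)-\mathcal{D}(\gamma),\mathcal{D}(\gamma))$, use the identity $\mathcal{R}_{\mathcal{D}(\gamma),\mathcal{N}}^{t}(\mathcal{N}\circ\mathcal{D}(\gamma))=\mathcal{D}(\gamma)$ to simplify the right-hand side, and combine with the reverse DPI coming from the definition of $\widecheck\eta_{\kappa_\alpha}^{\mathrm{Riem}}(\mathcal{N};\mathrm{Im}\:\mathcal{D})$. Your write-up is in fact slightly more explicit than the paper's about the support/tangent-space verification and the linearity argument behind the recovery identity, but the logical structure is identical.
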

\begin{proof}
    We have the following reverse data-processing inequality by the definition of the Riemannian expansion coefficient:
    \begin{equation*}
        \|\mathcal{N}\circ\mathcal{D}(\rho)-\mathcal{N}\circ\mathcal{D}(\gamma)\|_{\kappa_\alpha,\:\mathcal{N}\circ\mathcal{D}(\gamma)}^2\geq \widecheck\eta_{\kappa_\alpha}^\mathrm{Riem}(\mathcal{N};\mathrm{Im}\:\mathcal{D})\cdot \|\mathcal{D}(\rho)-\mathcal{D}(\gamma)\|_{\kappa_\alpha,\:\mathcal{D}(\gamma)}^2
    \end{equation*}

    Also, note that $\mathcal{R}_{\mathcal{D}(\gamma),\:\mathcal{N}}^t(\mathcal{N}\circ\mathcal{D}(\gamma))=\mathcal{D}(\gamma)$, so 
    \[
        \|\mathcal{D}(\rho)-\mathcal{D}(\gamma)-\mathcal{R}_{\mathcal{D}(\gamma),\mathcal{N}}^t(\mathcal{D}(\rho)-\mathcal{D}(\gamma))\|_1^2=\|\mathcal{D}(\rho)-\mathcal{R}_{\mathcal{D}(\gamma),\:\mathcal{N}}^t(\mathcal{N}\circ\mathcal{D}(\rho))\|_1^2
    \]
    
    Then, by combining the reverse data-processing inequality with \eqref{eqn:gaorecover} applied to $\mathcal{D}(\rho),\mathcal{D}(\gamma)$, we obtain \eqref{eqn:alpharecover}.

\end{proof}
Importantly, \cite{gao2023sufficient} developed a technique (via \cite[Lemma 4.4]{gao2023sufficient}) to find similar approximate sufficiency bounds for different Riemannian semi-norms, i.e. different $\kappa\in\mathcal{K}$. For any such inequality, we can translate them into bounds based on Riemannian expansion coefficients in much the same way as Corollary~\ref{cor:alpharecover}. 

In this paper, we show examples of positive Riemannian expansion coefficients, simultaneously over all $\kappa\in\mathcal{K}$ for particular choices of channels. These cases may yield non-trivial recovery bounds, like how they implied better recovery bounds using the universal recovery map over the replacer channel in Corollary~\ref{cor:BKMrecovery}. We will see such positive expansion coefficients in the context of primitive quantum channels (Section~\ref{subsec:primitive}), and later we find them again when we consider qubit channels (Section~\ref{sec:explicitcoeffs}), such as dephasing. 


\subsection{Primitive Quantum Channels and Quantum Markov Chains}\label{subsec:primitive}
A Markov chain describes the evolution of a classical or quantum state over multiple iterations of a channel. Contraction coefficients were first introduced to study the evolution of the relative entropy in a classical Markov chain \cite{Cohen1993}, and were later applied to Markov chains based on ergodic quantum channels, which satisfy a quantum detailed balance condition with respect to some steady states \cite{Temme_2010}. Specifically, they are used to upper bound the mixing time --- the number of iterations it takes for the image of the channel to be contained within an $\varepsilon$-net of the fixed points. 

Quantum primitive channels are precisely quantum channels $\mathcal{N}:\mathcal{B}(\mathcal{H})\to\mathcal{B}(\mathcal{H})$ such that for sufficiently large $m\in\mathbb{N}$, $\mathrm{Im}\:\mathcal{N}^m$ contains only full rank states, i.e. $\lambda_\mathrm{min}(\mathcal{N}^m)>0$. They can be equivalently characterised as quantum channels with a unique full-rank fixed point \cite{SGWC10}. When trying to bound the trace distance of the evolution of a state to the steady state, Riemannian contraction and expansion coefficients allow easy access to the spectral properties of the quantum channel, as we will discuss. Studying the convergence rate via expansion coefficients is a new contribution of this work. 

There are some comments that we can make on upper bounding the convergence rate of discrete time-homogeneous quantum Markov chains, based on Section~\ref{thm:RiemInequiv}, so we recall the following convergence theorem:

\begin{theorem}[Quantum Markov Convergence Theorem]\cite[Proposition 24]{GT25}\label{thm:convergencethm}
    Let $\rho^*\in\mathcal{D}^+(\mathcal{H})$ be the fixed point of the primitive quantum channel $\mathcal{N}:\mathcal{B}(\mathcal{H})\to\mathcal{B}(\mathcal{H})$. For any $\kappa\in\mathcal{K},n\in\mathbb{N}$,
    \begin{equation}\label{eqn:convergencethm}
        \|\mathcal{N}^n(\rho)-\rho^*\|_1\leq\eta_\kappa^\mathrm{Riem}(\mathcal{N},\rho^*)^{n/2}\|\rho-\gamma\|_{\kappa,\rho^*}
    \end{equation}
    where
    \begin{equation*}
        \eta_\kappa^\mathrm{Riem}(\mathcal{N},\rho^*):=\inf_{\substack{\rho\in \mc D(\mathcal{H}_A)}\setminus\{\rho^*\}}\frac{\|\mathcal{N}(\rho)-\rho^*\|_{\kappa,\:\rho^*}^2}{\|\rho-\rho^*\|_{\kappa,\:\rho^*}^2}
    \end{equation*}
    In particular, for $\|\mathcal{N}^n(\rho)-\rho^*\|_1\leq\delta$ for all $\rho\in\mathcal{D}(\mathcal{H})$, it suffices to have $n\geq \log\left(\frac{2}{\delta^2\lambda_{\min}}\right)/\log\left(1/\eta_\kappa^\mathrm{Riem}(\mathcal{N},\rho^*)\right)$. In other words:
    \begin{equation*}
        t_{\mathrm{mix}}\leq\log\left(\frac{2}{\delta^2\lambda_{\min}}\right)/\log\left(1/\eta_\kappa^\mathrm{Riem}(\mathcal{N},\rho^*)\right)
    \end{equation*}
\end{theorem}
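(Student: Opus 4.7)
I would combine three ingredients: an iterated Riemannian contraction (Step 1), a quantum $\chi^2$-Pinsker inequality that bridges the Riemannian semi-norm and the trace norm (Step 2), and a uniform bound on the initial Riemannian distance that delivers the mixing-time half (Step 3). For the conclusion to make sense, I read $\eta_\kappa^{\mathrm{Riem}}(\mathcal{N},\rho^*)$ as the worst-case (supremum) contraction ratio on the tangent space $T_{\rho^*}\mathcal{D}(\mathcal{H})$, which is the standard convention for a contraction coefficient at a fixed point.

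\textbf{Step 1 (iterated contraction).} Because $\mathcal{N}(\rho^*)=\rho^*$ and $\mathcal{N}$ is linear, $Y_n:=\mathcal{N}^n(\rho)-\rho^*=\mathcal{N}(Y_{n-1})$ stays in $T_{\rho^*}\mathcal{D}(\mathcal{H})$ (Hermitian and traceless) for every $n$. Evaluating the defining ratio of $\eta_\kappa^{\mathrm{Riem}}(\mathcal{N},\rho^*)$ at the state $\rho^*+Y_{n-1}$ gives the single-step bound $\|Y_n\|_{\kappa,\rho^*}^2\le \eta_\kappa^{\mathrm{Riem}}(\mathcal{N},\rho^*)\|Y_{n-1}\|_{\kappa,\rho^*}^2$, which iterates to
\[
\|\mathcal{N}^n(\rho)-\rho^*\|_{\kappa,\rho^*}^2\le\eta_\kappa^{\mathrm{Riem}}(\mathcal{N},\rho^*)^{n}\,\|\rho-\rho^*\|_{\kappa,\rho^*}^2.
\]

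\textbf{Step 2 (quantum $\chi^2$-Pinsker; the technical core).} The claim is that $\|\rho-\rho^*\|_1^2\le\|\rho-\rho^*\|_{\kappa,\rho^*}^2$ for every $\kappa\in\mathcal{K}$. Let $\mathcal{E}$ be the two-outcome Helstrom measurement channel separating $\rho$ and $\rho^*$, so that $\|\mathcal{E}(\rho)-\mathcal{E}(\rho^*)\|_1=\|\rho-\rho^*\|_1$. Because the outputs commute, one can argue as in Section~\ref{sec:qc} that $\chi^2_\kappa(\mathcal{E}(\rho^*)\|\mathcal{E}(\rho))$ collapses to the classical $\chi^2$-divergence (independent of $\kappa$). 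Chaining this with the classical Pinsker-for-$\chi^2$ bound $\|P-Q\|_1^2\le\chi^2(P\|Q)$ and the data-processing inequality for $\chi^2_\kappa$ (the monotonicity of the standard $f$-divergence with $f(x)=(x-1)^2\kappa(x)$) produces
\[
\|\rho-\rho^*\|_1^2=\|\mathcal{E}(\rho)-\mathcal{E}(\rho^*)\|_1^2\le \chi^2\bigl(\mathcal{E}(\rho)\|\mathcal{E}(\rho^*)\bigr)=\chi^2_\kappa\bigl(\mathcal{E}(\rho^*)\|\mathcal{E}(\rho)\bigr)\le\chi^2_\kappa(\rho^*\|\rho)=\|\rho-\rho^*\|_{\kappa,\rho^*}^2.
\]
Combining Steps 1 and 2 and taking square roots yields \eqref{eqn:convergencethm}.

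\textbf{Step 3 (mixing time).} I would uniformly control $\|\rho-\rho^*\|_{\kappa,\rho^*}^2\le\|\rho-\rho^*\|_{\kappa_{\max},\rho^*}^2=\mathrm{Tr}\bigl((\rho-\rho^*)^2(\rho^*)^{-1}\bigr)=\mathrm{Tr}(\rho^2(\rho^*)^{-1})-1$ using $\kappa\le\kappa_{\max}$, and then bound $\mathrm{Tr}(\rho^2(\rho^*)^{-1})=\|(\rho^*)^{-1/2}\rho\|_2^2\le \|(\rho^*)^{-1/2}\|_\infty^2\,\mathrm{Tr}\,\rho^2\le 1/\lambda_{\min}(\rho^*)$. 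This gives $\|\rho-\rho^*\|_{\kappa,\rho^*}^2\le 2/\lambda_{\min}(\rho^*)$ uniformly in $\rho$. Substituting into \eqref{eqn:convergencethm} and enforcing $\eta_\kappa^{\mathrm{Riem}}(\mathcal{N},\rho^*)^{n/2}\sqrt{2/\lambda_{\min}}\le\delta$ rearranges into the stated threshold on $n$.

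The main obstacle is Step 2: the $\chi^2$-Pinsker bridge is what converts an abstract Riemannian contraction rate into a trace-distance statement uniformly over all $\kappa\in\mathcal{K}$. The reason routing through the Helstrom measurement pays off is that on commuting outputs every $\chi^2_\kappa$ reduces to the same classical object, so only the classical Pinsker inequality is needed, and the $\kappa$-dependence is transferred to the ambient quantum side via a single application of DPI. Once this reduction is in place, Steps 1 and 3 are standard.
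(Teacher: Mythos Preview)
The paper does not prove this statement; it is quoted verbatim from \cite[Proposition 24]{GT25} as background for Section~\ref{subsec:primitive}, so there is nothing to compare against.

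Your argument is correct and is essentially the standard proof. A few small remarks. You are right that the displayed ``$\inf$'' in the definition of $\eta_\kappa^{\mathrm{Riem}}(\mathcal{N},\rho^*)$ must be read as a supremum for the inequality to go the stated way, and $\gamma$ on the right of \eqref{eqn:convergencethm} is evidently a typo for $\rho^*$. In Step~2 your Helstrom reduction is clean; note that the paper's convention $\chi^2_\kappa(\rho\|\gamma)=\|\rho-\gamma\|^2_{\kappa,\rho}$ puts the reference state first, and you handled that swap correctly. In Step~3 you actually get the sharper bound $\mathrm{Tr}\bigl((\rho-\rho^*)^2(\rho^*)^{-1}\bigr)=\mathrm{Tr}\bigl(\rho^2(\rho^*)^{-1}\bigr)-1\le 1/\lambda_{\min}-1$, so the factor~$2$ in the final mixing-time expression is just slack; your derivation certainly implies the stated threshold.
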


There is a reason to be cautious about the choice of $\kappa$ used to bound the convergence rate. For example, in \eqref{eqn:inequivalence}, family of primitive channels \eqref{eqn:counter} is  classical-quantum, i.e. the adjoint's image is a commutative subalgebra, which implies that smaller choices of $\kappa$ will always be optimal \cite[Corollary 5.6]{HR15}. We saw from this example that different Riemannian contraction coefficients can be several orders of magnitude apart for some primitive channels, although the analysis turns out to be based on the behaviour near a point that isn't close to the fixed point; the extent of the differences in $\eta_\kappa^\mathrm{Riem}(\mathcal{N},\rho^*)$ specifically could be worth some further investigation. 

We met in Section~\ref{sec:fullrankoutputchannels} that there is an equivalence $\widecheck\eta_f^\mathrm{std}\cong\widecheck\eta_\kappa^\mathrm{Riem}$ for all $f\in\mathcal{F},\kappa\in\mathcal{K}$ for the classes $\mathcal{Q}_\lambda$ of strictly positive quantum channels. It turns out that there is a further redundancy in relative expansion coefficients in this case, for which we define relative expansion coefficients for the Schatten 2-norm:

\begin{equation*}
    \widecheck\eta_2(\mathcal{N},\mathcal{M}):=
    \inf_{\substack{\rho\neq \gamma \in \mc D(\mathcal{H}_A),\\\supp(\rho) = \supp (\gamma)}} \frac{\|\mc N(\rho) -\mc N(\gamma)\|_2}{\|\mc M(\rho) -\mc M(\gamma)\|_2}
\end{equation*}

We will first establish another equivalence result that, actually, $\widecheck\eta_f^\mathrm{std}\cong\widecheck\eta_\kappa^\mathrm{Riem}\cong \eta_2$. This arises from the following generalisation of \cite[Lemma 2.1]{GR22} from the BKM metric ($\kappa_\mathrm{BKM}(x):=\frac{\log x}{x-1}$)

\begin{lemma}\label{lem6}
	$ $\newline 
    Given two density operators $\rho, \gamma \in \mathcal{D}(\mathcal{H})$, some finite-dimensional Hilbert space $\mathcal{H}$, and some $c \in (0, \infty), \kappa \in \mathcal{K}$:
	\[
	\rho \leq c \gamma \implies \| X \|_{\kappa, \gamma}^2 \leq c \| X \|_{\kappa, \rho}^2 \quad \text{for all traceless Hermitian operators } X \in \mathcal{B}(\mathcal{H})
	\]
\end{lemma}

\begin{proof}
    By the integral representation of $\kappa$, corresponding to the probability measure $m$ on $[0,1]$ \eqref{opmon}:
    \begin{align*}
            \| X \|_{\kappa, \rho}^2 
            &= \int_{[0,1]} \left\langle X, \frac{1}{L_\rho + s R_\rho}(X) \right\rangle_{HS}\; dm(s) \\
            &\geq \frac{1}{c}\int_{[0,1]} \left\langle X, \frac{1}{L_\gamma + s R_\gamma} (X)\right\rangle_{HS}\; dm(s) \\
            &= \frac{1}{c}\| X \|_{\kappa, \gamma}^2
        \end{align*}
	where we used the fact that $x^{-1}$ is operator monotone decreasing on $(0, \infty)$ to obtain $\frac{1}{L_\rho + s R_\rho} \geq \frac{1}{L_{c\gamma} + s R_{c\gamma}}$.\\
\end{proof}

As a special case, we obtain a similar result to \cite[Proposition 5.4]{BGSW24} (which, again, worked only with $\kappa_\mathrm{BKM}$): 
\begin{lemma}\label{lem:2norm}
    $ $\newline 
    Suppose we are given quantum channels $\mathcal{N}:\mathcal{B}(\mathcal{H}_A)\to\mathcal{B}(\mathcal{H}_B),\:\mathcal{M}:\mathcal{B}(\mathcal{H}_A)\to\mathcal{B}(\mathcal{H}_B')$ and some $\kappa\in\mathcal{K}$, then for all density operators $\rho\in\mathcal{D}(\mathcal{H})$, traceless Hermitian operators $X\in\mathcal{B}(\mathcal{H}):$
    \begin{align*}
        \lambda_\mathrm{max}^{-1}(\mathcal{N})\|\mathcal{N}(X)\|_2^2 &\leq \|\mathcal{N}(X)\|_{\kappa,\:\mathcal{N}(\rho)}^2\leq\lambda_\mathrm{min}^{-1}(\mathcal{N})\|\mathcal{N}(X)\|_2^2\;,\\\\
        \lambda_\mathrm{max}^{-1}(\mathcal{M})\|\mathcal{M}(X)\|_2^2 &\leq \|\mathcal{M}(X)\|_{\kappa,\:\mathcal{M}(\rho)}^2\leq\lambda_\mathrm{min}^{-1}(\mathcal{M})\|\mathcal{M}(X)\|_2^2
    \end{align*}

    \noindent In particular, 
    \begin{equation*}
        \widecheck\eta_\kappa^\mathrm{Riem}(\mathcal{N},\mathcal{M})\cong_{\alpha,\beta}\widecheck\eta_2(\mathcal{N},\mathcal{M})
    \end{equation*}
    Where $\alpha=\frac{\lambda_\mathrm{min}(\mathcal{M})}{\lambda_\mathrm{max}(\mathcal{N})},\beta=\frac{\lambda_\mathrm{max}(\mathcal{M})}{\lambda_\mathrm{min}(\mathcal{N})}$
    
\end{lemma}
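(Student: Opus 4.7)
The plan is to establish the pointwise sandwich bounds first and derive the equivalence as a corollary. My starting point is the uniform operator inequality
$$
\lambda_{\min}(\mathcal{N})\, I \;\leq\; \mathcal{N}(\rho)\;\leq\; \lambda_{\max}(\mathcal{N})\, I
$$
valid for every input state $\rho$, which follows directly from the definition of $\lambda_{\min}(\mathcal{N})$ and $\lambda_{\max}(\mathcal{N})$ and the fact that $\mathcal{N}(\rho)$ is a density operator whose extremal eigenvalues are controlled by these channel-level quantities.

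The crucial calculation is to evaluate $\|Y\|_{\kappa,\,\lambda I}^{2}$ for a positive scalar $\lambda$. Since the Riemannian semi-norm \eqref{eqn:riemdiv} scales as $\|Y\|_{\kappa,\,c\gamma}^{2}=c^{-1}\|Y\|_{\kappa,\,\gamma}^{2}$ (because $R_{c\gamma}^{-1}=c^{-1}R_{\gamma}^{-1}$ and $\Delta_{c\gamma,c\gamma}=\Delta_{\gamma,\gamma}$), and at $\gamma=I$ the relative modular operator reduces to the identity so $\kappa(\Delta_{I,I})=\kappa(1)\,\mathrm{id}=\mathrm{id}$ using the normalization $\kappa(1)=1$ built into $\mathcal{K}$, I obtain
$$
\|Y\|_{\kappa,\,\lambda I}^{2} \;=\; \lambda^{-1}\,\Tr(Y^{2}) \;=\; \lambda^{-1}\,\|Y\|_{2}^{2}.
$$
Now I apply Lemma~\ref{lem6} twice. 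Using $\mathcal{N}(\rho)\leq \lambda_{\max}(\mathcal{N})\,I$ with $c=1$ gives $\|\mathcal{N}(X)\|_{\kappa,\,\lambda_{\max}(\mathcal{N})I}^{2} \leq \|\mathcal{N}(X)\|_{\kappa,\,\mathcal{N}(\rho)}^{2}$, and using $\lambda_{\min}(\mathcal{N})\,I\leq \mathcal{N}(\rho)$ with $c=1$ gives $\|\mathcal{N}(X)\|_{\kappa,\,\mathcal{N}(\rho)}^{2}\leq \|\mathcal{N}(X)\|_{\kappa,\,\lambda_{\min}(\mathcal{N})I}^{2}$. Substituting the scalar-state formula above yields the first bound, and the second bound for $\mathcal{M}$ is identical.

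With the pointwise sandwich established, I combine the upper bound for the numerator with the lower bound for the denominator (and vice versa) to obtain, for every admissible pair $(\rho,X)$,
$$
\frac{\lambda_{\min}(\mathcal{M})}{\lambda_{\max}(\mathcal{N})}\cdot\frac{\|\mathcal{N}(X)\|_{2}^{2}}{\|\mathcal{M}(X)\|_{2}^{2}} \;\leq\; \frac{\|\mathcal{N}(X)\|_{\kappa,\,\mathcal{N}(\rho)}^{2}}{\|\mathcal{M}(X)\|_{\kappa,\,\mathcal{M}(\rho)}^{2}} \;\leq\; \frac{\lambda_{\max}(\mathcal{M})}{\lambda_{\min}(\mathcal{N})}\cdot\frac{\|\mathcal{N}(X)\|_{2}^{2}}{\|\mathcal{M}(X)\|_{2}^{2}}.
$$
Taking the infimum over $\rho$ and $X\in T_{\rho}\mathcal{D}(\mathcal{H}_{A})$ on each side (noting that the extremal constants $\lambda_{\min}(\mathcal{N}),\lambda_{\max}(\mathcal{N}),\lambda_{\min}(\mathcal{M}),\lambda_{\max}(\mathcal{M})$ are independent of $\rho$ and $X$) collapses the inner ratio on the right to the squared $2$-norm relative expansion coefficient, and the middle infimum to $\widecheck{\eta}_{\kappa}^{\mathrm{Riem}}(\mathcal{N},\mathcal{M})$, giving the claimed equivalence with constants $\alpha=\lambda_{\min}(\mathcal{M})/\lambda_{\max}(\mathcal{N})$ and $\beta=\lambda_{\max}(\mathcal{M})/\lambda_{\min}(\mathcal{N})$.

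The only subtlety I anticipate is the interpretation of $\widecheck{\eta}_{2}$ as a squared-versus-unsquared quantity; the infimum naturally lands on $\|\mathcal{N}(X)\|_{2}^{2}/\|\mathcal{M}(X)\|_{2}^{2}$, so one must parse the statement as comparing the Riemannian coefficient (which is intrinsically squared) to the corresponding squared $2$-norm ratio. Everything else is a bookkeeping exercise once Lemma~\ref{lem6} and the scaling identity for $\|\cdot\|_{\kappa,\lambda I}^{2}$ are in hand, so no genuinely deep obstacle arises.
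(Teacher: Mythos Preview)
Your proof is correct and follows essentially the same route as the paper: apply Lemma~\ref{lem6} to the operator sandwich $\lambda_{\min}(\mathcal{N})I\leq\mathcal{N}(\rho)\leq\lambda_{\max}(\mathcal{N})I$ together with the identity $\|Y\|_{\kappa,I}^{2}=\|Y\|_{2}^{2}$, then combine numerator and denominator bounds. The only cosmetic difference is that you absorb the scalar into the state and invoke Lemma~\ref{lem6} with $c=1$, whereas the paper leaves the state as $I_{d_B}$ and places the scalar in $c$; your remark about the squared-versus-unsquared reading of $\widecheck\eta_{2}$ is also apt and reflects a genuine ambiguity in the paper's own statement.
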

\begin{proof}
    Apply Lemma~\ref{lem6} considering \[     \lambda_\mathrm{min}(\mathcal{N})I_{d_B}\leq\mathcal{N}(\rho)\leq\lambda_\mathrm{max}(\mathcal{N})I_{d_B}     \] \[     \lambda_\mathrm{min}(\mathcal{M})I_{d_B'}\leq\mathcal{M}(\rho)\leq\lambda_\mathrm{max}(\mathcal{M})I_{d_B'}     \] and, e.g. $\|\mathcal{N}(X)\|_{\kappa,\:I_{d_B}}^2\equiv\|\mathcal{N}(X)\|_2^2$ .
\end{proof}

 \cite{LR99} conjectured for unital channels that $\hat\eta_\kappa^\mathrm{Riem}(\mathcal{N})\equiv\hat\eta_2(\mathcal{N})$. For unital channels with a unique fixed point (the maximally mixed state), the above result gives an equivalence, that is sometimes close (recall $\hat\eta_\kappa^\mathrm{Riem}(\mathcal{N}):=\widecheck\eta_\kappa^\mathrm{Riem}(id_{\mathcal{B}(\mathcal{H})},\mathcal{N})^{-1},\hat\eta_2(\mathcal{N}):=\widecheck\eta_2(id_{\mathcal{B}(\mathcal{H})},\mathcal{N})^{-1}$).

\begin{theorem}\label{thm:primcoeffpos}
    Let $\mathcal{N}:\mathcal{B}(\mathcal{H})\to\mathcal{B}(\mathcal{H})$ be primitive quantum channel with fixed point $\rho^*$ and $\kappa\in\mathcal{K}$. There exists $M\in\mathbb{N}$ s.t. for all $m\geq M$:
    \begin{equation}
        \widecheck\eta_\kappa^\mathrm{Riem}(\mathcal{N},\rho^*;\mathrm{Im}\:\mathcal{N}^{m-1}):=\underset{\rho\in\mathcal{D}(\mathcal{H})}{\mathrm{inf}}\;\frac{\|\mathcal{N}^{m}(\rho)-\rho^*\|_{\kappa,\rho^*}^2}{\|\mathcal{N}^{m-1}(\rho)-\rho^*\|_{\kappa,\rho^*}^2}\geq\widecheck\eta_\kappa^\mathrm{Riem}(\mathcal{N};\mathrm{Im}\:\mathcal{N}^{m-1})>0
    \end{equation}
\end{theorem}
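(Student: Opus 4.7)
The plan is to split the statement into two claims—the inequality and the strict positivity—and handle each in turn.

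For the inequality, I would note that the right-hand side equals $\widecheck{\eta}^{\Riem}_\kappa(\mathcal{N}^m,\mathcal{N}^{m-1})$, whose defining infimum ranges over all basepoints $\rho\in\mathcal{D}(\mathcal{H})$ and tangent vectors $X\in T_\rho\mathcal{D}(\mathcal{H})$. Specialising to $\rho=\rho^*$—admissible since $\rho^*$ is full rank by primitivity, so $T_{\rho^*}\mathcal{D}(\mathcal{H})$ contains every traceless Hermitian—and to $X=\rho'-\rho^*$ for an arbitrary state $\rho'\neq\rho^*$, the identity $\mathcal{N}^k(\rho^*)=\rho^*$ collapses the ratio to $\|\mathcal{N}^m(\rho')-\rho^*\|_{\kappa,\rho^*}^2\big/\|\mathcal{N}^{m-1}(\rho')-\rho^*\|_{\kappa,\rho^*}^2$, which is exactly the integrand defining the left-hand side at $\rho=\rho'$. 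Taking the infimum over $\rho'$ then yields the claimed direction of inequality.

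For the positivity I would first pick $M$ large enough that for all $m\geq M$: (i) $\mathcal{N}^{m-1}$ and $\mathcal{N}^m$ are strictly positive, i.e.\ $\lambda_\mathrm{min}(\mathcal{N}^{m-1}),\lambda_\mathrm{min}(\mathcal{N}^m)>0$, which is the characterisation of primitivity recalled in Section~\ref{subsec:primitive}; and (ii) writing $T_0$ for the traceless Hermitian operators on $\mathcal{H}$, the descending chain of subspaces $\mathrm{Im}(\mathcal{N}^{k}|_{T_0})\subseteq T_0$ has stabilised to its eventual value $V_0:=\bigcap_{k\geq 0}\mathrm{Im}(\mathcal{N}^{k}|_{T_0})$, a standard finite-dimensional linear-algebraic fact. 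Invoking Lemma~\ref{lem:2norm} with numerator channel $\mathcal{N}^m$ and denominator channel $\mathcal{N}^{m-1}$ then gives
\[
\widecheck{\eta}^{\Riem}_\kappa(\mathcal{N}^m,\mathcal{N}^{m-1}) \;\geq\; \frac{\lambda_\mathrm{min}(\mathcal{N}^{m-1})}{\lambda_\mathrm{max}(\mathcal{N}^m)}\,\widecheck\eta_2(\mathcal{N}^m,\mathcal{N}^{m-1}),
\]
whose prefactor is strictly positive by (i), so it suffices to show $\widecheck\eta_2(\mathcal{N}^m,\mathcal{N}^{m-1})>0$.

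The latter is handled by reparametrising the defining infimum through $Y:=\mathcal{N}^{m-1}(\rho-\gamma)$: taking both $\rho$ and $\gamma$ full rank lets $\rho-\gamma$ range over all of $T_0$, so $Y$ ranges over $\mathrm{Im}(\mathcal{N}^{m-1}|_{T_0})\setminus\{0\}=V_0\setminus\{0\}$ for $m\geq M$, and hence
\[
\widecheck\eta_2(\mathcal{N}^m,\mathcal{N}^{m-1}) = \inf_{Y\in V_0\setminus\{0\}}\frac{\|\mathcal{N}(Y)\|_2^2}{\|Y\|_2^2}.
\]
By definition of the eventual image, $\mathcal{N}|_{V_0}$ is a bijective linear endomorphism of the finite-dimensional space $V_0$, so compactness of the unit sphere identifies this infimum with $\sigma_\mathrm{min}(\mathcal{N}|_{V_0})^2>0$. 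I expect the main obstacle to be purely bookkeeping—locating the stabilisation threshold $M$, justifying that the state-pair infimum reduces faithfully to the vector-space one, and checking the hypotheses of Lemma~\ref{lem:2norm}. The genuinely nontrivial inputs are primitivity (supplying both the strict positivity of $\mathcal{N}^m$ and the bijectivity of $\mathcal{N}|_{V_0}$) and the Riemannian-to-$2$-norm equivalence for strictly positive channels.
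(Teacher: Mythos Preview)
Your proof is correct and follows essentially the same approach as the paper: reduce the positivity of the Riemannian coefficient to that of $\widecheck\eta_2$ via Lemma~\ref{lem:2norm}, then argue that $\mathcal{N}$ is eventually injective on the stabilised image of the iterated channel. Your treatment is in fact more thorough than the paper's, which omits the verification of the first inequality $\widecheck\eta_\kappa^\mathrm{Riem}(\mathcal{N},\rho^*;\mathrm{Im}\,\mathcal{N}^{m-1})\geq\widecheck\eta_\kappa^\mathrm{Riem}(\mathcal{N};\mathrm{Im}\,\mathcal{N}^{m-1})$ entirely and compresses the injectivity step to a one-line appeal to the first isomorphism theorem.
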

\begin{proof}
    Since $\mathcal{N}$ is a primitive quantum channel, for $m$ sufficiently large, $\mathcal{N}^{m-1}$ satisfies $\lambda_\mathrm{min}(\mathcal{N}^{m-1})>0$. Further, any quantum channel $\mathcal{N}:\mathcal{B}(\mathcal{H})\to\mathcal{B}(\mathcal{H})$ is injective on $\mathrm{Im}\:\mathcal{N}^{m-1}$ for $m$ sufficiently large, by the first isomorphism theorem and the finiteness of $\dim\mathcal{H}$. 
    
    By Lemma~\ref{lem:2norm}, we have the result. 
\end{proof}

While Theorem~\ref{nordpi} tells us that $\widecheck\eta_f^\mathrm{std}(\mathcal{N};\mathcal{D}(\mathcal{H}))=\widecheck\eta_\kappa^\mathrm{Riem}(\mathcal{N};\mathcal{D}(\mathcal{H}))=0$ for all $f\in\mathcal{F},\kappa\in\mathcal{K}$, by Theorem~\ref{thm:primcoeffpos} we can constrain the domain for a positive expansion coefficient. We can now use this expansion coefficient as a non-trivial lower bound on the convergence rate of the Markov chain based on the quantum primitive channels $\mathcal{N}$:

\begin{corollary}[Reverse Quantum Markov Convergence Theorem]\label{cor:qmkv}
    Let $\mathcal{N}:\mathcal{B}(\mathcal{H})\to\mathcal{B}(\mathcal{H})$ be a primitive quantum channel with fixed point $\rho^*$ and $\kappa\in\mathcal{K}$. There exists $M\in\mathbb{N}$ s.t. for all $m\geq M$:
    \begin{align*}
        \|\mathcal{N}^m(\rho)-\rho^*\|_1&\geq\lambda^{1/2}_\mathrm{min}(\mathcal{N}^m)\eta_\kappa^\mathrm{Riem}(\mathcal{N},\rho^*;\mathrm{Im}\:\mathcal{N}^{M-1})^\frac{m-M+1}{2}\|\mathcal{N}^{M-1}(\rho)-\rho^*\|_{\kappa,\:\rho^*}
    \end{align*}
\end{corollary}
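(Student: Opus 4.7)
The plan is to combine two ingredients in sequence: a Pinsker-type lower bound of the trace distance by the Riemannian semi-norm at $\rho^*$, derived from Lemma~\ref{lem:2norm}, and an iterated (telescoping) application of the Riemannian reverse DPI supplied by Theorem~\ref{thm:primcoeffpos}, running from time $m$ back down to time $M-1$.

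First, since $\rho^*$ is a fixed point of every $\mathcal{N}^m$, I apply Lemma~\ref{lem:2norm} to the channel $\mathcal{N}^m$ at the reference state $\rho^*$ with the traceless Hermitian operator $X = \rho - \rho^*$. The upper bound $\|\mathcal{N}^m(X)\|_{\kappa,\rho^*}^2 \leq \lambda_{\min}^{-1}(\mathcal{N}^m)\,\|\mathcal{N}^m(X)\|_2^2$ from that lemma, combined with the standard Schatten bound $\|\cdot\|_2 \leq \|\cdot\|_1$, rearranges into
\[
\|\mathcal{N}^m(\rho) - \rho^*\|_1 \;\geq\; \lambda_{\min}^{1/2}(\mathcal{N}^m)\,\|\mathcal{N}^m(\rho) - \rho^*\|_{\kappa,\rho^*},
\]
which is the ``Pinsker-type'' step converting the $\kappa$-norm into the trace distance.

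Next, I use the nestedness $\mathrm{Im}\,\mathcal{N}^{k-1} \subseteq \mathrm{Im}\,\mathcal{N}^{M-1}$ for $k \geq M$: rewriting the defining infimum (from Theorem~\ref{thm:primcoeffpos}) as one over $\sigma \in \mathrm{Im}\,\mathcal{N}^{j-1}$ of $\|\mathcal{N}(\sigma)-\rho^*\|_{\kappa,\rho^*}^2 / \|\sigma-\rho^*\|_{\kappa,\rho^*}^2$ shows that $\widecheck\eta_\kappa^{\mathrm{Riem}}(\mathcal{N},\rho^*;\mathrm{Im}\,\mathcal{N}^{k-1})$ is monotone non-decreasing in $k$, so every such expansion coefficient is at least $\widecheck\eta := \widecheck\eta_\kappa^{\mathrm{Riem}}(\mathcal{N},\rho^*;\mathrm{Im}\,\mathcal{N}^{M-1})$. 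Applying the reverse DPI at each step $k \in \{M,\ldots,m\}$ and telescoping over the $m-M+1$ steps gives
\[
\|\mathcal{N}^m(\rho) - \rho^*\|_{\kappa,\rho^*}^2 \;\geq\; \widecheck\eta^{\,m-M+1}\,\|\mathcal{N}^{M-1}(\rho) - \rho^*\|_{\kappa,\rho^*}^2,
\]
and combining with the Pinsker-type step (taking a square root) yields the target bound.

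The only real obstacle is administrative: choosing $M$ so that both (a) $\mathcal{N}^m$ has full-rank output (so Lemma~\ref{lem:2norm} applies with a positive $\lambda_{\min}^{1/2}(\mathcal{N}^m)$) and (b) $\widecheck\eta > 0$, uniformly for all $m \geq M$. Primitivity together with Theorem~\ref{thm:primcoeffpos} delivers both for $M$ large enough, and $\lambda_{\min}(\mathcal{N}^m)$ inherits positivity from $\lambda_{\min}(\mathcal{N}^M)$ once $\mathcal{N}^M$ is strictly positive. I expect no technical surprises beyond this bookkeeping on top of the two cited lemmas.
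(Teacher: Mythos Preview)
Your proposal is correct and follows essentially the same route as the paper: first $\|\cdot\|_1\ge\|\cdot\|_2\ge\lambda_{\min}^{1/2}(\mathcal N^m)\|\cdot\|_{\kappa,\rho^*}$ via Lemma~\ref{lem:2norm} (applied with $\rho=\rho^*$ so that $\mathcal N^m(\rho^*)=\rho^*$), then a telescoping product of the step-wise expansion coefficients, each bounded below by the $M{-}1$ coefficient via the image-nesting monotonicity you describe. The paper's proof is terser but the logical structure is identical.
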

\begin{proof}
    \begin{align*}
        \|\mathcal{N}^m(\rho)-\rho^*\|_1&\geq\|\mathcal{N}^m(\rho)-\rho^*\|_2\geq\lambda_\mathrm{min}^{1/2}(\mathcal{N}^m)\|\mathcal{N}^m(\rho)-\rho^*\|_{\kappa,\:\rho^*}\\
        &\geq\lambda_\mathrm{min}^{1/2}(\mathcal{N}^m)[\widecheck \eta_\kappa^\mathrm{Riem}(\mathcal{N},\rho^*;\mathrm{Im}\:\mathcal{N}^{m-1})...\eta_\kappa^\mathrm{Riem}(\mathcal{N},\rho^*;\mathrm{Im}\:\mathcal{N}^{M-1})]^{1/2}
        \|\mathcal{N}^{M-1}(\rho)-\rho^*\|_{\kappa,\:\rho^*}
        \\&\geq\lambda_\mathrm{min}^{1/2}(\mathcal{N}^m)\eta_\kappa^\mathrm{Riem}(\mathcal{N},\rho^*;\mathrm{Im}\:\mathcal{N}^{M-1})^\frac{m-M+1}{2}\|\mathcal{N}^{M-1}(\rho)-\rho^*\|_{\kappa,\:\rho^*}
    \end{align*}
\end{proof}

\noindent This result accompanies Theorem~\ref{thm:convergencethm} and \cite[Theorem 9]{Temme_2010}, which demonstrate an exponential upper bound on $\|\mathcal{N}^m(\rho)-\rho^*\|_1$ in terms of a contraction coefficient. By standard analysis, one can get a state-dependent lower bound on the (relative) mixing time. We cannot expect much more from a `reverse quantum Markov convergence theorem', since the cut-off behaviour \cite{kastoryano2012cutoff} means that the quantum Markov chain has non-asymptotic behaviour and the trace distance only decays exponentially after a number of iterations. Further, $\mathrm{lim}_{M\to\infty}\eta_\kappa^\mathrm{Riem}(\mathcal{N},\rho^*;\mathrm{Im}\:\mathcal{N}^{M-1})$ is by definition the least upper bound for the asymptotic convergence rate.

\color{black}

\subsection{Explicit Demonstrations of Positive Riemannian Relative Expansion Coefficients}\label{sec:explicitcoeffs}


Recall that, by Theorem~\ref{nordpi}, for the many quantum channels $\mathcal{N}$ whose output dimension is no more than the input dimension, the divergence and Riemannian relative expansion coefficients are all zero. The goal of this section is to produce examples where all of the Riemannian relative expansion coefficients are positive. 

We will apply some of the theory from previous sections to demonstrate parametrised families of channels on $\mathcal{D}(\mathcal{H}), \{\Phi_x\}_{x\in\mathcal{X}}, \mathcal{X} $ some valid parameter family, such that for all $x, x''\in\mathcal{X}, \Phi_{x'}\circ\Phi_x=\Phi_{x''}$ for some $x'\in\mathcal{X}$, the channels are perhaps sufficiently similar that we obtain positive relative expansion coefficients $\widecheck{\eta}^{\mathrm{Riem}}(\Phi_x, \Phi_{x''})>0$. If $\Phi_{x''}=\Phi_{x'}\circ\Phi_x$ for some $x' \in \mathcal{X}$, this indeed tells us that $\Phi_{x'}$ uniformly preserves a certain proportion of the distinguishability of states in $\Phi_x(\mathcal{D}(\mathcal{H}))$. Except when we are dealing with primitive channels, it is not clear how we can show the positivity of $\widecheck{\eta}_f^{\mathrm{std}}(\Phi_x, \Phi_{x''}), f\in \mathcal{F}_{\mathrm{sym}}$, in general. 

This section extends the calculations by \cite{BGSW24} for the BKM metric, via two observations. The first observation is that the properties they used for the corresponding Riemannian semi-norm are not unique, but apply to all Riemannian semi-norms. The second observation is that, for qubit channels, any Riemannian relative expansion coefficient $\widecheck\eta_\kappa,\kappa\in\mathcal{K}$ can be related to $\widecheck\eta_{\kappa_{\max}}$, giving a reduction for demonstrating positivity.

Even when dealing with Riemannian expansion coefficients, computation is still difficult, so we mostly work with qubit channels. In this setting, it is possible to write the Riemannian semi-norms and essentially decompose the problem into a few independent optimisation problems; it helps that all traceless Hermitian operators $X$ acting on qubits satisfy $X^2\:\propto\: I$. 

\subsubsection{Generalised Dephasing Channel}

We already know that when a quantum channel $\mathcal{N}$ is primitive, i.e.\ it has a unique full rank fixed point, there exists another (primitive) quantum channel $\mathcal{M}$ such that
\[
\widecheck{\eta}^{\mathrm{Riem}}_\kappa(\mathcal{N}\circ \mathcal{M}, \mathcal{M}), \;
\widecheck{\eta}^{\mathrm{std}}_f(\mathcal{N}\circ \mathcal{M}, \mathcal{M}) > 0
\quad \text{for all} \; \kappa \in \mathcal{K}, \; f \in \mathcal{F}_{\mathrm{sym}}.
\]
The generalised dephasing channels $\Phi_\Gamma : \mathcal{B}(\mathcal{H}) \to \mathcal{B}(\mathcal{H}), \dim \mathcal{H} = d,$ are parametrised by
\[
\Gamma \in \mathcal{X} := \left\{ \Gamma' \in \mathcal{B}(\mathcal{H}) : \Gamma' \geq 0, \;
\Gamma'_{ij} \in [0,1], \; \Gamma'_{ij} = 1, \; 0 \leq i,j \leq d-1 \right\} \setminus \{ I_d \},
\]
and have the form:
\[
\Phi_\Gamma(\rho) = \Gamma \odot \rho = \sum_{i,j=0}^{d-1} \Gamma_{ij} \rho_{ij} \ket{ii}\bra{jj}
\]

\noindent As generalisations of the qubit dephasing channels, these channels model decoherence and preserve all density operators that are diagonal w.r.t.\ the standard basis. Since at least some of the channels' fixed points are not full rank, they are not primitive. To deal with the relative expansion coefficients in this case, we cater to the specific form of these channels, and compare generalised dephasing channels whose parameters are \emph{close}, exactly as \cite{BGSW24}, did.

\noindent To be able to do this, we consider the following generalisation of \cite[Lemma 4.3]{BGSW24} (note: a slight improvement has been made):

\begin{lemma}\label{lem7}
	Given quantum channels $\mathcal{M}, \mathcal{N}, \Phi \in \mathcal{B}(\mathcal{B}(\mathcal{H}_A), \mathcal{B}(\mathcal{H}_B))$ and $\varepsilon \in (0, \frac{1}{2})$ such that:
	\[
	\mathcal{N} = (1 - \varepsilon)\mathcal{M} + \varepsilon \Phi
	\]
	If there exists a quantum channel $\mathcal{D} \in \mathcal{B}(\mathcal{B}(\mathcal{H}_B), \mathcal{B}(\mathcal{H}_B))$ such that $\Phi = \mathcal{D} \circ \mathcal{N}$ and $\mathcal{D}(\omega) \leq c\omega$ for some fixed density operator $\omega$ and $c > 0$. Then for any operator $X\in \mathcal{B}(\mathcal{H}_A)$ and $\kappa\in\mathcal{K}$:
	\[
	\| \mathcal{N}(X) \|_{\kappa, \omega}^2 \geq \frac{(1-\varepsilon)(1-2\varepsilon)}{1+c\varepsilon(1-2\varepsilon)} \| \mathcal{M}(X) \|_{\kappa, \omega}^2
	\]
\end{lemma}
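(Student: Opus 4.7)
The plan is to work from the pointwise identity
\[
\mathcal{N}(X) = (1-\varepsilon)\,\mathcal{M}(X) + \varepsilon\,\mathcal{D}(\mathcal{N}(X)),
\]
obtained by substituting $\Phi = \mathcal{D}\circ\mathcal{N}$ into the convex decomposition $\mathcal{N} = (1-\varepsilon)\mathcal{M} + \varepsilon\Phi$. The preparatory estimate I would establish first is
\[
\|\mathcal{D}(\mathcal{N}(X))\|_{\kappa,\omega}^2 \;\leq\; c\,\|\mathcal{N}(X)\|_{\kappa,\omega}^2,
\]
which follows by chaining the data-processing inequality for the Riemannian semi-norm (giving $\|\mathcal{D}(Y)\|_{\kappa,\mathcal{D}(\omega)}^2 \leq \|Y\|_{\kappa,\omega}^2$ for $Y = \mathcal{N}(X)$) with Lemma~\ref{lem6} applied to the hypothesis $\mathcal{D}(\omega)\leq c\omega$, which upgrades that bound at the price of a factor $c$ to transport the reference state from $\mathcal{D}(\omega)$ back to $\omega$.

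Next, I would expand $\|\mathcal{N}(X)\|_{\kappa,\omega}^2$ using the semi-inner product induced by the positive superoperator $R_\omega^{-1}\kappa(\Delta_{\omega,\omega})$:
\[
\|\mathcal{N}(X)\|_{\kappa,\omega}^2 = (1-\varepsilon)^2 \|\mathcal{M}(X)\|_{\kappa,\omega}^2 + 2\varepsilon(1-\varepsilon)\,\langle \mathcal{M}(X), \mathcal{D}(\mathcal{N}(X))\rangle_{\kappa,\omega} + \varepsilon^2\|\mathcal{D}(\mathcal{N}(X))\|_{\kappa,\omega}^2.
\]
I would then bound the cross term by the AM--GM / Cauchy--Schwarz inequality with splitting parameter $1$, namely $-2\langle A,B\rangle_{\kappa,\omega}\leq \|A\|_{\kappa,\omega}^2+\|B\|_{\kappa,\omega}^2$. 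Collecting coefficients yields the clean intermediate bound
\[
\|\mathcal{N}(X)\|_{\kappa,\omega}^2 \geq (1-\varepsilon)(1-2\varepsilon)\|\mathcal{M}(X)\|_{\kappa,\omega}^2 - \varepsilon(1-2\varepsilon)\|\mathcal{D}(\mathcal{N}(X))\|_{\kappa,\omega}^2,
\]
and applying the preparatory estimate followed by a rearrangement produces exactly $(1+c\varepsilon(1-2\varepsilon))\|\mathcal{N}(X)\|_{\kappa,\omega}^2 \geq (1-\varepsilon)(1-2\varepsilon)\|\mathcal{M}(X)\|_{\kappa,\omega}^2$.

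The main subtlety I foresee is pinning down why the splitting constant should be $1$. A general splitting $\alpha>0$ produces a whole family of bounds of the form $(1-\varepsilon)(1-\varepsilon-\varepsilon\alpha)/(1+c\varepsilon((1-\varepsilon)/\alpha-\varepsilon))$, and the balanced choice $\alpha=1$ is the one that gives the particular shape in the statement (though not necessarily the sharpest lower bound for every $c$, which explains the ``slight improvement'' hinted at in the lemma's preamble). The hypothesis $\varepsilon < 1/2$ enters at exactly the right spot: it guarantees $1-2\varepsilon>0$, which is needed both so that the coefficient of $\|\mathcal{D}(\mathcal{N}(X))\|_{\kappa,\omega}^2$ carries the sign that lets the upper bound $\|\mathcal{D}(\mathcal{N}(X))\|^2\leq c\|\mathcal{N}(X)\|^2$ be applied in the right direction, and so that the resulting ratio on the right-hand side is positive. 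Positivity of the semi-inner product (and thus the legitimacy of AM--GM) is automatic from $R_\omega^{-1}\kappa(\Delta_{\omega,\omega})\geq 0$ on $\operatorname{supp}\omega$.
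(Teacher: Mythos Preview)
Your proposal is correct and follows essentially the same route as the paper's proof: obtain the intermediate bound $\|\mathcal{N}(X)\|_{\kappa,\omega}^2 \geq (1-\varepsilon)(1-2\varepsilon)\|\mathcal{M}(X)\|_{\kappa,\omega}^2 - \varepsilon(1-2\varepsilon)\|\Phi(X)\|_{\kappa,\omega}^2$, then apply Lemma~\ref{lem6} together with monotonicity to replace $\|\Phi(X)\|_{\kappa,\omega}^2$ by $c\|\mathcal{N}(X)\|_{\kappa,\omega}^2$ and rearrange. The only cosmetic difference is that the paper reaches the intermediate bound via the reverse triangle inequality for the norm followed by squaring and $2ab\leq a^2+b^2$, whereas you expand the inner product directly and use $-2\langle A,B\rangle\leq\|A\|^2+\|B\|^2$; these two computations are algebraically equivalent and yield identical coefficients.
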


\begin{proof}
    $ $\newline 
    For $X, Y \in \mathcal{B}(\mathcal{H}_B),\ \mathrm{supp}\;X,\: \mathrm{supp}\;Y \leq \mathrm{supp}\;\omega$: $X, Y \mapsto \langle X, R_\omega^{-1} \kappa(L_\omega R_\omega^{-1}) (Y) \rangle_{HS}$ defines an inner product, then necessarily $X \mapsto \|X\|^2_{\kappa, \rho}$ defines a genuine norm, and in particular the triangle inequality holds.
    
    Hence, by the triangle inequality,
    \[\| \mathcal{N}(X) \|_{\kappa, \omega} 
        \geq (1 - \varepsilon) \| \mathcal{M}(X) \|_{\kappa, \omega} - \varepsilon\| \Phi(X)\|_{k, \omega}\]
    We then square both sides:
    \begin{align*}
        \| \mathcal{N}(X) \|_{\kappa, \omega}^2
        & \geq (1 - \varepsilon)^2 \| \mathcal{M}(X) \|_{\kappa, \omega}^2 - 2 \varepsilon (1 - \varepsilon) \| \mathcal{M}(X)\|_{\kappa,\omega} \|\Phi(X)\|_{\kappa, \omega} +\varepsilon^2 \|\Phi(X)\|^2_{\kappa, \omega} \\ \\
        &\overset{2ab\leq a^2+b^2}{\geq} (1 - \varepsilon)^2 \| \mathcal{M}(X) \|_{\kappa, \omega}^2 - \varepsilon (1 - \varepsilon) (\| \mathcal{M}(X) \|^2_{\kappa, \omega}+\|\Phi(X)\|^2_{\kappa, \omega}) + \varepsilon^2  \|\Phi(X)\|_{\kappa, \omega}^2\\
        &= (1 - \varepsilon)(1-2\varepsilon) \| \mathcal{M}(X) \|_{\kappa, \omega}^2 - \varepsilon(1-2\varepsilon) \| \Phi(X) \|_{\kappa, \omega}^2 \\ \\
        &\overset{\text{Lemma~\ref{lem6}}}{\geq} (1 - \varepsilon)(1 - 2\varepsilon) \| \mathcal{M}(X) \|_{\kappa, \omega}^2 - c\varepsilon(1 - 2\varepsilon) \| \mathcal{D} \circ \mathcal{N}(X) \|_{\kappa, \mathcal{D}(\omega)}^2 \\
        &\overset{\text{Monotonicity}}{\geq} (1 - \varepsilon)(1 - 2\varepsilon) \| \mathcal{M}(X) \|_{\kappa, \omega}^2 - c\varepsilon(1 - 2\varepsilon) \| \mathcal{N}(X) \|_{\kappa, \omega}^2
    \end{align*}
    
    \noindent This gives the result.
\end{proof}

As a consequence, we can achieve the same result as \cite{BGSW24} about the possibility of a positive Riemannian expansion coefficient, for general choices $\kappa\in\mathcal{K}$:

\begin{prop}\cite{BGSW24}
    $ $\newline
    Let $\Gamma = (\Gamma_{ij}), \Gamma' = (\Gamma'_{ij}) \in \mathcal{X}$. Suppose there exists $\varepsilon \in (0,1)$ such that
    \begin{enumerate}
        \item $(1 - \varepsilon)\Gamma \leq \Gamma' \leq (1 + \varepsilon)\Gamma$
        \item $\hat\Gamma= (\hat{\Gamma}_{ij})$ defined via the following is in $\mathcal{X}$ (in particular, we want $\Gamma'$ PSD):
        \[
        \quad 
        \hat{\Gamma}_{ij} := 
        \begin{cases}
        0, & \text{if } \Gamma'_{ij} = 0, \\
        \frac{\Gamma'_{ij} - (1 - \varepsilon)\Gamma_{ij}}{\varepsilon \Gamma_{ij}'}, & \text{if } \Gamma'_{ij} > 0.
        \end{cases}
        \]
    \end{enumerate}
    
    \noindent Then for any $\kappa\in\mathcal{K}$:
    \[
    \widecheck{\eta}_\kappa^\mathrm{Riem}(\Phi_{\Gamma'}, \Phi_\Gamma) \geq \frac{(1 - 2\varepsilon)(1 - \varepsilon)}{(1 + 2\varepsilon)(1 + \varepsilon)}. 
    \]
\end{prop}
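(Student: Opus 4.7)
The plan is to realise $\Phi_{\Gamma'}$ as a convex combination of $\Phi_\Gamma$ and a channel of the form $\mathcal{D}\circ \Phi_{\Gamma'}$, so that Lemma~\ref{lem7} applies with base state $\omega = \Phi_{\Gamma'}(\rho)$, and then to use Lemma~\ref{lem6} to transfer the resulting semi-norm bound from the base $\Phi_{\Gamma'}(\rho)$ back to the base $\Phi_\Gamma(\rho)$ that appears in $\widecheck{\eta}_\kappa^{\mathrm{Riem}}(\Phi_{\Gamma'},\Phi_\Gamma)$.

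First, I would set $\tilde\Gamma := \hat\Gamma \odot \Gamma'$. A direct computation from the definition of $\hat\Gamma$ in condition (2) gives $\tilde\Gamma_{ij} = (\Gamma'_{ij} - (1-\varepsilon)\Gamma_{ij})/\varepsilon$; in particular $\tilde\Gamma_{ii} = 1$, and condition (1) guarantees $\tilde\Gamma \geq 0$ as an operator. Thus $\Phi_{\tilde\Gamma}$ is a bona fide dephasing channel, condition (2) ensures the same for $\Phi_{\hat\Gamma}$, and associativity of the Hadamard product yields $\Phi_{\tilde\Gamma} = \Phi_{\hat\Gamma}\circ \Phi_{\Gamma'}$. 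Hence
\[
\Phi_{\Gamma'} = (1-\varepsilon)\,\Phi_\Gamma + \varepsilon\,\Phi_{\hat\Gamma}\circ \Phi_{\Gamma'},
\]
which fits the Lemma~\ref{lem7} template with $\mathcal{N} = \Phi_{\Gamma'}$, $\mathcal{M} = \Phi_\Gamma$, $\Phi = \Phi_{\hat\Gamma}\circ \Phi_{\Gamma'}$, and $\mathcal{D} = \Phi_{\hat\Gamma}$.

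The core numerical step is to choose $c$ so that $\Phi_{\hat\Gamma}(\omega) \leq c\,\omega$ holds for every $\omega \in \mathrm{Im}\,\Phi_{\Gamma'}$. Since $\omega = \Gamma'\odot\rho$, the Schur product theorem reduces this to the operator inequality $c\Gamma' \geq \tilde\Gamma$. Writing
\[
c\Gamma' - \tilde\Gamma = \bigl(c - \tfrac{1}{\varepsilon}\bigr)\Gamma' + \tfrac{1-\varepsilon}{\varepsilon}\,\Gamma
\]
and substituting the operator bound $\Gamma \geq \Gamma'/(1+\varepsilon)$ from condition (1) into the (positively-weighted) second term, the right-hand side is bounded below by $\bigl(c - \tfrac{2}{1+\varepsilon}\bigr)\Gamma'$. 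I therefore take $c = 2/(1-2\varepsilon) \geq 2/(1+\varepsilon)$: this is strong enough to secure the PSD condition, and it is tailored so that $1 + c\varepsilon(1-2\varepsilon) = 1+2\varepsilon$, whence Lemma~\ref{lem7} delivers
\[
\|\Phi_{\Gamma'}(X)\|^2_{\kappa,\,\Phi_{\Gamma'}(\rho)} \geq \frac{(1-\varepsilon)(1-2\varepsilon)}{1+2\varepsilon}\,\|\Phi_\Gamma(X)\|^2_{\kappa,\,\Phi_{\Gamma'}(\rho)}.
\]

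To convert the base of the Riemannian semi-norm on the right-hand side, I would observe that $\Gamma' \leq (1+\varepsilon)\Gamma$ together with the Schur product theorem gives $\Phi_{\Gamma'}(\rho) \leq (1+\varepsilon)\,\Phi_\Gamma(\rho)$; Lemma~\ref{lem6} then produces $\|\Phi_\Gamma(X)\|^2_{\kappa,\,\Phi_{\Gamma'}(\rho)} \geq (1+\varepsilon)^{-1}\|\Phi_\Gamma(X)\|^2_{\kappa,\,\Phi_\Gamma(\rho)}$. Chaining these two inequalities and passing to the infimum over $\rho$ and $X$ produces the claimed factor $(1-\varepsilon)(1-2\varepsilon)/[(1+\varepsilon)(1+2\varepsilon)]$. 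The real obstacle is the choice of $c$: the minimum value $c = 2/(1+\varepsilon)$ compatible with $\Phi_{\hat\Gamma}(\omega)\leq c\omega$ does not telescope with the Lemma~\ref{lem6} factor $1/(1+\varepsilon)$ into a neatly factorable denominator, so one must deliberately inflate $c$ to $2/(1-2\varepsilon)$ in order to engineer the clean product $(1+\varepsilon)(1+2\varepsilon)$ in the final bound.
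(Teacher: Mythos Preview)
Your proposal is correct and follows essentially the same route as the paper: the same decomposition $\Phi_{\Gamma'} = (1-\varepsilon)\Phi_\Gamma + \varepsilon\,\Phi_{\hat\Gamma}\circ\Phi_{\Gamma'}$, the same choice $c = 2/(1-2\varepsilon)$ to make Lemma~\ref{lem7} yield the denominator $1+2\varepsilon$, and the same final application of Lemma~\ref{lem6} via $\Phi_{\Gamma'}(\rho)\leq(1+\varepsilon)\Phi_\Gamma(\rho)$. Your discussion of the Schur product reduction and of why one deliberately inflates $c$ beyond the minimal $2/(1+\varepsilon)$ is a nice explanatory addition but does not change the argument.
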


\begin{proof}
    $ $\newline
    The idea is to apply Lemma~\ref{lem7} with $\mathcal{N}=\Phi_{\Gamma'}, \mathcal{M}=\Phi_\Gamma$, because this directly gives a bound $\widecheck{\eta}({\Phi_{\Gamma'}, \Phi_\Gamma}) \geq \frac{(1 - \varepsilon)(1 - 2\varepsilon)}{(1+\varepsilon)(1+c\varepsilon(1-2\varepsilon))}$ if arbitrary choices of $\omega$ in the image $\Phi_{\Gamma'}(\mathcal{D}(\mathcal{H}))$ work.
    
    \noindent Let us construct $\widetilde{\Gamma}_{ij} := \frac{\Gamma'_{ij} - (1 - \varepsilon)\Gamma_{ij}}{\varepsilon}$, $\widetilde{\Gamma} = (\widetilde{\Gamma}_{ij})_{0 \leq i,j \leq d-1}\in\mathcal{X}$ ($\widetilde{\Gamma}=\frac{\Gamma'-(1-\varepsilon)\Gamma}{\varepsilon}\geq 0$ by assumption), then we have
    \[
    \Phi_{\Gamma'} = (1 - \varepsilon) \Phi_\Gamma + \varepsilon \Phi_{\widetilde{\Gamma}},
    \]
    
    \noindent It remains to show that:
    \begin{enumerate}
        \item There exists a quantum channel $\mathcal{D}$ such that $\Phi_{\widetilde{\Gamma}}=\mathcal{D} \circ \Phi_{\Gamma'}$.
        \item There exists a universal constant $c > 0$ such that for any density operator $\gamma$, $\mathcal{D}(\Phi_{\Gamma'}(\gamma)) \leq c\Phi_{\Gamma'}(\gamma)$.
    \end{enumerate}
    
    \noindent For the first point, we note that $\hat\Gamma$ was constructed so that $\hat{\Gamma}\odot\Gamma'=\widetilde{\Gamma}$, which implies:
    \[
    \Phi_{\hat{\Gamma}} \circ \Phi_{\Gamma'} = \Phi_{\widetilde{\Gamma}}.
    \]
    This means that we can take $\mathcal{D} := \Phi_{\widetilde{\Gamma}}$.
    
    \noindent For the other condition, we observe that $c=\frac{2}{1-2\varepsilon}$ is a valid choice:
    \[
    \mathcal{D} \circ \Phi_{\Gamma'} = \Phi_{\widetilde{\Gamma}} = \frac{ \Phi_{\Gamma'} - (1 - \varepsilon)\Phi_\Gamma }{ \varepsilon } \leq_{\text{cp}} \frac{ (1 + \varepsilon)\Phi_\Gamma - (1 - \varepsilon)\Phi_\Gamma }{ \varepsilon } = \frac{2\varepsilon \Phi_\Gamma}{\varepsilon} = 2\Phi_\Gamma \leq_{\text{cp}} \frac{2}{1 - 2\varepsilon} \Phi_{\Gamma'}.
    \]
    
    \noindent Finally, we observe for all $\omega=\Phi_{\Gamma'}(\gamma)$ for some $\gamma\in\mathcal{D}(\mathcal{H})$:
    \begin{align*}
    \|\Phi_{\Gamma'}(X)\|_{\kappa,\Phi_{\Gamma'}(\gamma)}^2
    &\overset{\text{Lemma~\ref{lem7}}}{\geq} \frac{(1 - 2\varepsilon)(1 - \varepsilon)}{1 + 2\varepsilon} \|\Phi_{\Gamma}(X)\|_{\kappa,\Phi_{\Gamma'}(\gamma)}^2 \\
    &\overset{\text{Lemma~\ref{lem6}}}{\geq} \frac{(1 - 2\varepsilon)(1 - \varepsilon)}{(1 + 2\varepsilon)(1 + \varepsilon)} \|\Phi_{\Gamma}(X)\|_{\kappa,\Phi_{\Gamma}(\gamma)}^2.
    \end{align*}
    \noindent Where for the final inequality, we used the fact $\Gamma'\leq(1+\varepsilon)\Gamma$ (and thus $\Phi_\Gamma'\leq_{cp}(1+\varepsilon)\Phi_\Gamma$).
\end{proof}

\subsubsection{Qubit Dephasing Channel and Amplitude Damping Channel}

For the qubit dephasing and amplitude damping channels, some care needs to be taken around their pure fixed points, which make it difficult to compute a lower bound on the relative expansion coefficients. However, the strategy for showing that the relative expansion coefficients are positive is similar for both channels. For $s \in [0,1]$, density operator $\rho = \frac{1}{2} (\mathbb{I}_2 + \mathbf{w}\cdot\sigma) \in \mathcal{D}(\mathbb{C}^2)$, traceless Hermitian operator $X = \mathbf{y} \cdot\sigma \in \mathcal{B}(\mathbb{C}^2)$  \cite[Appendix B.2.7]{HR15},
\begin{align*}
	\| X \|^2_{\kappa_s, \rho} 
	&= \frac{2|\mathbf{y}|^2}{1 - |\mathbf{w}|^2} \cdot \frac{(1 + s^2)(1 - |\mathbf{w}|^2) + 4s |\mathbf{w}|^2 \cos^2 \theta}{(1 + s^2)(1 - |\mathbf{w}|^2) + 4s |\mathbf{w}|^2} \\[1.5ex]
	&= \frac{2|\mathbf{y}|^2}{1 - |\mathbf{w}|^2} \left( h_s (|\mathbf{w}|^2) + (1 - h_s (|\mathbf{w}|^2)) \cos^2 \theta \right) \\[1.5ex]
	&= \| X \|^2_{\kappa_{\max}, \rho} \left( h_s (|\mathbf{w}|^2) + (1 - h_s (|\mathbf{w}|^2)) \cos^2 \theta \right)
\end{align*}

\noindent where $h_s(x) := \frac{(1+s)^2(1-x)}{(1+s)^2(1-x)+4sx}$. Since every $\kappa \in \mathcal{K}$ has the form $\kappa(x) = \int_{[0,1]}\kappa_s(x) \, dm(s)$ for all $x \in (0, \infty)$, we generally have:
\[
\| X \|^2_{\kappa,\rho} = \| X \|^2_{\kappa_{\max},\rho} (h(|\mathbf{w}|^2) +(1-h(|\mathbf{w}|^2)) \cos^2 \theta ),\quad h(x):=\int_{[0,1]} h_s(x) \, dm(s)
\]
\noindent and $h(|\mathbf{w}|^2) = 0 \iff |\mathbf{w}|=1, \text{ otherwise } h(|\mathbf{w}|^2) \in (0,1] \text{ (as } |\mathbf{w}| < 1)$.\\

\noindent If we are considering two qubit quantum channels
\[
\mathcal{N}: \frac{1}{2} (\mathbb{I}_2 + \mathbf{w} \cdot \sigma)\mapsto \frac{1}{2}(\mathbb{I}_2 + (T \mathbf{w}+\mathbf{t}) \cdot \sigma)\; ,\quad 
\mathcal{M}: \frac{1}{2} (\mathbb{I}_2 + \mathbf{w} \cdot \sigma)\mapsto \frac{1}{2}(\mathbb{I}_2 + (T' \mathbf{w}+\mathbf{t}') \cdot \sigma)
\]
the problem of lower bounding the relative Riemannian expansion coefficient reduces slightly:
\[
\widecheck{\eta}^{\mathrm{Riem}}_{\kappa}(\mathcal{N}, \mathcal{M})\ge \widecheck{\eta}_{\kappa_{\max}}^{\mathrm{Riem}}(\mathcal{N}, \mathcal{M})\cdot \inf_{\mathbf{y},\mathbf{w}:|\mathbf{w}|\leq1} \frac{h_{\mathcal{N}}(\mathbf{w},\mathbf{y})}{h_{\mathcal{M}}(\mathbf{w}, \mathbf{y})}
\]

\noindent where we denote $h_{\mathcal{N}}(\mathbf{w}, \mathbf{y}) = |(T \mathbf{w} + \mathbf{t})\cdot T \mathbf{y} / |T \mathbf{y}||^2 (1 - h(|T \mathbf{w} + \mathbf{t}|^2)) + |T \mathbf{w} + \mathbf{t}|^2  h(|T \mathbf{w} + \mathbf{t}|^2)$, and similarly for $\mathcal{M}$.

\begin{prop}\label{prop5}
	Let $\Phi_p(\rho) = \frac{1}{2} (\mathbb{I}_2 + T_p \mathbf{w}\cdot\sigma)$, $T_p := \mathrm{diag}(1-p, 1-p, 1)$, denote the dephasing channel. For $0 < p_2< p_1 < 1$ and any $\kappa\in\mathcal{K}$, we have
	\[
	\widecheck{\eta}_\kappa^{\mathrm{Riem}}(\Phi_{p_1}, \Phi_{p_2}) > 0
	\]
\end{prop}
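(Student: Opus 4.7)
The plan is to apply Lemma~\ref{lem7} to the convex decomposition $\Phi_{p_1} = (1-\varepsilon)\Phi_{p_2} + \varepsilon\, \Phi_1$, where $\Phi_1$ denotes the complete dephasing channel and $\varepsilon := (p_1 - p_2)/(1-p_2) \in (0,1)$. This identity is immediate from $T_{p_1} = (1-\varepsilon) T_{p_2} + \varepsilon\, \mathrm{diag}(0,0,1)$, together with the fact that every dephasing channel has translation vector $\mathbf{t} = 0$. The idea is to assemble a uniform lower bound on $\widecheck{\eta}_\kappa^{\mathrm{Riem}}(\Phi_{p_1},\Phi_{p_2})$ by combining Lemma~\ref{lem7} (which compares $\|\Phi_{p_1}(X)\|$ and $\|\Phi_{p_2}(X)\|$ at a common reference state) with Lemma~\ref{lem6} (which transports the reference state).

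To invoke Lemma~\ref{lem7} with $\mathcal{N} = \Phi_{p_1}$, $\mathcal{M} = \Phi_{p_2}$, and $\Phi = \Phi_1$, I would take $\mathcal{D} = \Phi_1$ itself: this is valid because $\Phi_1$ is idempotent and absorbs any other dephasing, so $\Phi_1 \circ \Phi_{p_1} = \Phi_1$. For the required inequality $\mathcal{D}(\omega) \leq c\omega$ at $\omega = \Phi_{p_1}(\rho)$, the identity $\Phi_{p_1}(\rho) = (1-p_1)\rho + p_1 \Phi_1(\rho)$ rearranges to $\frac{1}{p_1}\Phi_{p_1}(\rho) - \Phi_1(\rho) = \frac{1-p_1}{p_1}\rho \geq 0$, so one can take $c = 1/p_1$. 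Whenever $\varepsilon < 1/2$, Lemma~\ref{lem7} then yields an inequality of the form $\|\Phi_{p_1}(X)\|_{\kappa,\Phi_{p_1}(\rho)}^2 \geq C\, \|\Phi_{p_2}(X)\|_{\kappa,\Phi_{p_1}(\rho)}^2$ with a positive constant $C$ depending only on $p_1, p_2$, and independent of $\kappa$, $X$, and $\rho$.

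To pass from reference state $\Phi_{p_1}(\rho)$ to $\Phi_{p_2}(\rho)$ on the right, I would apply Lemma~\ref{lem6} after establishing $\Phi_{p_1}(\rho) \leq (p_1/p_2)\Phi_{p_2}(\rho)$. This follows from the same convex decomposition together with the operator inequality $\Phi_1(\rho) \leq (1/p_2)\Phi_{p_2}(\rho)$, which is a consequence of $\Phi_{p_2}(\rho) = (1-p_2)\rho + p_2 \Phi_1(\rho)$. Lemma~\ref{lem6} then gives $\|\Phi_{p_2}(X)\|_{\kappa,\Phi_{p_1}(\rho)}^2 \geq (p_2/p_1)\|\Phi_{p_2}(X)\|_{\kappa,\Phi_{p_2}(\rho)}^2$; combining with the previous step delivers $\widecheck{\eta}_\kappa^{\mathrm{Riem}}(\Phi_{p_1},\Phi_{p_2}) > 0$ uniformly in $\kappa \in \mathcal{K}$.

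The main obstacle is the restriction $\varepsilon \in (0, 1/2)$ in Lemma~\ref{lem7}, which can fail when $p_1$ is close to $1$. I would circumvent this by interpolating through a finite chain $p_2 = q_0 < q_1 < \cdots < q_n = p_1$ of dephasing parameters, each sub-step taken fine enough that $(q_{k+1}-q_k)/(1-q_k) < 1/2$; such a chain exists with $n$ depending only on $p_1,p_2$. Running the previous argument on each sub-step and then multiplying via the composition inequality $\widecheck{\eta}_\kappa^{\mathrm{Riem}}(\mathcal{A},\mathcal{C}) \geq \widecheck{\eta}_\kappa^{\mathrm{Riem}}(\mathcal{A},\mathcal{B})\,\widecheck{\eta}_\kappa^{\mathrm{Riem}}(\mathcal{B},\mathcal{C})$ (which is immediate from the definition) produces the required positive lower bound in the general case.
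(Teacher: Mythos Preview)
Your argument is correct and takes a genuinely different route from the paper's proof of this proposition. The paper proceeds by direct computation with the explicit qubit formula $\|X\|_{\kappa,\rho}^2 = \|X\|_{\kappa_{\max},\rho}^2\bigl(h(|\mathbf{w}|^2)+(1-h(|\mathbf{w}|^2))\cos^2\theta\bigr)$, splitting the analysis into the region away from the pure fixed points $\pm\mathbf{e}_3$ (where a compactness argument gives a uniform lower bound) and a delicate limiting analysis near $\pm\mathbf{e}_3$. Your approach instead reuses the abstract machinery of Lemmas~\ref{lem6} and~\ref{lem7} via the convex decomposition $\Phi_{p_1}=(1-\varepsilon)\Phi_{p_2}+\varepsilon\Phi_1$, which is precisely the strategy the paper employs for the \emph{generalised} dephasing proposition immediately preceding this one. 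Your verification that $\Phi_1\circ\Phi_{p_1}=\Phi_1$, that $\Phi_1(\omega)\le p_1^{-1}\omega$ for $\omega=\Phi_{p_1}(\rho)$, and that $\Phi_{p_1}(\rho)\le (p_1/p_2)\Phi_{p_2}(\rho)$ are all clean and correct, and the chaining through intermediate parameters to bypass the $\varepsilon<\tfrac12$ restriction is sound (the factorisation $\widecheck\eta_\kappa^{\mathrm{Riem}}(\mathcal A,\mathcal C)\ge \widecheck\eta_\kappa^{\mathrm{Riem}}(\mathcal A,\mathcal B)\,\widecheck\eta_\kappa^{\mathrm{Riem}}(\mathcal B,\mathcal C)$ is immediate, and for $q<1$ the channel $\Phi_q$ is injective so the intermediate norms are strictly positive on nonzero $X$). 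What your approach buys is brevity and structural clarity; what the paper's explicit computation buys is a template that is then reused verbatim for the amplitude damping channel in the next proposition, where no analogous convex decomposition $\mathcal A_{\gamma_1}=(1-\varepsilon)\mathcal A_{\gamma_2}+\varepsilon\,\mathcal A_1$ exists (the two conditions on the diagonal of $T_\gamma$ force $\gamma_1=\gamma_2$), so your method would not transfer there.
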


\begin{proof}
		$ $\newline 
        Denote $\mathbf{w}_p = T_p \mathbf{w}$, $\mathbf{y}_p = T_p \mathbf{y}, \theta_p := \cos ^{-1}\frac{|\mathbf{w}_p\cdot \mathbf{y}_p|}{|\mathbf{w}_p\|\mathbf{y}_p|} \text{ and } S_\varepsilon:=B(\mathbf{e}_3,\varepsilon)^c \cap B(-\mathbf{e}_3, \varepsilon)^c$. For any $\varepsilon > 0$, $c(p, \varepsilon) := \inf_{\mathbf{w} \in S_\varepsilon} h(|\mathbf{w}_p|^2) > 0$ since $S_\varepsilon$ is compact (so the infimum is attained). Thus for any $\mathbf{w} \in S_\varepsilon$:
		\[
		\frac{h_{\Phi_{p_1}}(\mathbf{w}, \mathbf{y})}{h_{ \Phi_{p_2}}(\mathbf{w},\mathbf{y})}=\frac{\cos^2 \theta_{p_1} + \sin^2 {\theta_{p_1}} h(|\mathbf{w}_{p_1}|^2)}{{\cos^2 \theta_{p_2} + \sin^2 {\theta_{p_2}} h(|\mathbf{w}_{p_2}|^2}) }\geq c(p_1, \varepsilon) > 0
		\]
		
		\noindent Now consider $\mathbf{w} \mapsto \pm \mathbf{e}_3\cdot \text{WLOG}$ we consider $|\mathbf{y}|=1$, and define (by minimisation of a continuous function over a compact set):
		\[
		\mathbf{y}(\mathbf{w}) := \operatorname*{arg\,min}_{|\mathbf{y}|=1} \frac{h_{\Phi_{p_1}}(\mathbf{w}, \mathbf{y})}{h_{\Phi_{p_2}}(\mathbf{w}, \mathbf{y})}
		\]
		\noindent and $\mathbf{y}_{p_i}(\mathbf{w}) :=T_{p_i}\mathbf{y}(\mathbf{w})$.\\ \\
		To simplify the expression for $\mathbf{y}(\mathbf{w})$:
		\begin{align*}
			|\mathbf{w}_{p_2}\cdot \mathbf{y}_{p_2}|^2 &= |(1-p_2)^2 (y_1 w_1 + y_2 w_2) + y_3 w_3|^2 \\ \\
			&= |\mathbf{w}_{p_1}\cdot \mathbf{y}_{p_1}+((1-p_2)^2-(1-p_1)^2)(y_1w_1+y_2w_2)|^2 \\ \\			
			&\leq  2|\mathbf{w}_{p_1}\cdot \mathbf{y}_{p_1}|^2 + 2((1-p_2)^2 - (1-p_1)^2)^2|y_1 w_1 + y_2 w_2|^2 \\ \\
			&\overset{C-S}{\leq} 2|\mathbf{w}_{p_1}\cdot \mathbf{y}_{p_1}|^2 + 2(2 - p_1 - p_2)^2(p_1 - p_2)^2(y_1^2 + y_2^2)(w_1^2 + w_2^2) \\ \\
			&\leq 2|\mathbf{w}_{p_1}\cdot \mathbf{y}_{p_1}|^2 + \frac{2(2-p_1-p_2)^2(p_1-p_2)^2}{(1-p_2)^2} \cdot |\mathbf{y}_{p_2}|^2 \cdot \frac{|\mathbf{w}|^2 - |\mathbf{w}_{p_2}|^2}{p_2(2-p_2)} \\ \\
			&\leq 2|\mathbf{w}_{p_1}\cdot \mathbf{y}_{p_1}|^2 + \frac{2(2-p_1-p_2)^2(p_1-p_2)^2}{(1-p_2)^2 p_2 (2-p_2)} \cdot |\mathbf{y}_{p_2}|^2 \cdot (1 - |\mathbf{w}_{p_2}|^2) \\ \\
			&\leq 2|\mathbf{w}_{p_1}\cdot \mathbf{y}_{p_1}|^2 + \frac{2(2-p_1-p_2)^2(p_1-p_2)^2}{(1-p_2)^2 p_2 (2-p_2)} \cdot |\mathbf{y}_{p_2}|^2 \cdot h(|\mathbf{w}_{p_2}|^2)
		\end{align*}

		\noindent Where we used the following facts:
        \begin{enumerate}
            \item $|\mathbf{y}_{p_2}|^2 \geq (\mathbf{y}_{p_2})^2_1 + (\mathbf{y}_{p_2})^2_2 = (1 - p_2)^2 (y_1^2 + y_2^2)$
            \item $1 - |\mathbf{w}_{p_2}|^2 = 1 - (1-p_2)^2(w_1^2 + w_2^2) - w_3^2
        =1-|\mathbf{w}|^2+p_2(2-p_2)(w_1^2+w_2^2)$
            \item $h(x) \geq h_1(x) = \frac{4(1-x)}{4(1-x) + 4x}=1-x \quad \forall x \in [0,1]$
        \end{enumerate}
		\noindent Define $c'(p_1,p_2):=\frac{(2-p_1-p_2)^2(p_1-p_2)^2}{2(1-p_2)^2p_2(2-p_2)}$, then:
		\begin{align*}
			&\liminf_{\substack{\mathbf{w}\to \pm \mathbf{e}_3,\\|\mathbf{w}|\leq1}} \inf_{|\mathbf{y}|=1} \frac{h_{\Phi_{p_1}}(\mathbf{w},\mathbf{y})}{h_{\Phi_{p_2}}(\mathbf{w},\mathbf{y})}\\
			&\quad\geq \liminf_{\substack{\mathbf{w}\to \pm \mathbf{e}_3,\\|\mathbf{w}|\leq 1}} \inf_{|\mathbf{y}|=1} 
			\frac{|\mathbf{w}_{p_1} \cdot \mathbf{y}_{p_1}/|\mathbf{y}_{p_1}||^2 (1 - h(|\mathbf{w}_{p_1}|^2)) + |\mathbf{w}_{p_1}|^2 h(|\mathbf{w}_{p_1}|^2)}
			{2|\mathbf{w}_{p_1} \cdot \mathbf{y}_{p_1}/|\mathbf{y}_{p_2}||^2 (1 - h(|\mathbf{w}_{p_2}|^2)) + (|\mathbf{w}_{p_2}|^2 + c'(p_1, p_2)) h(|\mathbf{w}_{p_2}|^2)}
			\\[1.5ex]
			&\geq \min \left\{
			\liminf_{\substack{\mathbf{w}\to \pm \mathbf{e}_3,\\|\mathbf{w}|\leq1}} \inf_{|\mathbf{y}|=1} \frac{|\mathbf{y}_{p_2}|^2}{2|\mathbf{y}_{p_1}|^2} \cdot\frac{1 - h(|\mathbf{w}_{p_1}|^2)}{1-h(|\mathbf{w}_{p_2}|^2)},
			\frac{1}{1 + c'(p_1, p_2)} 
			\liminf_{\substack{\mathbf{w}\to \pm \mathbf{e}_3,\\|\mathbf{w}|\leq1}} \frac{h(|\mathbf{w}_{p_1}|^2)}{h(|\mathbf{w}_{p_2}|^2)}
			\right\}\\
			&= \min \left\{ \frac{1}{2}, \ \frac{1}{1 + c'(p_1, p_2)} \right\} > 0
		\end{align*}

		Therefore, for $\varepsilon > 0 \text{ sufficiently small},\ 
		\inf_{\mathbf{w} \in B(\mathbf{e}_3, \varepsilon) \cup B(-\mathbf{e}_3, \varepsilon)}
		\frac{h_{\Phi_{p_1}}(\mathbf{w}, \mathbf{y})}{h_{\Phi_{p_2}}(\mathbf{w}, \mathbf{y})} > 0.$
		
		\noindent Now, we only have to show $\widecheck\eta_{\kappa_{\max}}^{\mathrm{Riem}} \left( \Phi_{p_1}, \Phi_{p_2} \right) > 0:$
		
		\begin{align*}
			\widecheck\eta_{\kappa_{\max}}^{\mathrm{Riem}} \left( \Phi_{p_1}, \Phi_{p_2} \right)
			&\geq \inf_{\mathbf{y} : |\mathbf{y}_{p_1}| = 1} \frac{|\mathbf{y}_{p_1}|^2}{|\mathbf{y}_{p_2}|^2}
			\cdot \inf_{\mathbf{w}:|\mathbf{w}|\leq 1} \frac{1 - |\mathbf{w}_{p_2}|^2}{1 - |\mathbf{w}_{p_1}|^2} \\
			&= \left( \frac{1-p_1}{1-p_2} \right)^2
			\cdot \inf_{\mathbf{w}:|\mathbf{w}|\leq 1} \frac{1 - |\mathbf{w}|^2 + p_2 (2-p_2)(w_1^2 + w_2^2)}
			{1 - |\mathbf{w}|^2 + p_1 (2-p_1)(w_1^2 + w_2^2)} \\
			&\geq \left( \frac{1-p_1}{1-p_2} \right)^2
			\min \left\{ 1, \frac{p_2 (2-p_2)}{p_1 (2-p_1)} \right\} > 0
		\end{align*}

\end{proof}

\begin{prop} \label{prop6}Let $\mathcal{A}_\gamma(\rho) := \frac{1}{2}(\mathbb{I}_2 + (T_\gamma \mathbf{w} + \mathbf{t}_\gamma)\cdot \sigma)$, $T_\gamma = \operatorname{diag}(\sqrt{1-\gamma}, \sqrt{1-\gamma}, 1-\gamma)$, $t_\gamma = \gamma \mathbf{e}_3$, denote the amplitude damping channel. For $0<\gamma_2 < \gamma_1 <1$, and any $\kappa\in\mathcal{K}$ we have
\[
\widecheck{\eta}_\kappa^{\mathrm{Riem}}(\mathcal{A}_{\gamma_1},\mathcal{A}_{\gamma_2}) > 0.
\]
\end{prop}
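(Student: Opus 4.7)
The approach closely parallels the proof of Proposition~\ref{prop5} for the qubit dephasing channel, with modifications to accommodate the affine shift $\mathbf{t}_\gamma$ of the amplitude damping channel. The plan is to invoke the qubit reduction
\[
\widecheck{\eta}^{\mathrm{Riem}}_\kappa(\mathcal{A}_{\gamma_1}, \mathcal{A}_{\gamma_2}) \ge \widecheck{\eta}^{\mathrm{Riem}}_{\kappa_{\max}}(\mathcal{A}_{\gamma_1}, \mathcal{A}_{\gamma_2}) \cdot \inf_{\mathbf{w},\mathbf{y}} \frac{h_{\mathcal{A}_{\gamma_1}}(\mathbf{w},\mathbf{y})}{h_{\mathcal{A}_{\gamma_2}}(\mathbf{w},\mathbf{y})}
\]
established earlier in this subsection, so that it suffices to verify positivity of both factors separately.

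For the $\kappa_{\max}$ coefficient, I would use the two explicit identities $|T_\gamma \mathbf{y}|^2 = (1-\gamma)(y_1^2+y_2^2) + (1-\gamma)^2 y_3^2$ and $1 - |T_\gamma\mathbf{w} + \mathbf{t}_\gamma|^2 = (1-\gamma)\bigl[(1-|\mathbf{w}|^2) + \gamma(1-w_3)^2\bigr]$ (the latter is an elementary algebraic verification using the particular form of $T_\gamma, \mathbf{t}_\gamma$). Since $\gamma_1 > \gamma_2$, bounding the ratios $|T_{\gamma_1}\mathbf{y}|^2/|T_{\gamma_2}\mathbf{y}|^2$ and $(1-|T_{\gamma_2}\mathbf{w}+\mathbf{t}_{\gamma_2}|^2)/(1-|T_{\gamma_1}\mathbf{w}+\mathbf{t}_{\gamma_1}|^2)$ below separately yields the lower bound $\widecheck{\eta}^{\mathrm{Riem}}_{\kappa_{\max}}(\mathcal{A}_{\gamma_1},\mathcal{A}_{\gamma_2}) \ge \frac{(1-\gamma_1)\gamma_2}{(1-\gamma_2)\gamma_1} > 0$.

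For the angular factor, the key observation is that $|T_\gamma\mathbf{w}+\mathbf{t}_\gamma|=1$ forces $\mathbf{w}=\mathbf{e}_3$ (the unique pure fixed point of $\mathcal{A}_\gamma$), so this is the only locus where $h(|\mathbf{u}_\gamma|^2)$ degenerates to $0$. Following the template of Proposition~\ref{prop5}, I would split the domain: on the compact set $\{|\mathbf{w}|\le 1\}\setminus B(\mathbf{e}_3,\varepsilon)$, the ratio admits a positive lower bound by continuity and compactness exactly as in the dephasing case. The main obstacle is the neighborhood $B(\mathbf{e}_3,\varepsilon)$, where $\cos^2\theta_{\gamma_i}$ and $1-|\mathbf{u}_{\gamma_i}|^2$ may simultaneously vanish, so the denominator cannot be trivially bounded below.

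To handle this neighborhood, I would use the identity
\[(T_\gamma\mathbf{w}+\mathbf{t}_\gamma)\cdot T_\gamma\mathbf{y} = (1-\gamma)\bigl[\mathbf{w}\cdot\mathbf{y} + \gamma y_3(1-w_3)\bigr],\]
which makes the $\gamma$-dependence transparent, and then apply Cauchy–Schwarz to the telescoping difference $\mathbf{u}_{\gamma_2}\cdot T_{\gamma_2}\mathbf{y} - \tfrac{1-\gamma_2}{1-\gamma_1}\,\mathbf{u}_{\gamma_1}\cdot T_{\gamma_1}\mathbf{y} = (1-\gamma_2)(\gamma_2-\gamma_1)\,y_3(1-w_3)$. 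This produces an error term proportional to $y_3^2(1-w_3)^2$. Combining this with the lower bound $1-|T_{\gamma_2}\mathbf{w}+\mathbf{t}_{\gamma_2}|^2 \ge (1-\gamma_2)\gamma_2(1-w_3)^2$ and the elementary inequality $h(x)\ge 1-x$ (from $h\ge h_1$), the factor $(1-w_3)^2$ is dominated by $h(|\mathbf{u}_{\gamma_2}|^2)$ up to a constant depending only on $\gamma_1,\gamma_2$. The infimum of $h_{\mathcal{A}_{\gamma_1}}/h_{\mathcal{A}_{\gamma_2}}$ on the neighborhood then follows by the standard decomposition $(A+B)/(C+D) \ge \min(A/C, B/D)$ applied to the cosine and sine parts, with the hardest remaining bookkeeping being to track the various Bloch-vector normalizations uniformly as $\mathbf{w}\to\mathbf{e}_3$.
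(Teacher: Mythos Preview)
Your proposal is correct and follows essentially the same approach as the paper's own proof: the same qubit reduction to $\widecheck{\eta}^{\mathrm{Riem}}_{\kappa_{\max}}$ times the angular ratio, the same compact/neighbourhood split at the unique pure fixed point $\mathbf{e}_3$, the same telescoping identity $(T_\gamma\mathbf{w}+\mathbf{t}_\gamma)\cdot T_\gamma\mathbf{y}=(1-\gamma)[\mathbf{w}\cdot\mathbf{y}+\gamma y_3(1-w_3)]$ with an $|a+b|^2\le 2(|a|^2+|b|^2)$-type bound, and the same absorption of the error term $y_3^2(1-w_3)^2$ via $|\mathbf{y}_{\gamma_2}|^2\ge(1-\gamma_2)^2 y_3^2$ together with $h(x)\ge 1-x$. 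Your explicit factorisation $1-|T_\gamma\mathbf{w}+\mathbf{t}_\gamma|^2=(1-\gamma)[(1-|\mathbf{w}|^2)+\gamma(1-w_3)^2]$ is in fact cleaner than the expanded form the paper uses in the intermediate step, and your final $\kappa_{\max}$ bound $\frac{(1-\gamma_1)\gamma_2}{(1-\gamma_2)\gamma_1}$ matches the paper's exactly.
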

\begin{proof}
$ $\newline
This time, denote 
\[
\mathbf{w}_\gamma := T_\gamma \mathbf{w} + \mathbf{t}_\gamma = (\sqrt{1-\gamma} w_1, \sqrt{1-\gamma} w_2, (1-\gamma) w_3 + \gamma)
\] 
\[
\mathbf{y}_\gamma := T_\gamma \mathbf{y}_1, (\sqrt{1-\gamma} y_1, \sqrt{1-\gamma} y_2, (1-\gamma) y_3),
\]
\[\theta_\gamma := \cos^{-1}\frac{|\mathbf{w}_\gamma\cdot \mathbf{y}_\gamma|}{|\mathbf{w}_\gamma||\mathbf{y}_\gamma|},
\] and $S_\varepsilon := B(\mathbf{e}_3,\varepsilon)^c$. \\

For any $\varepsilon > 0, c(\gamma,\varepsilon):=\inf_{\mathbf{w}\in S_\varepsilon} h(|\mathbf{w}_\gamma|^2) > 0$ since $S_\varepsilon$ is compact (so the infimum is attained). Thus for any $\mathbf{w} \in S_\varepsilon$:
\[
\frac{h_{\mathcal{A}_{\gamma_1}}(\mathbf{w},\mathbf{y})}{h_{\mathcal{A}_{\gamma_2}}(\mathbf{w},\mathbf{y})} 
= \frac{\cos^2 \theta_{\gamma_1} + \sin^2 \theta_{\gamma_1} h(|\mathbf{w}_{\gamma_1}|^2)}
{\cos^2 \theta_{\gamma_2} + \sin^2 \theta_{\gamma_2} h(|\mathbf{w}_{\gamma_2}|^2)} 
\geq c(\gamma_1,\varepsilon) > 0
\]
Now consider $\mathbf{w} \to \pm \mathbf{e}_3$, WLOG we consider $|\mathbf{y}|=1$, and define (by the minimisation of a continuous function over a compact set)
\[
\mathbf{y}(\mathbf{w}) := \operatorname*{argmin}_{|\mathbf{y}|=1} \frac{h_{\mathcal{A}_{\gamma_1}}(\mathbf{w},\mathbf{y})}{h_{\mathcal{A}_{\gamma_2}}(\mathbf{w},\mathbf{y})}
\]
and $\mathbf{y}_{\gamma_i}(\mathbf{w}) := T_{\gamma_i} \mathbf{y}(\mathbf{w})$\\

\noindent To simplify the expression for $\mathbf{y}(\mathbf{w})$:
\begin{align*}
	|\mathbf{w}_{\gamma_2}\cdot \mathbf{y}_{\gamma_2}|^2 &= |(1-\gamma_2)(\mathbf{w}\cdot \mathbf{y}+\gamma_2 y_3(1-w_3))|^2 \\
	&= \left(\frac{1-\gamma_2}{1-\gamma_1}\right)^2|(1-\gamma_1) (\mathbf{w}\cdot \mathbf{y}) + \gamma_1y_3(1-w_3)+(\gamma_2-\gamma_1)y_3(1-w_3))|^2 \\ \\
	&\overset{|a+b|^2\leq2(|a|^2+|b|^2)}{\leq} \big(\frac{1-\gamma_2}{1-\gamma_1}\big)^2(|\mathbf{w}_{\gamma_1}\cdot \mathbf{y}_{\gamma_1}|^2 +(1-\gamma_2)^2(\gamma_1-\gamma_2)^2y_3^2(1-w_3)^2)\\
	&\leq 2 \left(\frac{1-\gamma_2}{1-\gamma_1}\right)^2 (|\mathbf{w}_{\gamma_1} \cdot \mathbf{y}_{\gamma_1}|^2 
	+ \left(\frac{\gamma_1-\gamma_2}{\gamma_2}\right)^2 |\mathbf{y}_{\gamma_2}|^2  (|\mathbf{w}|^2 - |\mathbf{w}_{\gamma_2}|^2))\\
	&\leq 2 \left(\frac{1-\gamma_2}{1-\gamma_1}\right)^2 (|\mathbf{w}_{\gamma_1} \cdot \mathbf{y}_{\gamma_1}|^2 
	+ \left(\frac{\gamma_1-\gamma_2}{\gamma_2}\right)^2 |\mathbf{y}_{\gamma_2}|^2  (1 - |\mathbf{w}_{\gamma_2}|^2))\\
	&\leq 2 \left(\frac{1-\gamma_2}{1-\gamma_1}\right)^2 (|\mathbf{w}_{\gamma_1} \cdot \mathbf{y}_{\gamma_1}|^2 
	+ \left(\frac{\gamma_1-\gamma_2}{\gamma_2}\right)^2 |\mathbf{y}_{\gamma_2}|^2  h(|\mathbf{w}_{\gamma_2}|^2))
\end{align*}

Where (similar to the dephasing channel), we used the following facts:
\begin{enumerate}
    \item $|\mathbf{y}_{\gamma_2}|^2 \geq (1-\gamma_2)^2 y_3^2$
    \item $1 - |\mathbf{w}_{\gamma_2}|^2 = 1 - |\mathbf{w}|^2 + \gamma_2(w_1^2 + w_2^2) + 2\gamma_2 w_3 (1 - w_3) + \gamma_2^2 (1-w_3)^2
    \geq 1 - |\mathbf{w}|^2 + \gamma_2^2 (1-w_3)^2$
    \item $h(x) \geq h_1(x) = \frac{4(1-x)}{4(1-x) + 4x}=1-x \quad \forall x \in [0,1]$
\end{enumerate}

\noindent $ $\\
\noindent Define $c_1'(\gamma_1, \gamma_2) = 2\left(\frac{1 - \gamma_2}{1-\gamma_1}\right)^2$ and $c_2'(\gamma_1, \gamma_2) = 2\left( \frac{\gamma_1 - \gamma_2}{\gamma_2} \right)^2$, then:\\
\begin{align*}
	&\liminf_{\substack{\mathbf{w} \to \mathbf{e}_3,\\ |\mathbf{w}| \leq 1}} \inf_{|\mathbf{y}| = 1}
	\frac{h_{\mathcal{A}_{\gamma_1}}(\mathbf{w}, \mathbf{y})}{h_{\mathcal{A}_{\gamma_2}}(\mathbf{w}, \mathbf{y})}\\
	&\quad\quad= \liminf_{\substack{\mathbf{w} \to \mathbf{e}_3,\\ |\mathbf{w}| \leq 1}} \inf_{|\mathbf{y}| = 1}
	\frac{ |\mathbf{w}_{\gamma_1}\cdot \mathbf{y}_{\gamma_1}/|\mathbf{y}_{\gamma_1}||^2(1-h(|\mathbf{w}_{\gamma_1}|^2)) + |\mathbf{w}_{\gamma_1}|^2 h(|\mathbf{w}_{\gamma_1}|^2) }
	{ c_1'(\gamma_1, \gamma_2)|\mathbf{w}_{\gamma_1} \cdot \mathbf{y}_{\gamma_1}/|\mathbf{y}_{\gamma_2}||^2(1-h(|\mathbf{w}_{\gamma_2}|^2)) + (|\mathbf{w}_{\gamma_2}|^2 + c_2'(\gamma_1, \gamma_2)) h(|\mathbf{w}_{\gamma_2}|^2) }
	\\[2ex]
	&\geq \min \left\{\liminf_{\substack{\mathbf{w} \to \mathbf{e}_3,\\ |\mathbf{w}| \leq 1}} \inf_{|\mathbf{y}| = 1}\frac{1}{c_1'(\gamma_1, \gamma_2)} \cdot \frac{|\mathbf{y}_{\gamma_2}|^2}{|\mathbf{y}_{\gamma_1}|^2} \cdot
	\frac{1-h(|\mathbf{w}_{\gamma_1}|^2)}{1-h(|\mathbf{w}_{\gamma_2}|^2)}, \frac{1}{1+c_2'(\gamma_1, \gamma_2)}
	\liminf_{\substack{\mathbf{w} \to \mathbf{e}_3,\\ |\mathbf{w}| \leq 1}} \inf_{|\mathbf{y}| = 1}
	\frac{h(|\mathbf{w}_{\gamma_1}|^2)}{h(|\mathbf{w}_{\gamma_2}|^2)}\right\}\\
	&\geq \min \left\{
	\frac{1}{c_1'(\gamma_1, \gamma_2)},\, \frac{1}{1+c_2'(\gamma_1, \gamma_2)}
	\right\} > 0
\end{align*}

\noindent
Therefore, for $\varepsilon > 0$ sufficiently small,
\[
\inf_{\mathbf{w} \in B(\mathbf{e}_3, \varepsilon)}
\frac{h_{\mathcal{A}_{\gamma_1}}(\mathbf{w}, \mathbf{y})}{h_{\mathcal{A}_{\gamma_2}}(\mathbf{w}, \mathbf{y})} > 0
\]

\noindent
Now, we only have to show $\widecheck{\eta}^{\mathrm{Riem}}_{\kappa_{\max}}(\mathcal{A}_{\gamma_1}, \mathcal{A}_{\gamma_2}) > 0$:

\begin{align*}
	\widecheck{\eta}^{\mathrm{Riem}}_{\kappa_{\max}}(\mathcal{A}_{\gamma_1}, \mathcal{A}_{\gamma_2})
	&\ge \inf_{\mathbf{y}:|\mathbf{y}| = 1}
	\frac{|\mathbf{y}_{\gamma_1}|^2}{|\mathbf{y}_{\gamma_2}|^2}
	\inf_{\mathbf{w}:|\mathbf{w}| \leq 1} \frac{1 - |\mathbf{w}_{\gamma_2}|^2}{1 - |\mathbf{w}_{\gamma_1}|^2} \\
	&= \left(\frac{1-\gamma_1}{1-\gamma_2}\right)^2 \cdot \inf_{\mathbf{w}:|\mathbf{w}| \leq 1}
	\frac{1 - |\mathbf{w}_{\gamma_2}|^2}{1 - |\mathbf{w}_{\gamma_1}|^2} \\
	&= \left(\frac{1-\gamma_1}{1-\gamma_2}\right)^2\cdot \inf_{\mathbf{w}:|\mathbf{w}| \leq 1} \frac{(1-\gamma_2)(1 - |\mathbf{w}|^2 + \gamma_2(w_3-1)^2)}{(1-\gamma_1)(1 - |\mathbf{w}|^2 + \gamma_1(w_3-1)^2)} \\
	&\geq \left(\frac{1-\gamma_1}{1-\gamma_2}\right) \min\left\{1, \frac{\gamma_2}{\gamma_1}\right\} > 0
\end{align*}
\end{proof}

With Propositions~\ref{prop5} and \ref{prop6}, we now know that many qubit channels can have a positive expansion coefficient by restricting the domain appropriately; in other words, we can obtain positive relative expansion coefficients when we are comparing two channels that are similar. 

Due to the unitary invariance of the Riemannian semi-norm (or the $\chi^2$-divergence), we can deduce positive expansion coefficients for bit-flip and $Y$-error channels from the result for dephasing. All of the other Pauli channels are in fact strictly positive channels (they have full-rank output). This implies the following result for qubit Pauli channels by Proposition~\ref{prop5} and Lemma~\ref{lem:2norm}:
\begin{corollary}[Qubit Pauli's have Positive Expansion Coefficients]\label{cor:paulicoeffpos}
    Let $\Phi:\mathcal{B}(\mathbb{C}^2)\to\mathcal{B}(\mathbb{C}^2)$ be any qubit Pauli channel, then we have
    \begin{equation*}
        \widecheck\eta_\kappa^\mathrm{Riem}(\Phi;\mathrm{Im}\:\Phi)\equiv\widecheck\eta_\kappa^\mathrm{Riem}(\Phi^2,\Phi)>0
    \end{equation*}
\end{corollary}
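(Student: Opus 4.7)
The plan is to partition qubit Pauli channels into two families by the Bloch contraction matrix and dispatch each family using the machinery already assembled. In the Bloch picture, any qubit Pauli channel $\Phi$ acts as $\tfrac{1}{2}(I+\mathbf{w}\cdot\sigma) \mapsto \tfrac{1}{2}(I+T\mathbf{w}\cdot\sigma)$ with $T = \mathrm{diag}(t_1,t_2,t_3)$, $t_j\in[-1,1]$, so $\Phi^n$ is again Pauli with $T$-matrix $T^n$. The channel is strictly positive precisely when $\max_j |t_j| < 1$; the identity channel trivially gives $\widecheck\eta_\kappa^\mathrm{Riem}(\Phi^2,\Phi)=1$, and the completely depolarizing case ($t_j = 0$ for all $j$) is degenerate (the infimum is taken over an empty set and is vacuously positive). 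Two substantive cases remain.

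\emph{Case 1: strictly positive $\Phi$.} When $\max_j |t_j| < 1$, both $\Phi$ and $\Phi^2$ satisfy $\lambda_{\min}>0$, so Lemma~\ref{lem:2norm} gives the equivalence $\widecheck\eta_\kappa^\mathrm{Riem}(\Phi^2,\Phi) \cong \widecheck\eta_2(\Phi^2,\Phi)$ for every $\kappa \in \mathcal{K}$. A short Bloch calculation with $X = \mathbf{y}\cdot\sigma$ gives $\|\Phi^n(X)\|_2^2 = 2\sum_j t_j^{2n}\,y_j^2$, so the ratio $\|\Phi^2(X)\|_2^2/\|\Phi(X)\|_2^2$ is bounded below by $\min_{j:\, t_j\neq 0} t_j^2 > 0$ once we restrict to those $X$ with $\Phi(X)\neq 0$. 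This closes Case 1 uniformly in $\kappa$.

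\emph{Case 2: $\max_j |t_j| = 1$.} Here $\Phi$ preserves some Bloch axis, and $\sum_\mu p_\mu = 1$ forces either $p_I + p_Q = 1$ for a single non-identity Pauli $Q$ (so $\Phi$ is already a $Q$-dephasing channel) or $p_{Q_1}+p_{Q_2}=1$ for two distinct non-identity Paulis. In the latter sub-case, the Pauli product identity $Q_1 Q_2 \propto Q_3$ lets one factor $\Phi = \mathcal{U}_{Q_3} \circ \Psi_p^{(Q_2)}$, where $\mathcal{U}_{Q_3}(\cdot) = Q_3(\cdot)Q_3$ and $\Psi_p^{(Q_2)}$ is a $Q_2$-dephasing channel; a short calculation using the pairwise anti-commutation $Q_2 Q_3 = -Q_3 Q_2$ shows these commute, and since $\mathcal{U}_{Q_3}^2 = \mathrm{id}$, we obtain $\Phi^2 = (\Psi_p^{(Q_2)})^2$. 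Unitary invariance of the Riemannian semi-norm then gives $\widecheck\eta_\kappa^\mathrm{Riem}(\Phi^2,\Phi) = \widecheck\eta_\kappa^\mathrm{Riem}((\Psi_p^{(Q_2)})^2,\Psi_p^{(Q_2)})$, and a further Clifford conjugation moves $Q_2$ to $Z$. Since $(\Phi_p^{(Z)})^2 = \Phi_{p(2-p)}^{(Z)}$ with $0<p<p(2-p)<1$, Proposition~\ref{prop5} applies and delivers positivity.

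The main obstacle is the algebraic step in the two-Pauli sub-case of Case 2: one must both exhibit the factorization $\Phi = \mathcal{U}_{Q_3}\circ\Psi_p^{(Q_2)}$ and verify the commutation $[\mathcal{U}_{Q_3},\Psi_p^{(Q_2)}]=0$ so that $\Phi^2$ cleanly collapses to a dephasing square; both facts follow from $Q_2 Q_3 \rho Q_3 Q_2 = Q_3 Q_2 \rho Q_2 Q_3$, which itself is a one-line consequence of pairwise Pauli anti-commutation. Once this is checked, Proposition~\ref{prop5} (Case 2) together with Lemma~\ref{lem:2norm} (Case 1) jointly deliver strict positivity of $\widecheck\eta_\kappa^\mathrm{Riem}(\Phi^2,\Phi)$ uniformly in $\kappa\in\mathcal{K}$ for every qubit Pauli channel.
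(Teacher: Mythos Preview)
Your proof follows the same two-pronged strategy as the paper --- Lemma~\ref{lem:2norm} for strictly positive Pauli channels and Proposition~\ref{prop5} plus unitary invariance for the rest --- and is correct in substance. In fact your Case~2 is more careful than the paper's one-line sketch: the paper only names bit-flip, $Y$-error, and dephasing (the families with some $t_j=+1$) and asserts that ``all of the other Pauli channels are strictly positive,'' which overlooks the families with $t_j=-1$, i.e.\ the channels $p_{Q_1}Q_1\rho Q_1+p_{Q_2}Q_2\rho Q_2$ with $p_0=0$. Your unitary-times-dephasing factorisation, commutation, and $\mathcal U^2=\mathrm{id}$ reduction is exactly what is needed to close that gap and bring these cases back to Proposition~\ref{prop5}.

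One labeling slip to fix: with $Q_3\propto Q_1Q_2$, the composite $\mathcal U_{Q_3}\circ\Psi_p^{(Q_2)}$ produces the Kraus pair $\{Q_3,Q_1\}$, not $\{Q_1,Q_2\}$. The factorisation you want is $\Phi=\mathcal U_{Q_1}\circ\Psi_p^{(Q_3)}$ (equivalently $\mathcal U_{Q_2}\circ\Psi_{1-p}^{(Q_3)}$), i.e.\ the dephasing axis is the \emph{absent} Pauli $Q_3$. The commutation $[\mathcal U_{Q_1},\Psi_p^{(Q_3)}]=0$ then follows from $Q_1Q_3=-Q_3Q_1$ exactly as you argue, $\Phi^2=(\Psi_p^{(Q_3)})^2=\Psi_{p(2-p)}^{(Q_3)}$, and a Clifford conjugation sends $Q_3\mapsto Z$ so that Proposition~\ref{prop5} applies with $0<p<p(2-p)<1$. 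With that correction your argument is complete.
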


\section{Conclusion and Open Problems}
In this work, we have investigated the intricate nature of relative expansion coefficients. Our first major result (Section~\ref{sec:nordpi}), showed that there is no reverse data processing inequality (over all states) for a large family of quantum channels, for any monotone quantum $f$-divergence (Theorem~\ref{nordpi}). However, the argument was so tied to the pigeonhole principle, that erasure channels do happen to have this reverse DPI. This property was further contrasted with cases where restricting the domain of the quantum channels can result in a positive expansion coefficient, notably in the case of primitive quantum channels for a reverse quantum Markov convergence theorem (Theorem~\ref{thm:primcoeffpos} and Corollary~\ref{cor:qmkv}) and qubit channels such as the Pauli channels (Corollary~\ref{cor:paulicoeffpos}). All of the demonstrated cases of positive expansion coefficients also allude to bounds (Corollaries~\ref{cor:BKMrecovery} and \ref{cor:alpharecover}) on the accuracy of recovery maps; this signifies a development in the interpretation of expansion coefficients as a measure of preserved information.

Connections between divergence relative expansion coefficients and Riemannian relative expansion coefficients are particularly interesting, because they teach us about the connection between the standard $f$-divergences and their local behaviour. We noted the importance of integral relationships between standard $f$-divergences and their induced Riemannian semi-norms, and constructed new cases of generic equality between divergence and Riemannian coefficients in Theorem~\ref{thm:equalitycases}. Further, we saw that these relations perfectly determine the standard $f$-divergence (Theorem~\ref{thm:classicspecialrel}). 

To study the relationship between different distinguishability measures comprehensively, we introduced an equivalence framework. This allowed us to construct even more cases where the local behaviour of standard $f$-divergences influences the preserved proportion of distinguishability, from the cases of generic equality (Theorem~\ref{thm:inherit}). However, we also saw that the bounded and unbounded cases behave quite differently. This was most strikingly observed in Theorem~\ref{thm:RiemInequiv}, which provided a family of primitive channels where the Riemannian contraction coefficients can be arbitrarily many orders of magnitude apart across the bounded and unbounded cases. This provided the first examples of inequivalence between relative expansion coefficients, and proved that the bounded-case Riemannian relative expansion coefficients form an entire equivalence class. This also has applications to the quantum Markov convergence theorem, since one therefore has to be very careful about the Riemannian contraction coefficient chosen to upper bound the mixing time. 

Another consequence of these new interpretations of relative expansion coefficients is a new series of open problems. First of all, can there be more cases of generic equality between divergence and Riemannian coefficients? In particular, are integral relations between bounded standard $f$-divergences and Riemannian semi-norms possible? Secondly, in the bounded $\kappa_f$ case, are the Riemannian relative expansion coefficients inequivalent (over all pairs of quantum channels $\mathcal{N},\mathcal{M}$) to the corresponding divergence relative expansion coefficients? Thirdly, based on a claim by \cite{Lesniewski_1999}, are there any cases where a divergence relative expansion coefficient is zero, but some Riemannian relative expansion coefficient for the same pair of channels is strictly positive? If the answer to this final question is negative, then the problem of showing that a proportion of information is preserved for a standard $f$-divergence or Riemannian semi-norm will reduce entirely, though we will not be able to deduce exactly the proportion (which Theorem~\ref{thm:RiemInequiv} already tells us is not always possible). Finally, thinking back to the convergence theorems for primitive channels (Theorems~\ref{thm:convergencethm} and \ref{cor:qmkv}), how different are our contraction and expansion coefficients from the coefficients that have the fixed point of the channel as a reference state?

\bibliographystyle{marcotomPB-appearance}
\bibliography{references}
\end{document}